\documentclass[submission,copyright,creativecommons]{eptcs}
\providecommand{\event}{AiML 2026} 

\usepackage{aiml26}

\usepackage{iftex}

\ifpdf
  \usepackage{underscore}         
  \usepackage[T1]{fontenc}        
\else
  \usepackage{breakurl}           
\fi

\usepackage[utf8]{inputenc}
\usepackage{hyperref}
\usepackage{amsthm}
\usepackage{amssymb}
\usepackage{graphicx}
\usepackage{textcomp}
\usepackage{amsmath}
\usepackage{stmaryrd}
\usepackage{url}
\usepackage{color}
\usepackage[all,cmtip]{xy}
\usepackage{caption}
\usepackage{multicol,multienum}

\usepackage{paralist}
\usepackage[shortlabels,inline]{enumitem}
\usepackage{tikz}
\usetikzlibrary{positioning, backgrounds, fit, arrows.meta}
\usepackage{trfsigns}
\usepackage{relsize}
\usepackage{scalerel}
\SelectTips{xy}{12}
\objectmargin={1pt}
\usepackage{proof}

\renewcommand{\iff}{\mathrm{iff}}

\newcommand{\inset}[2]{\left\{\, {#1} \,|\, {#2} \,\right\}}

\newcommand{\setof}[1]{\{\,{#1}\,\}}

\newcommand{\bdia}{\mathop{\blacklozenge}}
\newcommand{\dia}{\mathop{\lozenge}}
\newcommand{\bbox}{\mathop{\blacksquare}}

\newcommand{\Exi}[1]{\exists\, {#1}\,.}
\newcommand{\Any}[1]{\forall\, {#1}\,.}

\newcommand{\bisim}{\leftrightarroweq}
\newcommand{\moeq}{\leftrightsquigarrow}

\renewcommand{\nabla}{\mathlarger{\mathlarger{\triangledown}}}
\newcommand{\bnabla}{\mathlarger{\mathlarger{\blacktriangledown}}}

\newtheorem{claim3}{\textbf{Claim}}
\newenvironment{clma}{\begin{claim3}\rm}{\end{claim3}\rm}
\newenvironment{clma1}{\setcounter{claim3}{0}
              \begin{claim3}\rm}{\end{claim3}\rm}
\newenvironment{pfclma}{\begin{trivlist}\item[]{{\textbf{(Proof of Claim)}}}}{\hfill {\mbox{$\dashv$}}\end{trivlist}}

\title{Uniform Interpolation of Basic Tense Logic}
\author{Katsuhiko Sano
\institute{Faculty of Humanities and Human Sciences,
Hokkaido University
\\ Sapporo, Japan}
\email{v-sano@let.hokudai.ac.jp}
}

\newcommand{\titlerunning}{Uniform Interpolation of Basic Tense Logic}
\newcommand{\authorrunning}{K.~Sano}

\hypersetup{
  bookmarksnumbered,
  pdftitle    = {\titlerunning},
  pdfauthor   = {\authorrunning},
  pdfsubject  = {EPTCS},               
   pdfkeywords = {Uniform Interpolation, Converse Modality, Bisimulation, Basic Tense Logic, Bounded Bisimulation, Craig Interpolation}  
}

\begin{document}
\maketitle

\begin{abstract}
This paper establishes the uniform interpolation theorem for basic tense logic, which is also known as two-way modal logic or modal logic with converse. First introduced by Arthur Prior, basic tense logic is a syntactic expansion of basic modal logic with a converse modality. Its corresponding accessibility relation is defined as the converse of the standard accessibility relation in a given Kripke model. Although basic tense logic has been widely studied since its introduction, its uniform interpolation property has yet to be fully established. For basic modal logic $\mathbf{K}$, Albert Visser (1996) provided a semantic argument formulated in terms of layered (or bounded) bisimulation, explicitly attributing the uniform interpolation property of $\mathbf{K}$ to Silvio Ghilardi. This paper extends Visser's semantic argument to demonstrate that basic tense logic also enjoys the uniform interpolation property.
\end{abstract}

\section{Introduction}

Since its introduction by Arthur Prior in the 1950s~\cite{Prior1957,Prior1967,Prior1968,Prior2003}, basic tense logic has been a subject of intense study in logic, philosophy, formal semantics, and computer science, as evidenced by a vast body of literature including~\cite{Kamp1968,Mcarthur1976,Burgess1984,Blackburn1990,Benthem1991,Goldblatt1992,GHR1994}. 
Tense logic arises from a temporal interpretation of modal logic, where the necessity operator $\Box$ is interpreted as a future necessity operator, read as ``it will always be the case that'' (originally denoted by the symbol ``${G}$''). 
To obtain tense logic syntactically, a converse modality $\bbox$ is added to basic modal logic. 
This modality is interpreted as a past necessity operator read as ``it has always been the case that'' (originally denoted by the symbol ``${H}$''). Key interaction axioms between $\Box$ and $\bbox$ are $p \to \Box \bdia p$ and $p \to \bbox \dia p$, which jointly mean that the accessibility relation of $\bbox$ is the converse relation of the accessibility relation of $\Box$. In this sense, basic tense logic is also sometimes referred to as \emph{modal logic with converse} or \emph{two-way modal logic}. 

The \emph{Craig interpolation property} for a modal logic (possibly a tense logic or an intuitionistic logic) states that a provable implication $\varphi \to \psi$ of the logic has a formula $\theta$, called an interpolant, such that $\varphi \to \theta$ and $\theta \to \psi$ are provable in the logic and the propositional variables in $\theta$ are contained in the common vocabulary of $\varphi$ and $\psi$ (for a survey of interpolation in non-classical logic, the reader is referred to~\cite{DAgostino2008}). This property is strengthened to the \emph{uniform interpolation property} of a modal logic, which states that for a given formula $\varphi$ and a propositional variable $p$, there exists a uniform interpolant $\Exi{p}\varphi$ such that its propositional variables are a subset of the propositional variables of $\varphi$ excluding $p$ and the following equivalence holds for all $p$-free formulas $\psi$: $\varphi \to \psi$ is provable if and only if $\Exi{p}\varphi \to \psi$ is provable. It is well known that the uniform interpolation property implies the Craig interpolation property.

The Craig interpolation property has been studied proof-theoretically and semantically for basic tense logic. For proof-theoretic studies, the reader is referred to Maruyama et al.~\cite{Maruyama2001}, ten Cate et al.~\cite{tenCate2013}, Lyon et al.~\cite{Lyon2020}, and Sano and Yamasaki~\cite{Sano2020a}, all of which employ Maehara's method~\cite{Maehara1960} to compute an interpolant in a tableau or sequent calculus. For semantic studies, Wolter~\cite{Wolter1997a} provided a theorem on the failure of the Craig interpolation theorem for normal extensions of basic tense logic over unbounded linear transitive frames. It is also worth noting that we can apply a semantic sufficient condition given in~\cite[Theorem 2.5.3]{tenCate2005} by ten Cate for the Craig interpolation of normal tense logics of an elementary (or first-order definable) frame class in terms of bisimulation products and generated subframes. This establishes that the Craig interpolation of the minimal basic tense logic $\mathbf{Kt}$ (i.e., the bimodal $\mathbf{K}$ with the two interaction axioms) holds, since the above two characteristic axioms relating the future and past necessity operators define frame properties that are closed under the specified frame constructions. However, as far as the author knows, the uniform interpolation property of the minimal basic tense logic is still unknown. This is rather surprising, given the importance of basic tense logic in the literature of philosophy, computer science, formal semantics, and mathematical logic.

As with the Craig interpolation property, there have been two main approaches to tackling the uniform interpolation property of the minimal modal logic $\mathbf{K}$. In the proof-theoretic approach, B\'{i}lkov\'{a}~\cite{Bilkova2007} adapted Pitts's~\cite{Pitts1992} argument for the uniform interpolation property of intuitionistic propositional logic to the minimal modal logic $\mathbf{K}$ using sequent calculus, where the termination of the calculus is crucial to the argument. This proof-theoretic approach has since been extended to various classical and non-classical (possibly non-normal) modal logics, e.g., in~\cite{Iemhoff2019,AkbarTabatabai2022,AkbarTabatabai2024,Feree2024}.

Taking a semantic approach, Ghilardi and Zawadowski~\cite{Ghilardi1995a} combined the notion of layered bisimulation, also known as bounded bisimulation or $n$-bisimulation (cf.~\cite{BdRV2001}), with algebraic and categorical considerations, to provide a semantic proof of Pitts's~\cite{Pitts1992} result regarding the uniform interpolation property of intuitionistic propositional logic. Their semantic argument is an improvement upon Shavrukov's~\cite{Shavrukov1993} semantic argument for the uniform interpolation property of the G{\"o}del-L{\"o}b logic $\mathbf{GL}$. As explicitly acknowledged by Visser~\cite{Visser1996}, Ghilardi and Zawadowski's~\cite{Ghilardi1995a} argument can be adapted to prove the uniform interpolation property of $\mathbf{K}$. In this sense, the uniform interpolation property of $\mathbf{K}$ is attributed to Ghilardi and Zawadowski~\cite{Ghilardi1995a} (cf.~\cite[footnote 2]{Visser1996}). Visser~\cite{Visser1996} also provided the details of the model-theoretic argument in terms of layered bisimulations.
It is worth noting that Ghilardi and Zawadowski~\cite{Ghilardi1995} also established that the modal logic $\mathbf{S4}$ lacks the uniform interpolation property in terms of layered bisimulation (Visser~\cite{Visser1996preprint} also provided a version of the argument). More recently, Visser's model-theoretic argument given in~\cite{Visser1996} was extended by Kurahashi~\cite{Kurahashi2020} to prove the uniform Lyndon interpolation property, which generalizes both the uniform interpolation property and the Lyndon interpolation property by additionally accounting for the polarity of propositional variables in an interpolant. Kurahashi~\cite{Kurahashi2020} proved that the modal logics $\mathbf{K}$, $\mathbf{B}$, $\mathbf{GL}$, and $\mathbf{Grz}$ possess the uniform Lyndon interpolation property.

While the proof-theoretic approach provides an algorithm to compute a uniform interpolant, one of the merits of the semantic approach is that it reveals that the uniform interpolant in the basic modal logic $\mathbf{K}$ has a semantic clause as a bisimulation quantifier (cf.~\cite{French2006,DAgostino2006}), as previously discussed by Visser~\cite{Visser1996}. This paper proves the uniform interpolation property of the basic tense logic $\mathbf{Kt}$ by employing Visser's~\cite{Visser1996} semantic argument in terms of \emph{temporal} layered bisimulations. Consequently, we also prove that a uniform interpolant has a semantic clause as a temporal version of a bisimulation quantifier. Furthermore, we demonstrate that Ghilardi and Zawadowski's~\cite{Ghilardi1995} semantic argument (and its alternative reformulation given by Visser~\cite{Visser1996preprint}) for the failure of the uniform interpolation property in $\mathbf{S4}$ remains valid for the tense expansion of $\mathbf{S4}$.

We proceed as follows. Section \ref{sec:prelim} reviews the syntax, Kripke semantics, and Hilbert system of basic tense logic, and defines the notion of a normal tense logic. We then formulate the uniform interpolation property for normal tense logics. Section \ref{sec:bisim} introduces bisimulations and layered bisimulations. In Section \ref{sec:bep}, we establish a sufficient condition for the uniform interpolation property of a normal tense logic in terms of the bisimulation expansion property of a frame class. Section \ref{sec:uipkt} shows that the class of all frames possesses the bisimulation expansion property, and that the minimal basic tense logic has the uniform interpolation property. Section \ref{sec:uipextension} discusses the transfer of the uniform interpolation property via the mirror image, constant extension, and boxdot translation. Section \ref{sec:uipfailureS4t} provides an example of a normal tense logic that lacks the uniform interpolation property; specifically, we examine a tense expansion of the modal logic $\mathbf{S4}$. Finally, Section \ref{sec:concl} concludes the paper with directions for future research.

\section{Preliminaries}
\label{sec:prelim}

Given a countably infinite set $\mathsf{Prop}$ of propositional variables, we define the set $\mathsf{Form}$ of all formulas of basic tense logic as follows: 
\[
\mathsf{Form} \ni \varphi ::= p \,|\, \bot \,|\ \varphi \to \varphi \,|\ \Box \varphi \,|\ \bbox \varphi,
\]
where $p \in \mathsf{Prop}$. 
$\Box \varphi$ is read as ``$\varphi$ holds for all future states'' and $\bbox \varphi$ as ``$\varphi$ holds for all past states''. 
In addition to the ordinary Boolean connectives such as $\land$, $\lor$ and $\lnot$, 
we introduce $\dia \varphi$ and $\bdia \varphi$ (they are also denoted by $F \varphi$ and $P \varphi$ in the literature) as abbreviations for $\neg \Box \neg \varphi$ and $\neg \bbox \neg \varphi$ whose readings are ``$\varphi$ holds in some future state'' and ``$\varphi$ holds in some past state'' respectively. 
Given a finite set $\Delta$ of formulas, we use $\bigwedge \Delta$ and $\bigvee \Delta$ to mean the conjunction and disjunction of all formulas in $\Delta$ respectively, where 
$\bigwedge \varnothing$ := $\top$ and $\bigvee \varnothing$ := $\bot$. 

We define $\mathsf{P}(\varphi)$ as the set of all propositional variables in $\varphi$. 
When $\mathsf{P}(\varphi)$ $=$ $\varnothing$, we say that $\varphi$ is a {\em constant formula}.
Given a set $\mathsf{P} \subseteq \mathsf{Prop}$, we say that $\varphi$ is \emph{$\mathsf{P}$-free} if $\mathsf{P}(\varphi) \cap \mathsf{P} = \varnothing$. 
When $\mathsf{P}$ is a singleton $\setof{p}$, we simply say that $\varphi$ is \emph{$p$-free}. We define the {\em modal depth} $\mathtt{d}(\varphi)$ of a formula $\varphi$ inductively as follows: 
\[
\begin{array}{rllrll}
\mathtt{d}(p)& :=& 0, & \mathtt{d}(\bot)& :=& 0,\\
\mathtt{d}(\varphi_{1} \to \varphi_{2})& :=& \max \setof{ \mathtt{d}(\varphi_{1}),\mathtt{d}(\varphi_{2}) },&
\mathtt{d}(\bigcirc \varphi)& := & \mathtt{d}(\varphi)+1,\\
\end{array}
\]
where $\bigcirc \in \setof{\Box, \bbox}$. For example, $\mathtt{d}(p \to \Box \bdia p)$ = $2$. 
Given a finite set $\mathsf{P} \subseteq \mathsf{Prop}$ of propositional variables and $n \in \omega$, we define 
$\mathsf{Form}(\mathsf{P})$ and $\mathsf{Form}(\mathsf{P};n)$ by
\[
\begin{array}{rllrll}
\mathsf{Form}(\mathsf{P}) &:=& \inset{\varphi \in \mathsf{Form}}{\mathsf{P}(\varphi) \subseteq \mathsf{P}},&
\mathsf{Form}(\mathsf{P};n) &:=& \inset{\varphi \in \mathsf{Form}}{\mathsf{P}(\varphi) \subseteq \mathsf{P} \text{ and }\mathtt{d}(\varphi) \leqslant n}.\\
\end{array}
\]

We move to Kripke semantics of basic tense logic. We say that a pair $F$ $=$ $(W,R)$ is a {\em frame} if $W$ is a non-empty set of {\em states} (or \emph{moments}) and $R$ is a binary relation on $W$, i.e., $R \subseteq W \times W$. We use $\mathbb{F}$, $\mathbb{G}$, etc. to denote a class of frames. 
Define $\mathbb{F}_{\mathtt{all}}$ as the class of all frames. A {\em model} is a pair $M$ $=$ $(F,V)$ where $F$ $=$ $(W,R)$ is a frame and $V$ is a valuation function $V:\mathsf{Prop} \to \wp(W)$ where $\wp(W)$ is the powerset of $W$. A {\em pointed model} $(M,w)$ is a pair of a model $M$ $=$ $(W,R,V)$ and a state $w \in W$ of $M$. We say that a pointed model $(M,w)$ is \emph{from} a frame class $\mathbb{F}$ if 
$(W,R) \in \mathbb{F}$ where $M$ = $(W,R,V)$. Given a model $M$ = $(W,R,V)$, a state $w \in W$ and a formula $\varphi$, we define the {\em satisfaction relation} $M,w \models \varphi$ (read: ``$\varphi$ holds at $w$ in $M$'') inductively as follows:
\[
\begin{array}{lll}
M,w \models p&\iff&w \in V(p), \\
M,w \not\models\bot,&& \\
M,w \models \varphi_{1} \to \varphi_{2}&\iff& M,w \not\models \varphi_{1} \text{ or } M,w \models \varphi_{2},\\
M,w \models \Box \varphi &\iff& \text{$wRv$ implies } M,v\models \varphi, \text{ for all $v \in W$},\\
M,w \models \bbox \varphi &\iff& \text{$vRw$ implies } M,v\models \varphi, \text{ for all $v \in W$}.\\
\end{array}
\]
For a set $\Gamma$ of formulas, we write $M,w\models \Gamma$ to mean that $M,w\models \varphi$ for all $\varphi \in \Gamma$. 
Given a frame $F$ $=$ $(W,R)$, we say that $\varphi$ is {\em valid} on $F$ (notation: $F \models \varphi$) if $(F,V),w \models \varphi$ for all $w \in W$ and all valuation functions $V$ on $F$. Given a class $\mathbb{F}$ of frames, we say that $\varphi$ is {\em valid} on $\mathbb{F}$ (notation: $\mathbb{F} \models \varphi$) if $F \models \varphi$ for all frames $F \in \mathbb{F}$. 
Given a frame class $\mathbb{F}$, we define  $\Lambda_{\mathbb{F}}$ := $\inset{\varphi \in \mathbb{F}}{ \mathbb{F} \models \varphi}$, the set of all the valid formulas on the class $\mathbb{F}$. We say that a set $\Gamma$ of formulas is \emph{satisfiable} in a frame class $\mathbb{F}$ if there exist a frame $F \in \mathbb{F}$, a valuation $V$ on $F$, and a state $w \in W$ such that $(F,V), w \models \Gamma$. 
When $\Gamma$ is a singleton $\setof{\varphi}$, we simply say that $\varphi$ is {satisfiable} in $\mathbb{F}$. 
Note that $\varphi$ is {satisfiable} in $\mathbb{F}$ iff $\mathbb{F} \not\models \neg \varphi$. 

\begin{table}[]
    \centering
    \begin{tabular}{|rlrl|}
    \hline
         $(\mathtt{Taut})$& All classical tautologies &
         $(\mathtt{MP})$&  From $\varphi$ and $\varphi \to \psi$, we may derive $\psi$.\\
         \multicolumn{1}{|
         r}{$(\mathtt{UP})$}&\multicolumn{3}{l|}{From $\varphi$, we may derive $\varphi \sigma$, where $\sigma$ is a uniform substitution.}\\
         $(\mathtt{K}_{\Box})$&  $\Box(p \to q) \to (\Box p \to \Box q)$ &
         $(\mathtt{K}_{\bbox})$&  $\bbox(p \to q) \to (\bbox p \to \bbox q)$\\
         $(\Box\bdia)$ & $p \to \Box \bdia p$ &
         $(\bbox\dia)$& $p \to \bbox \dia p$\\
         $(\mathtt{Nec}_{\Box})$&  From $\varphi$, we may derive $\Box \varphi$. &
         $(\mathtt{Nec}_{\bbox})$& From $\varphi$, we may derive $\bbox \varphi$. \\

    \hline
    \end{tabular}
    \caption{Hilbert System $\mathsf{H}(\mathbf{Kt})$}
    \label{table:ax}
\end{table}

Table \ref{table:ax} provides the axiomatization of the minimal basic tense logic. The underlying idea of the axiomatization is to extend the minimal bimodal logic with two axioms $(\Box\bdia)$ and $(\bbox\dia)$, which state that the accessibility relation for $\bbox$ is the converse relation of the accessibility relation for $\Box$. 

\begin{definition}
A set $\Lambda$ of formulas is a {\em normal tense logic} if it contains all the axioms of Table \ref{table:ax} and it is closed under all the inference rules of Table \ref{table:ax}. We denote by $\mathbf{Kt}$ the smallest normal tense logic. 
Given a normal tense logic $\Lambda$ and a set $\Delta$ of formulas, we use $\Lambda \oplus \Delta$ to mean the smallest normal tense logic that contains $\Lambda  \cup \Delta$. 
\end{definition}

Given a frame class $\mathbb{F}$, we note that $\Lambda_{\mathbb{F}}$ := $\inset{\varphi \in \mathbb{F}}{ \mathbb{F} \models \varphi}$ is a normal tense logic. The following holds e.g., by~\cite[Corollary 4.36]{BdRV2001}. 

\begin{fact}
The axiomatization $\mathsf{H}(\mathbf{Kt})$ is sound and complete for the class $\mathbb{F}_{\mathtt{all}}$ of all frames, i.e.,
$\mathbf{Kt}$ $=$ $\Lambda_{\mathbb{F}_{\mathtt{all}}}$ $=$ $\inset{\varphi \in \mathsf{Form}}{\mathbb{F}_{\mathtt{all}} \models \varphi}$. 
\end{fact}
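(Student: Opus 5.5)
The statement has two halves: soundness ($\mathbf{Kt} \subseteq \Lambda_{\mathbb{F}_{\mathtt{all}}}$) and completeness ($\Lambda_{\mathbb{F}_{\mathtt{all}}} \subseteq \mathbf{Kt}$). I would prove them separately.

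\textbf{Soundness.} Since $\Lambda_{\mathbb{F}_{\mathtt{all}}}$ is a normal tense logic, it suffices to check that every axiom of Table \ref{table:ax} is valid on every frame and that every rule preserves validity.
\begin{itemize}
\item $(\mathtt{Taut})$ and the two $\mathtt{K}$ axioms are valid by the usual Boolean and modal arguments.
\item For $(\Box\bdia)$: if $w \in V(p)$ and $wRv$, then $w$ is an $R$-predecessor of $v$, so $v \models \bdia p$.
\item $(\bbox\dia)$ is symmetric.
\item $(\mathtt{MP})$, both necessitation rules and $(\mathtt{UP})$ preserve frame validity. For $(\mathtt{UP})$, note that $(F,V),w \models \varphi\sigma$ iff $(F,V^{\sigma}),w \models \varphi$, where $V^{\sigma}(q) = \{ u \mid (F,V),u \models \sigma(q)\}$.
\end{itemize}

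\textbf{Completeness.} I would use the canonical model. Let $W^{c}$ be the set of maximal $\mathbf{Kt}$-consistent sets. Define $\Gamma R^{c} \Delta$ iff $\{\varphi \mid \Box\varphi \in \Gamma\} \subseteq \Delta$, and set $V^{c}(p) = \{\Gamma \mid p \in \Gamma\}$. Lindenbaum's lemma gives that every consistent formula belongs to some $\Gamma \in W^{c}$.

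The truth lemma $M^{c},\Gamma \models \varphi \iff \varphi \in \Gamma$ is proved by induction on $\varphi$.
\begin{itemize}
\item The Boolean cases and the $\Box$ case are standard. The $\Box$ case uses $\mathtt{K}_{\Box}$ and $\mathtt{Nec}_{\Box}$ to build a successor containing $\{\psi \mid \Box\psi\in\Gamma\}\cup\{\neg\varphi\}$ whenever $\Box\varphi\notin\Gamma$.
\item The $\bbox$ case is the main obstacle, because $\bbox$ must be interpreted by the converse of $R^{c}$, not by its own canonical relation.
\end{itemize}

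To handle $\bbox$, I would first prove the key lemma: $\{\varphi \mid \Box\varphi\in\Gamma\}\subseteq\Delta$ iff $\{\varphi \mid \bbox\varphi\in\Delta\}\subseteq\Gamma$.
\begin{itemize}
\item Left to right: suppose $\bbox\varphi\in\Delta$ but $\varphi\notin\Gamma$, so $\neg\varphi\in\Gamma$. By $(\Box\bdia)$ we get $\Box\bdia\neg\varphi\in\Gamma$, hence $\bdia\neg\varphi = \neg\bbox\varphi \in\Delta$. This contradicts the consistency of $\Delta$.
\item Right to left: the symmetric argument using $(\bbox\dia)$.
\end{itemize}

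With this lemma, the $\bbox$ case of the truth lemma reduces to the existence lemma for the canonical relation of $\bbox$. That existence lemma is proved exactly as in the $\Box$ case, using $\mathtt{K}_{\bbox}$ and $\mathtt{Nec}_{\bbox}$.

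Finally, suppose $\varphi\notin\mathbf{Kt}$. Then $\neg\varphi$ is consistent, so it lies in some $\Gamma \in W^{c}$. By the truth lemma $\varphi$ fails at $\Gamma$ on the frame $(W^{c},R^{c}) \in \mathbb{F}_{\mathtt{all}}$. Hence $\varphi\notin\Lambda_{\mathbb{F}_{\mathtt{all}}}$, which completes the proof.
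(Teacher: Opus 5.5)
Your proposal is correct and is essentially the standard canonical-model argument that the paper relies on by citing \cite[Corollary 4.36]{BdRV2001}: soundness by checking the axioms and rules, and completeness via the canonical model plus the lemma that the canonical relation for $\Box$ is the converse of the canonical relation for $\bbox$. One cosmetic point: $\bdia\neg\varphi$ is $\mathbf{Kt}$-provably equivalent to $\neg\bbox\varphi$ rather than literally identical to it, which does not affect the argument.
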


\begin{definition}
A normal tense logic $\Lambda$ has the {\em Craig interpolation property} if every $\varphi \to \psi \in \Lambda$ has a {\em Craig interpolant} $\theta$ such that $\setof{\varphi \to \theta, \theta \to \psi} \subseteq \Lambda$ and $\mathsf{P}(\theta) \subseteq \mathsf{P}(\varphi) \cap \mathsf{P}(\psi)$. 
\end{definition}

\noindent Both semantic and proof-theoretic proofs for the following are available; see, for example, \cite[Theorem 2.5.3]{tenCate2005} and \cite{Lyon2020, Sano2020a}, respectively.

\begin{fact}
\label{fact:CIPKt}
The smallest normal tense logic $\mathbf{Kt}$ has the Craig interpolation property. 
\end{fact}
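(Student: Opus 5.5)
The plan is a semantic proof by amalgamation of bisimilar models, with completeness of $\mathbf{Kt}$ with respect to $\mathbb{F}_{\mathtt{all}}$ (the Fact above) turning syntax into semantics. Suppose $\varphi \to \psi \in \mathbf{Kt}$ and let $\mathsf{Q} := \mathsf{P}(\varphi) \cap \mathsf{P}(\psi)$. If $\varphi$ is unsatisfiable, $\bot$ is an interpolant, and if $\psi$ is valid, $\top$ is. The idea is to take as candidate interpolant the disjunction of all $\mathsf{Q}$-types of modal depth $\leqslant n$ that are realized together with $\varphi$, where $n := \max\{\mathtt{d}(\varphi),\mathtt{d}(\psi)\}$.

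\textbf{Step 1: finiteness and definability.} Since $\mathsf{Q}$ is finite, $\mathsf{Form}(\mathsf{Q};n)$ is finite up to $\mathbf{Kt}$-equivalence. Hence each pointed model $(M,w)$ has a characteristic formula $\chi^n_{(M,w)} \in \mathsf{Form}(\mathsf{Q};n)$ whose truth at $(N,v)$ is equivalent to $(N,v)$ agreeing with $(M,w)$ on all of $\mathsf{Form}(\mathsf{Q};n)$. Define
\[
\theta := \bigvee\setof{\chi^n_{(M,w)} \mid M,w\models\varphi},
\]
a finite disjunction up to equivalence. Then $\varphi \to \theta$ is valid, and $\mathsf{P}(\theta)\subseteq\mathsf{Q}$.

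\textbf{Step 2: showing $\theta\to\psi$ is valid.} Suppose $N,v\models\theta$. Then there is $(M,w)$ with $M,w\models\varphi$ that agrees with $(N,v)$ on $\mathsf{Form}(\mathsf{Q};n)$. We must show $N,v\models\psi$. The standard route is through $n$-bisimulation, in its temporal version with both forth and back clauses for $R$ and for $R^{-1}$. Agreement on depth-$n$ formulas over finitely many variables yields a temporal $n$-bisimulation between the unravelings of $M$ and $N$ that respects $\mathsf{Q}$. The key claim is that from two $\mathsf{Q}$-$n$-bisimilar pointed models we can build a single model $K$ with a state $u$ such that:
\begin{itemize}
\item $K,u$ is $(\mathsf{P}(\varphi))$-$n$-bisimilar to $(M,w)$;
\item $K,u$ is $(\mathsf{P}(\psi))$-$n$-bisimilar to $(N,v)$.
\end{itemize}
To build $K$, take the state set to be the set of pairs in the $n$-bisimulation along two-way unraveled paths from $(w,v)$, with $R$ and $R^{-1}$ edges inherited componentwise. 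Variables in $\mathsf{P}(\varphi)$ are valued from the first component and variables in $\mathsf{P}(\psi)\setminus\mathsf{Q}$ from the second. Both are consistent on $\mathsf{Q}$ because paired states agree on $\mathsf{Q}$. By invariance under $n$-bisimulation, $K,u\models\varphi$. Since $\varphi\to\psi$ is valid, $K,u\models\psi$, and transferring back gives $N,v\models\psi$.

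\textbf{Main obstacle.} The delicate part is Step 2's construction in the presence of the converse modality. After unraveling, a state may have its past predecessor along the path as well as other predecessors, so the product must be organized as a tree of pairs in which each node records its depth budget. The two projections must then be genuine bounded temporal bisimulations, with the back clause for $\bbox$ checked at the parent edge. I would handle this by unraveling both models two-way, as trees over $R\cup R^{-1}$ steps, before pairing, so that every pair has a unique parent. The remaining checks are then routine inductions on the remaining depth.
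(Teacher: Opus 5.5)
Your argument is correct, but it is not what the paper does for this statement. The paper gives no proof of the Fact itself. It cites ten Cate's semantic criterion (Craig interpolation for the logic of an elementary frame class closed under bisimulation products and generated subframes) and Maehara-style proofs in sequent or tableau calculi. The Fact is also recovered afterwards from Theorem~\ref{thm:uipkt} via Proposition~\ref{prop:uip2cip}.

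Your proof is instead a self-contained, non-uniform specialization of the paper's own method in Lemmas~\ref{lem:bep2uip} and~\ref{lem:bep}:
\begin{itemize}
\item Your $\theta$ is equivalent to the conjunction of all $\mathsf{Form}(\mathsf{Q};n)$-consequences of $\varphi$, which is the analogue of $\bigwedge\mathsf{Cn}(\varphi)$.
\item In place of the bisimulation expansion property, you only need an amalgam that is $n$-bisimilar to both sides. The paper's property asks for a model $n$-bisimilar to one side but \emph{fully} bisimilar to the other.
\item Your weaker amalgam suffices precisely because you chose $n \geqslant \mathtt{d}(\psi)$. It lets you drop the extra state $\mathtt{T}$ and the level $-1$, whose only job in the paper is to make $\pi_{3}$ a genuine bounded morphism.
\item Your two-way unraveling into a tree of pairs with depth budgets replaces the paper's requirement, via $\mathtt{pd}$, that $S$-related triples have levels differing by one.
\item Compared with ten Cate's route, working with finitely many depth-$n$ types also avoids the saturation or compactness needed to turn $\mathsf{Q}$-equivalence into a full $\mathsf{Q}$-bisimulation.
\end{itemize}

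What the paper's stronger construction buys is uniformity. Its interpolant depends only on $(\varphi,p)$ and has the bisimulation-quantifier reading of Corollary~\ref{cor:bisimquan}. Yours depends on $\psi$ through $\mathsf{Q}$ and $n$, which is all Craig interpolation needs.

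One point to get right in your ``routine'' checks: the level-$k$ relation between $K$ and $M$ (or $N$) must contain every node whose remaining budget is \emph{at least} $k$, not exactly $k$. A node's parent has a larger budget and may be either an $R$-successor or an $R$-predecessor of it. With an ``exactly $k$'' relation, the clauses that start from an $R$-edge of $K$ fail along parent edges.
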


\begin{definition}
A normal tense logic $\Lambda$ has the {\em uniform interpolation property} if every pair $(\varphi,p) \in \mathsf{Form} \times \mathsf{Prop}$ has a {\em $($post-$)$uniform interpolant} $\Exi{p}\varphi$ in $\Lambda$ such that the following three conditions are satisfied:
\begin{description}[itemsep=0pt, parsep=0pt, topsep=5pt]
    \item[(\texttt{variable})] $\mathsf{P}(\Exi{p}\varphi) \subseteq \mathsf{P}(\varphi) \setminus \setof{p}$. 
    \item[(\texttt{derivability})] $\varphi \to \Exi{p}\varphi \in \Lambda$. 
    \item[(\texttt{uniformity})] if $\varphi \to \psi \in \Lambda$ then $\Exi{p}\varphi \to \psi \in \Lambda$, for all $p$-free formulas $\psi$.
\end{description}
\end{definition}

It is easy to see that the converse implication of (\texttt{uniformity}) follows from (\texttt{derivability}).
We note that when a normal tense logic $\Lambda$ is inconsistent, i.e., $\bot \in \Lambda$, then it trivially has the uniform interpolation property, since $\Lambda$ $=$ $\mathsf{Form}$. 

\begin{proposition}
\label{prop:uip2cip}
If a normal tense logic $\Lambda$ has the uniform interpolation property, then it also has the Craig interpolation property. 
\end{proposition}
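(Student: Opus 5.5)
Given $\varphi \to \psi \in \Lambda$, the plan is to eliminate, one at a time, the propositional variables of $\varphi$ that do not occur in $\psi$, applying the uniform interpolant operation repeatedly. Let $\mathsf{P}(\varphi) \setminus \mathsf{P}(\psi) = \setof{q_1,\dots,q_k}$; this set is finite. Define $\theta_0 := \varphi$ and $\theta_{i+1} := \Exi{q_{i+1}}\theta_i$ for $i < k$, where each uniform interpolant is supplied by the hypothesis on $\Lambda$. The candidate Craig interpolant is $\theta := \theta_k$.

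First we check the variable condition. By (\texttt{variable}), $\mathsf{P}(\theta_{i+1}) \subseteq \mathsf{P}(\theta_i) \setminus \setof{q_{i+1}}$. By induction, $\mathsf{P}(\theta_i) \subseteq \mathsf{P}(\varphi) \setminus \setof{q_1,\dots,q_i}$. Hence $\mathsf{P}(\theta) \subseteq \mathsf{P}(\varphi) \setminus \setof{q_1,\dots,q_k} = \mathsf{P}(\varphi) \cap \mathsf{P}(\psi)$.

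Next, $\varphi \to \theta \in \Lambda$. By (\texttt{derivability}), $\theta_i \to \theta_{i+1} \in \Lambda$ for each $i$. Chaining these implications with $(\mathtt{Taut})$ and $(\mathtt{MP})$, under which $\Lambda$ is closed, yields $\varphi \to \theta_k \in \Lambda$.

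Finally, $\theta \to \psi \in \Lambda$, shown by induction on $i$ that $\theta_i \to \psi \in \Lambda$. The base case is the hypothesis. For the step, $\psi$ is $q_{i+1}$-free, since $q_{i+1} \notin \mathsf{P}(\psi)$. So (\texttt{uniformity}) applied to $\theta_i \to \psi \in \Lambda$ gives $\Exi{q_{i+1}}\theta_i \to \psi \in \Lambda$.

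The only delicate point is the bookkeeping in this induction: $\psi$ must avoid every eliminated variable, which holds by the choice of the $q_j$. The degenerate case $k=0$ is immediate with $\theta := \varphi$. No appeal to semantics or to Fact~\ref{fact:CIPKt} is needed.
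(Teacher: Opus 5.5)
Your proof is correct and follows the paper's argument. The paper likewise takes $\Exi{p_{1}}\cdots\Exi{p_{n}}\varphi$ for $\setof{p_{1},\ldots,p_{n}} = \mathsf{P}(\varphi)\setminus\mathsf{P}(\psi)$, uses $\varphi$ itself when this set is empty, and checks the three conditions by iterating (\texttt{variable}), (\texttt{derivability}) and (\texttt{uniformity}). Your version just spells out the inductions along $\theta_{0},\ldots,\theta_{k}$ that the paper leaves implicit.
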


\begin{proof}
Suppose that $\Lambda$ has the uniform interpolation property. 
Fix any $\varphi \to \psi \in \Lambda$. 
If $\mathsf{P}(\varphi) \setminus \mathsf{P}(\psi)$ = $\varnothing$, we obtain $\mathsf{P}(\varphi) \subseteq \mathsf{P}(\psi)$ and so $\varphi$ is a Craig interpolant (both $\setof{\varphi \to \varphi, \varphi \to \psi} \subseteq \Lambda$ and $\mathsf{P}(\varphi) \subseteq \mathsf{P}(\varphi) \cap \mathsf{P}(\psi)$ hold).
Suppose that $\mathsf{P}(\varphi) \setminus \mathsf{P}(\psi)$ is non-empty. Put $\mathsf{P}(\varphi) \setminus \mathsf{P}(\psi)$ := $\setof{p_{1},\ldots,p_{n}}$. 
Our desired Craig interpolant is $\Exi{p_{1}}\cdots \Exi{p_{n}}\varphi$.  
It is easy to see that  
$\mathsf{P}(\Exi{p_{1}}\cdots \Exi{p_{n}}\varphi) \subseteq \mathsf{P}(\varphi) \cap \mathsf{P}(\psi)$ by the condition of  (\texttt{variable}).
By (\texttt{derivability}), we obtain $\varphi \to \Exi{p_{1}}\cdots \Exi{p_{n}}\varphi \in \Lambda$. 
Finally, the condition of (\texttt{uniformity}) and $\varphi \to \psi \in \Lambda$ jointly imply that $\Exi{p_{1}}\cdots \Exi{p_{n}}\varphi \to \psi \in \Lambda$, since $\psi$ is $\setof{p_{1},\ldots,p_{n}}$-free. 
\end{proof}

\section{Bisimulation and Layered Bisimulation}
\label{sec:bisim}
This section reviews the notions of (temporal) bisimulation and layered bisimulation, along with their basic properties. 
First, we introduce the following notion of bisimulation (or {\em temporal} bisimulation, see, e.g.,~\cite[Remark 2.25]{BdRV2001}). Hereafter, for basic tense logic, we shall refer to this simply as bisimulation throughout this paper.

\begin{definition}[(Temporal) Bisimulation]
Let $M$ = $(W,R, V)$ and $M'$ = $(W',R', V')$ be models and $\mathsf{P} \subseteq \mathsf{Prop}$. 
We say that a non-empty relation $Z \subseteq W \times W'$ is a {\em $\mathsf{P}$-bisimulation} between $M$ and $M'$ 
if $Z$ satisfies the following five conditions:
\begin{description}[itemsep=0pt, parsep=0pt, topsep=5pt]
    \item[(\texttt{Atom})] if $wZw'$ then ($w \in V(p)$ iff $w' \in V'(p)$) for all $p \in \mathsf{P}$;  
    \item[(\texttt{Forth})] 
    if $wZw'$ and $wRv$ then there exists $v' \in W'$ such that $w'R'v'$ and $vZv'$.  
    \item[($\mathtt{Forth}^{-1}$)] 
    if $wZw'$ and $vRw$ then there exists $v' \in W'$ such that $v'R'w'$ and $vZv'$. 
    \item[(\texttt{Back})] if $wZw'$ and $w'R'v'$ then there exists $v \in W$ such that $wRv$ and $vZv'$.  
    \item[($\mathtt{Back}^{-1}$)] if $wZw'$ and $v'R'w'$ then there exists $v \in W$ such that $vRw$ and $vZv'$. 
\end{description}
When the underlying $\mathsf{P}$ is clear from the context, we simply say that $Z$ is a {\em bisimulation}. If there exists a $\mathsf{P}$-bisimulation between $M$ and $M'$ such that $wZw'$, we write $(M,w)\bisim^{\mathsf{P}} (M',w')$ and say that $(M,w)$ and $(M',w')$ are {\em bisimilar}. 
\end{definition}

\begin{definition}
We write $(M_{1},w_{1}) \moeq^{\mathsf{P}} (M_{2},w_{2})$ to mean that 
$M_{1},w_{1} \models \varphi$ iff $M_{2},w_{2} \models \varphi$, for every formula $\varphi \in \mathsf{Form}(\mathsf{P})$, i.e., the set of all formulas $\varphi$ such that $\mathsf{P}(\varphi) \subseteq \mathsf{P}$.  
\end{definition}

It is well-known that bisimilarity implies formula equivalence (for basic modal logic, the reader is referred to~\cite[Theorem 2.20]{BdRV2001}), while the converse does not hold in general (cf.~\cite[Example 2.23]{BdRV2001}). 

\begin{proposition}[Formula Equivalence]
\label{prop:bisim2moeq}
Let $M$ = $(W,R,V)$ and $M'$ = $(W',R', V')$ be models and 
$\mathsf{P} \subseteq \mathsf{Prop}$. 
If $(M,w) \bisim^{\mathsf{P}} (M',w')$ then $(M,w) \moeq^{\mathsf{P}} (M',w')$. 
\end{proposition}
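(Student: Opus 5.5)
Fix a $\mathsf{P}$-bisimulation $Z$ between $M$ and $M'$ with $wZw'$. It suffices to prove the following stronger claim by induction on the construction of $\varphi \in \mathsf{Form}(\mathsf{P})$: for all $v \in W$ and $v' \in W'$ with $vZv'$, we have $M,v \models \varphi$ iff $M',v' \models \varphi$. Taking $(v,v') := (w,w')$ then yields $(M,w) \moeq^{\mathsf{P}} (M',w')$. The universal quantification over all $Z$-related pairs is essential, since the modal steps move along $R$ and $R'$ to other related pairs.

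The base cases are immediate. For $\varphi = p$, we have $p \in \mathsf{P}$ because $\mathsf{P}(p) \subseteq \mathsf{P}$, so (\texttt{Atom}) applies. For $\varphi = \bot$, both sides are false. The implication case $\varphi_{1} \to \varphi_{2}$ follows directly from the induction hypotheses for $\varphi_{1}$ and $\varphi_{2}$, which are again in $\mathsf{Form}(\mathsf{P})$.

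For $\varphi = \Box \psi$, the argument is the standard one for basic modal logic (cf.~\cite[Theorem 2.20]{BdRV2001}). Suppose $M,v \models \Box\psi$ and $v'R'u'$. By (\texttt{Back}) there is $u$ with $vRu$ and $uZu'$. Then $M,u \models \psi$, and the induction hypothesis at the pair $(u,u')$ gives $M',u' \models \psi$. The converse direction is symmetric, using (\texttt{Forth}).

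The only genuinely new case is $\varphi = \bbox\psi$, and it is handled by the converse clauses. Suppose $M,v \models \bbox\psi$ and $u'R'v'$. By ($\mathtt{Back}^{-1}$) there is $u$ with $uRv$ and $uZu'$. Hence $M,u \models \psi$, and the induction hypothesis gives $M',u' \models \psi$, so $M',v' \models \bbox\psi$. The converse direction uses ($\mathtt{Forth}^{-1}$) in the same way.

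No real obstacle arises. The one point to watch is to state the induction hypothesis uniformly over all $Z$-related pairs rather than only for the fixed pair $(w,w')$.
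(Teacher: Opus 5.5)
Your proof is correct and is essentially the argument the paper intends. The paper gives no explicit proof and instead defers to the standard induction on formulas (as in \cite[Theorem 2.20]{BdRV2001}); your handling of the $\bbox$ case via $(\mathtt{Back}^{-1})$ and $(\mathtt{Forth}^{-1})$, with the induction hypothesis quantified over all $Z$-related pairs, is exactly the temporal extension that reference leaves implicit.
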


\begin{definition}
\label{dfn:bm}
Let $M$ = $(W,R, V)$ and $M'$ = $(W',R', V')$ be models. 
We say that a function $f: W \to W'$ is a {\em $\mathsf{P}$-bounded morphism} from $M$ to $M'$ if it satisfies the following four conditions:
\begin{description}[itemsep=0pt, parsep=0pt, topsep=5pt]
    \item[(\texttt{Atom})] $w \in V(p)$ iff $f(w) \in V'(p)$ for all $p \in \mathsf{P}$.
    \item[(\texttt{Forth})] $wRv$ implies $f(w)R'f(v)$. 
    \item[(\texttt{Back})] if $f(w)R'v'$ then there exists $v \in W$ such that $f(v)$ = $v'$ and $wRv$. 
    \item[($\mathtt{Back}^{-1}$)] if $v'R'f(w)$ then there exists $v \in W$ such that $f(v)$ = $v'$ and $vRw$. 
\end{description}
A mapping $f: W \to W'$ is called a \emph{bounded morphism} from $(W,R)$ to $(W',R')$ if it satisfies the two conditions other than (\texttt{Atom}).
\end{definition}

\noindent We note that $f: W \to W'$ is a $\mathsf{P}$-bounded morphism from $M$ to $M'$ iff 
$Gr(f)$ := $\inset{(w,f(w))}{w \in W}$ is a $\mathsf{P}$-bisimulation. Therefore, for a $\mathsf{P}$-bounded morphism $f$ from $M$ to $M'$, Proposition \ref{prop:bisim2moeq} enables us to obtain the equivalence: $M,w \models \varphi$ iff $M',f(w) \models \varphi$ for all formulas $\varphi \in \mathsf{Form}(\mathsf{P})$.

The following definition is a temporal generalization of the notion of \emph{layered bisimulation} (also known as \emph{bounded bisimulation}) introduced in~\cite[Section 1]{Ghilardi1995}, \cite[Section 3]{Visser1996}, and~\cite{Shavrukov1993}. 

\begin{definition}[Layered (Temporal) Bisimulation]
Let $M$ = $(W,R, V)$ and $M'$ = $(W',R', V')$ be models and $\mathsf{P}\subseteq \mathsf{Prop}$.
We say that a non-empty relation $Z \subseteq W \times \omega \times W$ is a {\em layered $\mathsf{P}$-bisimulation} between $M$ and $M'$ 
if $Z$ satisfies the following five conditions, where we write $wZ_{n}w'$ to mean $(w,n,w') \in Z$ below: 
\begin{description}[itemsep=0pt, parsep=0pt, topsep=5pt]
    \item[(\texttt{Atom})] if $wZ_{n}w'$ then ($w \in V(p)$ iff $w' \in V'(p)$) for all $p \in \mathsf{P}$ ($n \in \omega$).
    \item[(\texttt{Forth})] 
    if $wZ_{n+1}w'$ and $wRv$ then there exists $v' \in W'$ such that $w'R'v'$ and $vZ_{n}v'$ ($n \in \omega$).  
    \item[($\mathtt{Forth}^{-1}$)] 
    if $wZ_{n+1}w'$ and $vRw$ then there exists $v' \in W'$ such that $v'R'w'$ and $vZ_{n}v'$ ($n \in \omega$).  
    \item[(\texttt{Back})] if $wZ_{n+1}w'$ and $w'R'v'$ then there exists $v \in W$ such that $wRv$ and $vZ_{n}v'$ ($n \in \omega$).  
    \item[($\mathtt{Back}^{-1}$)] 
    if $wZ_{n+1}w'$ and $v'R'w'$ then there exists $v \in W$ such that $vRw$ and $vZ_{n}v'$ ($n \in \omega$).  
\end{description}
Let $n \in \omega$. If there exists a layered $\mathsf{P}$-bisimulation between $M$ and $M'$ such that $wZ_{n}w'$, we write $(M,w) \bisim_{n}^\mathsf{P} (M',w')$. 
We say that a layered bisimulation $Z$ between $M$ and $M'$ is {\em downward closed} if, for all $n \in \omega$, 
$wZ_{n}w'$ implies that $wZ_{m}w'$ for all $m \leqslant n$. 
\end{definition}

We can establish the following, 
similarly to~\cite[Proposition 2.29]{BdRV2001} for basic modal logic.  
The following can be also established in terms of \emph{normal forms}~\cite{Fine1975,Moss2007} with the \emph{cover modality} or \emph{nabla modality}. The reader can find an outline of the argument in Appendix \ref{sec:canfor}.

\begin{proposition}
\label{prop:foreqfindep}
The quotient set 
$\mathsf{Form}(\mathsf{P};n) / \equiv$ is finite, where $\equiv$ is logical equivalence. 
\end{proposition}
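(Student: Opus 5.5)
The plan is to prove, by induction on $n$, the stronger statement that there is a finite set $\mathsf{N}(\mathsf{P};n) \subseteq \mathsf{Form}(\mathsf{P};n)$ such that every formula in $\mathsf{Form}(\mathsf{P};n)$ is logically equivalent (i.e., equivalent in every pointed model, hence modulo $\mathbf{Kt}$ by the completeness fact) to some member of $\mathsf{N}(\mathsf{P};n)$. Throughout, $\mathsf{P}$ is finite, as in the definition of $\mathsf{Form}(\mathsf{P};n)$. Finiteness of $\mathsf{Form}(\mathsf{P};n)/\equiv$ then follows at once, since the number of classes is at most $|\mathsf{N}(\mathsf{P};n)|$.

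\textbf{Base case $n=0$.} A formula of depth $0$ is a Boolean combination of $\bot$ and the variables in $\mathsf{P}$. By the disjunctive normal form theorem of classical propositional logic, it is equivalent to a disjunction of conjunctions of literals over $\mathsf{P}$. There are at most $2^{2^{|\mathsf{P}|}}$ such disjunctions up to equivalence, so we may take these as $\mathsf{N}(\mathsf{P};0)$.

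\textbf{Inductive step.} Assume $\mathsf{N}(\mathsf{P};n)$ is finite and represents every class of $\mathsf{Form}(\mathsf{P};n)$. Consider the finite set of ``atoms''
\[
A_{n+1} := \mathsf{P} \cup \inset{\Box \chi}{\chi \in \mathsf{N}(\mathsf{P};n)} \cup \inset{\bbox \chi}{\chi \in \mathsf{N}(\mathsf{P};n)}.
\]
Every $\varphi \in \mathsf{Form}(\mathsf{P};n+1)$ is a Boolean combination of variables in $\mathsf{P}$, $\bot$, and formulas $\Box\psi$ or $\bbox\psi$ with $\psi \in \mathsf{Form}(\mathsf{P};n)$. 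This is checked by a straightforward induction on the construction of $\varphi$, since $\mathtt{d}(\bigcirc\psi)\leqslant n+1$ forces $\mathtt{d}(\psi)\leqslant n$.

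For each such $\psi$ pick $\chi \in \mathsf{N}(\mathsf{P};n)$ with $\psi \equiv \chi$. Then $\Box\psi \equiv \Box\chi$ and $\bbox\psi \equiv \bbox\chi$, directly from the satisfaction clauses. This replacement step is the only point requiring care: we need that logical equivalence is a congruence for $\Box$, $\bbox$ and the Boolean connectives. Semantically this is immediate, because equivalence is truth at all pointed models and the clauses depend only on the truth sets of the subformulas.

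After the replacement, $\varphi$ is equivalent to a Boolean combination of the finite set $A_{n+1}$. Such combinations fall into at most $2^{2^{|A_{n+1}|}}$ equivalence classes, by the same DNF argument as in the base case, treating the members of $A_{n+1}$ as propositional letters. Note that propositional equivalence implies logical equivalence, since classical tautologies are valid. Choosing one representative per class of depth $\leqslant n+1$ gives the finite set $\mathsf{N}(\mathsf{P};n+1)$.

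The main obstacle is minor bookkeeping rather than anything conceptual. The representatives must themselves lie in $\mathsf{Form}(\mathsf{P};n+1)$, and this holds because the members of $A_{n+1}$ have depth $\leqslant n+1$ and variables in $\mathsf{P}$. We must also handle the case $\mathsf{P}=\varnothing$, where $\bot$ supplies the Boolean base. Alternatively, one could follow the nabla normal-form route mentioned before the statement, but the counting argument above suffices.
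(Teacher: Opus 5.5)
Your proof is correct. It is the induction of Blackburn--de Rijke--Venema's Proposition~2.29, which the paper names as the template for this result, extended by putting both $\Box\chi$ and $\bbox\chi$ into the finite stock of atoms.

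The argument the paper actually writes out (Appendix~\ref{sec:canfor}) takes a different route:
\begin{itemize}
\item it defines explicit finite sets $\mathsf{C}(\mathsf{P};n)$ of canonical formulas, where level $n+1$ consists of the formulas $\pi \land \nabla\Psi \land \bnabla\Psi'$ with $\Psi,\Psi' \subseteq \mathsf{C}(\mathsf{P};n)$;
\item it shows that every pointed model satisfies exactly one of them, and that each one decides every formula of $\mathsf{Form}(\mathsf{P};n)$;
\item it concludes that every $\varphi \in \mathsf{Form}(\mathsf{P};n)$ is equivalent to $\bigvee\Phi$ for some $\Phi \subseteq \mathsf{C}(\mathsf{P};n)$.
\end{itemize}

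That route gives more than finiteness. It yields an explicit normal form and a bound of $2^{|\mathsf{C}(\mathsf{P};n)|}$ classes. It also provides an explicit equivalent of the characteristic formula $\chi^{\mathsf{P};n}_{(M,w)}$ used later, namely the unique canonical formula true at $(M,w)$. However, it depends on the two lemmas about canonical formulas, which the paper only sketches by pointing to Moss.

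Your route is fully self-contained. It needs only that $\equiv$ is a congruence for $\Box$, $\bbox$ and the Boolean connectives, together with a count of truth functions. The price is a coarser bound (two exponentials per modal layer rather than one) and no normal form.
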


\begin{definition}
Let $\mathsf{P} \subseteq \mathsf{Prop}$. 
We define $\Phi(\mathsf{P};n)$ as a finite set of all fixed representatives of the equivalence classes in $\mathsf{Form}(\mathsf{P};n) / \equiv$ by Proposition \ref{prop:foreqfindep}. 
Given a pointed model $(M,w)$, we define the \emph{characteristic formula} $\chi^{\mathsf{P};n}_{(M,w)}$ with respect to $(\mathsf{P},n)$ by $\bigwedge \inset{ \varphi\in \Phi(\mathsf{P};n)}{ M, w \models \varphi}$. 
Let $(M_{i},w_{i})$ be a pointed model $(i \in \setof{1,2})$. 
We write $(M_{1},w_{1}) \moeq^{\mathsf{P}}_{n} (M_{2},w_{2})$ to mean that 
$M_{1},w_{1} \models \varphi$ iff $M_{2},w_{2} \models \varphi$, for every formula $\varphi \in \mathsf{Form}(\mathsf{P}; n)$. 
\end{definition}

\begin{proposition}
\label{prop:nbismnmoeq}
Let $\mathsf{P} \subseteq \mathsf{Prop}$ be finite and $n \in \omega$. For every two pointed models $(M_{1},w_{1})$ and $(M_{2},w_{2})$, the following are all equivalent: 
\begin{enumerate*}[label=\arabic*., ref=\arabic*]
    \item \label{item:nmoeq} $(M_{1},w_{1}) \moeq_{n}^{\mathsf{P}} (M_{2},w_{2})$; 
    \item \label{item:nmoeqchar}$M_{1},w_{1} \models \chi_{(M_{2},w_{2})}^{\mathsf{P};n}$;  
    \item \label{item:nlaybisim} There exists a downward layered $\mathsf{P}$-bisimulation $Z$ between $M_{1}$ and $M_{2}$ such that $(w_{1},n, w_{2}) \in Z$;  
    \item \label{item:nbisim} $(M_{1},w_{1}) \bisim_{n}^{\mathsf{P}} (M_{2},w_{2})$.
\end{enumerate*}
\end{proposition}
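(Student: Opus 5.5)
I will prove the cycle $1 \Rightarrow 2 \Rightarrow 1$, $1 \Rightarrow 3 \Rightarrow 4 \Rightarrow 1$. The equivalence of items 1 and 2 is essentially bookkeeping with Proposition \ref{prop:foreqfindep}. For $1 \Rightarrow 2$: the formula $\chi^{\mathsf{P};n}_{(M_{2},w_{2})}$ is a finite conjunction of formulas in $\Phi(\mathsf{P};n) \subseteq \mathsf{Form}(\mathsf{P};n)$, each true at $(M_{2},w_{2})$, hence each true at $(M_{1},w_{1})$. For $2 \Rightarrow 1$: take any $\varphi \in \mathsf{Form}(\mathsf{P};n)$ and let $\varphi'$ be its representative in $\Phi(\mathsf{P};n)$. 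If $M_{2},w_{2} \models \varphi$, then $\varphi'$ is a conjunct of the characteristic formula, so $M_{1},w_{1} \models \varphi$. If $M_{2},w_{2} \not\models \varphi$, then the representative of $\neg\varphi$ (which also lies in $\mathsf{Form}(\mathsf{P};n)$) is a conjunct, so $M_{1},w_{1}\not\models\varphi$.

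For $1 \Rightarrow 3$ I will define $Z := \inset{(v_{1},m,v_{2})}{m \leqslant n \text{ and } (M_{1},v_{1}) \moeq^{\mathsf{P}}_{m} (M_{2},v_{2})}$. This relation is non-empty because it contains $(w_{1},n,w_{2})$. It is downward closed because $\mathsf{Form}(\mathsf{P};m) \subseteq \mathsf{Form}(\mathsf{P};m+1)$. (\texttt{Atom}) holds because atoms in $\mathsf{P}$ have depth $0$. The main step is the four zigzag clauses, and this is the step I expect to be the main obstacle, although it is the standard finiteness argument. I will treat ($\mathtt{Back}^{-1}$); the others are symmetric, using $\dia$ for (\texttt{Forth})/(\texttt{Back}) and $\bdia$ for the converse clauses. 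Suppose $v_{1} Z_{m+1} v_{2}$ and $u_{2}R_{2}v_{2}$. Then $M_{2},v_{2} \models \bdia \chi^{\mathsf{P};m}_{(M_{2},u_{2})}$. This formula has depth at most $m+1$ and lies in $\mathsf{Form}(\mathsf{P})$. The finiteness of $\Phi(\mathsf{P};m)$, which needs $\mathsf{P}$ finite, guarantees that $\chi$ is a genuine finite formula. So $M_{1},v_{1}$ satisfies the same formula, giving a $u_{1}$ with $u_{1}R_{1}v_{1}$ and $M_{1},u_{1}\models\chi^{\mathsf{P};m}_{(M_{2},u_{2})}$. By $2 \Rightarrow 1$, applied at depth $m$, we get $u_{1} Z_{m} u_{2}$.

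The step $3 \Rightarrow 4$ is immediate from the definition. For $4 \Rightarrow 1$ I will prove, by induction on $k$, that $v_{1}Z_{k}v_{2}$ implies agreement on all of $\mathsf{Form}(\mathsf{P};k)$, for any layered $\mathsf{P}$-bisimulation $Z$. This needs care because $Z$ need not be downward closed. The fix is an inner induction on formula structure, claiming: if $v_{1}Z_{k}v_{2}$ and $\mathtt{d}(\varphi)\leqslant k$, then the two points agree on $\varphi$.
\begin{itemize}
\item Atoms are handled by (\texttt{Atom}), and $\bot$ is trivial.
\item The case $\varphi_{1}\to\varphi_{2}$ follows directly from the induction hypothesis, since each $\mathtt{d}(\varphi_{i}) \leqslant k$ at the same level $k$.
\item For $\Box\psi$ with $\mathtt{d}(\psi)\leqslant k-1$, we have $k = k'+1$. (\texttt{Forth}) and (\texttt{Back}) produce pairs at level $k'$, where $\psi$ has depth at most $k'$, so the outer induction hypothesis applies. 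The case $\bbox\psi$ uses ($\mathtt{Forth}^{-1}$) and ($\mathtt{Back}^{-1}$) in the same way.
\end{itemize}
Structuring the induction on depth-bounded formulas relative to the level in this way avoids any appeal to downward closure.
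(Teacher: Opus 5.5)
Your proposal is correct and follows the same route as the paper. That route is: $1\Leftrightarrow 2$ from the definition of the characteristic formula, $1\Rightarrow 3$ by taking $Z$ to be depth-indexed $\mathsf{P}$-formula equivalence, $3\Rightarrow 4$ trivially, and $4\Rightarrow 1$ by induction on formulas. You fill in details the paper leaves implicit: the $\dia\chi$/$\bdia\chi$ argument for the four zigzag clauses, and the level-relative induction that makes downward closure unnecessary. Your $Z$ is also the intended one, whereas the paper's text has the typo $(M_{1},w_{1}) \moeq_{n}^{\mathsf{P}} (M_{2},w_{2})$ in place of $(M_{1},w) \moeq_{n}^{\mathsf{P}} (M_{2},w')$.
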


\begin{proof}
The equivalence between items \ref{item:nmoeq} and \ref{item:nmoeqchar} is immediate by definition of $\chi_{(M_{2},w_{2})}^{\mathsf{P};n}$.  
The direction from item \ref{item:nmoeq} to item \ref{item:nlaybisim} holds by defining $(w,n,w') \in Z$ iff 
$(M_{1},w_{1}) \moeq_{n}^{\mathsf{P}} (M_{2},w_{2})$. 
The direction from item \ref{item:nlaybisim} to item \ref{item:nbisim} is trivial. 
Finally, the direction from item \ref{item:nbisim} to item \ref{item:nmoeq} is proved by induction on $\varphi \in \mathsf{Form}(\mathsf{P};n)$. 
\end{proof}

\section{Bisimulation Expansion Property}
\label{sec:bep}
This section formulates the bisimulation expansion property of a frame class $\mathbb{F}$, which provides a sufficient condition for the uniform interpolation property of the normal tense logic $\Lambda_{\mathbb{F}}$ associated with that class. Furthermore, we establish that the uniform interpolant in $\Lambda_{\mathbb{F}}$ coincides with a bisimulation quantifier, provided that the frame class $\mathbb{F}$ has the bisimulation expansion property.


While the following definition is based on \cite[Lemma 6.1]{Visser1996} and \cite[Definition 11]{Kurahashi2020}, we introduce four improvements: first, we refine the number of sets of propositional variables (whereas \cite[Lemma 6.1]{Visser1996} requires three pairwise disjoint sets); second, we shift the focus from model classes to frame classes; third, we prioritize bisimilarity over formula equivalences up to a given modal depth; and finally, we lift the finiteness restriction on $\mathsf{P}_{2}$ below (cf.~\cite{Kurahashi2020}).

\begin{definition}
\label{dfn:bep}
A frame class $\mathbb{F}$ has the {\em bisimulation expansion property} if, for all $\mathsf{P}_{1}, \mathsf{P}_{2} \subseteq \mathsf{Prop}$ such that $\mathsf{P}_{1}$ is finite and for all pointed models $(M_{1},w_{1})$ and $(M_{2},w_{2})$ from $\mathbb{F}$ and $n \in \omega$ such that $(M_{1},w_{1}) \bisim_{n}^{\mathsf{P}_{1} \cap \mathsf{P}_{2}} (M_{2},w_{2})$, there exists a pointed model $(N,x)$ from $\mathbb{F}$ such that $(M_{1},w_{1}) \bisim_{n}^{\mathsf{P}_{1} }(N,x) \bisim^{\mathsf{P}_{2}} (M_{2},w_{2})$.
\end{definition}


\begin{lemma}
\label{lem:bep2uip}
If a frame class $\mathbb{F}$ has the bisimulation expansion property, then 
$\Lambda_{\mathbb{F}}$ $:=$ $\inset{\varphi \in \mathsf{Form}}{\mathbb{F} \models \varphi}$ has the uniform interpolation property. 
\end{lemma}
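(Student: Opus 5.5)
Following Visser's construction, I will define the uniform interpolant as a finite disjunction of characteristic formulas. Fix $\varphi$ and $p$, and put $\mathsf{P}_{1} := \mathsf{P}(\varphi)$, $\mathsf{Q} := \mathsf{P}(\varphi) \setminus \setof{p}$ and $n := \mathtt{d}(\varphi)$. Then define
\[
\Exi{p}\varphi := \bigvee \inset{ \chi^{\mathsf{Q};n}_{(M,w)} }{ (M,w) \text{ is a pointed model from } \mathbb{F} \text{ with } M,w \models \varphi }.
\]
By Proposition \ref{prop:foreqfindep}, each $\chi^{\mathsf{Q};n}_{(M,w)}$ is a conjunction of members of the finite set $\Phi(\mathsf{Q};n)$. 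So there are only finitely many such characteristic formulas, and the disjunction is a genuine formula; if $\varphi$ is unsatisfiable in $\mathbb{F}$, it is $\bot$. Condition (\texttt{variable}) holds because $\Phi(\mathsf{Q};n) \subseteq \mathsf{Form}(\mathsf{Q};n)$. Condition (\texttt{derivability}) is immediate: if $M,w \models \varphi$ in a model from $\mathbb{F}$, then $M,w \models \chi^{\mathsf{Q};n}_{(M,w)}$, which is one of the disjuncts.

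The real work is (\texttt{uniformity}). Suppose $\psi$ is $p$-free and $\varphi \to \psi \in \Lambda_{\mathbb{F}}$. Take any pointed model $(M_{2},w_{2})$ from $\mathbb{F}$ with $M_{2},w_{2} \models \Exi{p}\varphi$; I must show $M_{2},w_{2} \models \psi$.
\begin{enumerate}
\item Some disjunct is true at $w_{2}$: there is $(M_{1},w_{1})$ from $\mathbb{F}$ with $M_{1},w_{1}\models\varphi$ and $M_{2},w_{2}\models \chi^{\mathsf{Q};n}_{(M_{1},w_{1})}$.
\item By Proposition \ref{prop:nbismnmoeq} (item 2 $\Rightarrow$ item 4, with $\mathsf{Q}$ finite), $(M_{1},w_{1}) \bisim_{n}^{\mathsf{Q}} (M_{2},w_{2})$.
\item Set $\mathsf{P}_{2} := \mathsf{Prop}\setminus\setof{p}$, so that $\mathsf{P}_{1}\cap\mathsf{P}_{2} = \mathsf{Q}$ and $\mathsf{P}_{1}$ is finite. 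The bisimulation expansion property then yields $(N,x)$ from $\mathbb{F}$ with $(M_{1},w_{1}) \bisim_{n}^{\mathsf{P}_{1}} (N,x) \bisim^{\mathsf{P}_{2}} (M_{2},w_{2})$.
\item Since $\varphi \in \mathsf{Form}(\mathsf{P}_{1};n)$, Proposition \ref{prop:nbismnmoeq} gives $N,x \models \varphi$.
\item Because $(N,x)$ is from $\mathbb{F}$, the assumption $\varphi\to\psi\in\Lambda_{\mathbb{F}}$ gives $N,x\models\psi$.
\item Since $\psi \in \mathsf{Form}(\mathsf{P}_{2})$, Proposition \ref{prop:bisim2moeq} transfers this to $M_{2},w_{2}\models\psi$.
\end{enumerate}
As $(M_{2},w_{2})$ was arbitrary, $\Exi{p}\varphi\to\psi \in \Lambda_{\mathbb{F}}$.

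The main point of care is step 3. The depth bound $n$ must cover only $\varphi$; $\psi$ may have arbitrary depth and arbitrary extra variables. That is why the definition allows $\mathsf{P}_{2}$ to be infinite and requires full (unbounded) bisimilarity on the $\mathsf{P}_{2}$ side, and this is exactly what handles every $p$-free $\psi$ at once. A further check is that the index set of the disjunction is taken up to the fixed representatives. This makes $\Exi{p}\varphi$ well defined and independent of the choice among equivalent disjuncts, since $\Phi(\mathsf{Q};n)$ has fixed representatives.
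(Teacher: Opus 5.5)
Your proof is correct and follows the paper's overall plan: depth-$n$ formulas over $\mathsf{P}(\varphi)\setminus\setof{p}$, a characteristic formula yielding layered bisimilarity, the expansion property, then transferring $\varphi$ to $(N,x)$ and $\psi$ back to $(M_{2},w_{2})$. It differs from the paper in two real respects. First, the paper defines $\Exi{p}\varphi$ dually, as $\bigwedge \mathsf{Cn}(\varphi)$, the conjunction of all representatives in $\Phi(\mathsf{P}(\varphi)\setminus\setof{p};n)$ that $\varphi$ implies over $\mathbb{F}$. It must then recover a model of $\varphi$ by a short contradiction argument: if $\varphi \land \chi$ were unsatisfiable, with $\chi$ the characteristic formula of $(M_{2},w_{2})$, then $\neg\chi$ would be among the collected consequences, contradicting $M_{2},w_{2} \models \chi$. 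Your disjunction of realized types hands you the witness $(M_{1},w_{1})$ directly; the two interpolants are in any case $\Lambda_{\mathbb{F}}$-equivalent, as any two uniform interpolants for the same pair are. Second, the paper invokes the expansion property with $\mathsf{P}_{2} := \mathsf{P}(\psi)$, taking $\chi$ over $\mathsf{P}(\varphi)\cap\mathsf{P}(\psi)$. So its $(N,x)$ depends on $\psi$ and only a finite $\mathsf{P}_{2}$ is ever used, which is why the Remark after the lemma says the lemma survives under a weaker property with finite $\mathsf{P}_{2}$ and depth-bounded equivalence on the $\mathsf{P}_{2}$ side. Your choice $\mathsf{P}_{2} := \mathsf{Prop}\setminus\setof{p}$ uses the full strength of Definition~\ref{dfn:bep}. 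In exchange, a single $(N,x)$ serves every $p$-free $\psi$ at once, which is exactly the left-to-right half of Corollary~\ref{cor:bisimquan}, so your argument proves the lemma and that half of the corollary together. Your closing remark therefore overstates things: since $(N,x)$ may be chosen after $\psi$, neither an infinite $\mathsf{P}_{2}$ nor unbounded bisimilarity is needed for this lemma, and the paper reserves that strength for Corollary~\ref{cor:bisimquan}.
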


\begin{proof}
Suppose that a frame class $\mathbb{F}$ has the bisimulation expansion property.
Fix any pair $(\varphi,p) \in \mathsf{Form} \times \mathsf{Prop}$. 
Put $n$ := $\mathtt{d}(\varphi)$. 
Recall that $\Phi(\mathsf{P}(\varphi)\setminus \setof{p};n)$ is a finite set of all fixed representatives of the equivalence classes in $\mathsf{Form}(\mathsf{P}(\varphi)\setminus \setof{p};n) / \equiv$.
Define $\mathsf{Cn}(\varphi)$ := $\inset{\psi \in  \Phi(\mathsf{P}(\varphi)\setminus \setof{p};n)}{\mathbb{F} \models \varphi \to \psi}$. 
Then we define $\Exi{p}\varphi$ := $\bigwedge \mathsf{Cn}(\varphi)$. 
In what follows, we prove that $\Exi{p}\varphi$ (= $\bigwedge \mathsf{Cn}(\varphi)$) satisfies all the required conditions. 
The condition (\texttt{variable}) holds by definition. 
For (\texttt{derivability}), we show that $\mathbb{F} \models \varphi \to \bigwedge \mathsf{Cn}(\varphi)$. 
Fix any $\psi \in  \Phi(\mathsf{P}(\varphi) \setminus \setof{p};n)$ such that $\mathbb{F} \models \varphi \to \psi$. It suffices to show that $\mathbb{F} \models \varphi \to \psi$, which is immediate. 

So, in what follows, we focus on establishing the condition $(\texttt{uniformity})$. Fix any $p$-free formula $\psi$ (i.e., $p \notin \mathsf{P}(\psi)$) and suppose that $\mathbb{F} \models \varphi \to \psi$. 
We show that $\mathbb{F} \models \Exi{p}\varphi\to \psi$. 
Let us fix any pointed model $(M_{2},w_{2})$ from $\mathbb{F}$ such that $M_{2},w_{2} \models \Exi{p}\varphi$. Our goal is to show that $M_{2},w_{2}\models \psi$. Define $\chi := \chi^{\mathsf{Q};n}_{(M_{2},w_{2})}$ where 
$\mathsf{Q}$ $:=$ $(\mathsf{P}(\varphi) \setminus \setof{p}) \cap \mathsf{P}(\psi)$
Then $\varphi \land \chi$ is satisfiable in $\mathbb{F}$, i.e., $\mathbb{F} \not\models \neg(\varphi \land \chi)$. This is verified as follows: Suppose not, i.e., $\mathbb{F} \models \neg(\varphi \land \chi)$. It follows that $\mathbb{F} \models \varphi \to \neg \chi$. 
Since $\neg \chi$ is $p$-free, we obtain $\neg \chi \in \mathsf{Cn}(\varphi)$ hence $\mathbb{F} \models \Exi{p}\varphi \to \neg \chi$ by $\Exi{p}\varphi$ = $\bigwedge \mathsf{Cn}(\varphi)$.  
Since $M_{2},w_{2} \models \Exi{p}\varphi$, we get $M_{2},w_{2} \models \neg \chi$, which is a contradiction with $M_{2},w_{2} \models \chi$. 
By the satisfiability of $\varphi \land \chi$ in $\mathbb{F}$, we can find a pointed model $(M_{1},w_{1})$ from $\mathbb{F}$ such that $M_{1},w_{1} \models \varphi \land \chi$. 
We note that $\mathsf{Q}$ $=$ $(\mathsf{P}(\varphi) \setminus \setof{p}) \cap \mathsf{P}(\psi)$ $=$ $\mathsf{P}(\varphi) \cap \mathsf{P}(\psi)$, since $\psi$ is $p$-free. 
It follows from $M_{1},w_{1} \models \chi$ that 
$(M_{1},w_{1}) \bisim_{n}^{\mathsf{P}(\varphi) \cap \mathsf{P}(\psi)}  (M_{2},w_{2})$. By the bisimulation expansion property of $\mathbb{F}$, we can find a pointed model $(N,x)$ from $\mathbb{F}$ such that $(M_{1},w_{1}) \bisim_{n}^{\mathsf{P}(\varphi) }(N,x) \bisim^{\mathsf{P}(\psi)} (M_{2},w_{2})$. 
It follows from $M_{1},w_{1} \models \varphi$ and $(M_{1},w_{1}) \bisim_{n}^{\mathsf{P}(\varphi) }(N,x)$ that $N,x \models \varphi$ (recall that $n$ = $\mathtt{d}(\varphi)$). By the initial supposition $\mathbb{F} \models \varphi \to \psi$, we obtain $N,x \models \psi$.
Therefore, we derive from $(N,x) \bisim^{\mathsf{P}(\psi)} (M_{2},w_{2})$ that $M_{2},w_{2} \models \psi$, as desired. 
\end{proof}

\begin{remark}
Even if we restrict our focus to formula equivalences up to a given modal depth and retain the finiteness restriction on $\mathsf{P}_{2}$ in Definition \ref{dfn:bep}, we can still provide a sufficient condition for the uniform interpolation property of $\Lambda_{\mathbb{F}}$. In the proof of Corollary \ref{cor:bisimquan}, however, it is crucial to allow the set $\mathsf{P}_{2}$ of propositional variables to be potentially infinite within the bisimulation expansion property. In this sense, Lemma \ref{lem:bep2uip} has broader applicability.
\end{remark}

The following provides a generalization of Visser's argument for the definability of bisimulation quantifiers in the basic modal logic $\mathbf{K}$ (cf.~\cite[Theorem 6.2]{Visser1996}) to a normal tense logic $\Lambda_{\mathbb{F}}$. 

\begin{corollary}
\label{cor:bisimquan}
If a frame class $\mathbb{F}$ has the bisimulation expansion property, then the following equivalence holds for a uniform interpolant $\Exi{p}\varphi$ for $(\varphi,p) \in \mathsf{Form} \times \mathsf{Prop}$: 
for every $(M,w)$ from $\mathbb{F}$, 
\[
\begin{array}{lll}
     M,w \models \Exi{p}\varphi &\iff& 
     (M,w) \bisim^{\mathsf{Prop}\setminus \setof{p} } (N,v) \text{ and } N,v \models \varphi, 
     \text{ for some $(N,v)$ from $\mathbb{F}$. }
     \\
\end{array}
\]
\end{corollary}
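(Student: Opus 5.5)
We prove the two directions separately, taking $\Exi{p}\varphi$ to be any uniform interpolant in $\Lambda_{\mathbb{F}}$. Such an interpolant exists by Lemma \ref{lem:bep2uip}. All uniform interpolants of $(\varphi,p)$ are $\Lambda_{\mathbb{F}}$-equivalent, since each is $p$-free and entails the other by (\texttt{derivability}) and (\texttt{uniformity}). So the choice does not matter.

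\emph{Right-to-left.} Suppose $(M,w) \bisim^{\mathsf{Prop}\setminus\setof{p}} (N,v)$ and $N,v \models \varphi$, with $(N,v)$ from $\mathbb{F}$. By (\texttt{derivability}) we have $\mathbb{F} \models \varphi \to \Exi{p}\varphi$, so $N,v \models \Exi{p}\varphi$. By (\texttt{variable}), $\Exi{p}\varphi \in \mathsf{Form}(\mathsf{Prop}\setminus\setof{p})$. Proposition \ref{prop:bisim2moeq} then gives $M,w \models \Exi{p}\varphi$.

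\emph{Left-to-right.} Suppose $M,w \models \Exi{p}\varphi$ with $(M,w)$ from $\mathbb{F}$. We replay the uniformity argument of Lemma \ref{lem:bep2uip}, but apply the bisimulation expansion property with an infinite $\mathsf{P}_{2}$. Let $n := \mathtt{d}(\varphi)$ and $\mathsf{Q} := \mathsf{P}(\varphi)\setminus\setof{p}$, which is finite, and let $\chi := \chi^{\mathsf{Q};n}_{(M,w)}$.

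First we show that $\varphi \land \chi$ is satisfiable in $\mathbb{F}$. If not, then $\mathbb{F} \models \varphi \to \neg\chi$. Since $\neg\chi$ is $p$-free, (\texttt{uniformity}) gives $\mathbb{F} \models \Exi{p}\varphi \to \neg\chi$. Hence $M,w \models \neg\chi$, contradicting $M,w \models \chi$.

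So pick $(M_{1},w_{1})$ from $\mathbb{F}$ with $M_{1},w_{1} \models \varphi \land \chi$. By Proposition \ref{prop:nbismnmoeq}, $(M_{1},w_{1}) \bisim_{n}^{\mathsf{Q}} (M,w)$. Now apply Definition \ref{dfn:bep} with $\mathsf{P}_{1} := \mathsf{P}(\varphi)$ (finite) and $\mathsf{P}_{2} := \mathsf{Prop}\setminus\setof{p}$ (infinite, which the definition permits). Here $\mathsf{P}_{1}\cap\mathsf{P}_{2} = \mathsf{Q}$, so the hypothesis of the definition is met. We obtain $(N,x)$ from $\mathbb{F}$ with
\[
(M_{1},w_{1}) \bisim_{n}^{\mathsf{P}(\varphi)} (N,x) \bisim^{\mathsf{Prop}\setminus\setof{p}} (M,w).
\]
Since $\mathtt{d}(\varphi) = n$ and $\mathsf{P}(\varphi)$ is finite, Proposition \ref{prop:nbismnmoeq} transfers $\varphi$ from $(M_{1},w_{1})$ to $(N,x)$, so $N,x \models \varphi$. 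Bisimilarity is symmetric (take the converse relation), so $(M,w) \bisim^{\mathsf{Prop}\setminus\setof{p}} (N,x)$, which is the required witness.

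The only delicate point is that $\mathsf{P}_{2}$ must be the infinite set $\mathsf{Prop}\setminus\setof{p}$. A finite $\mathsf{P}_{2}$ would yield bisimilarity with respect to only finitely many variables, not the full $\mathsf{Prop}\setminus\setof{p}$-bisimilarity required. This is exactly why Definition \ref{dfn:bep} lifts the finiteness restriction on $\mathsf{P}_{2}$; with that in place, the rest is routine.
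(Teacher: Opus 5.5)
Your proposal is correct and follows essentially the same route as the paper. The right-to-left direction goes via (\texttt{derivability}), (\texttt{variable}) and Proposition~\ref{prop:bisim2moeq}. The left-to-right direction establishes satisfiability of $\varphi \land \chi^{\mathsf{P}(\varphi)\setminus\setof{p};n}_{(M,w)}$ by (\texttt{uniformity}), then applies the bisimulation expansion property with $\mathsf{P}_{1} = \mathsf{P}(\varphi)$ and the infinite $\mathsf{P}_{2} = \mathsf{Prop}\setminus\setof{p}$. Your remarks on the symmetry of bisimilarity and on the irrelevance of the choice of interpolant only make explicit what the paper leaves implicit.
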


\begin{proof}
Suppose that a frame class $\mathbb{F}$ has the bisimulation expansion property. By Lemma \ref{lem:bep2uip}, $\Lambda_{\mathbb{F}}$ has the uniform interpolation property. 
Fix any pointed model $(M,w)$. 
For the right-to-left direction, let us suppose that we can find a pointed model $(N,v)$ from $\mathbb{F}$ such that $(M,w) \bisim^{\mathsf{Prop}\setminus \setof{p} } (N,v)$ and $N,v \models \varphi$. Our goal is to show that $M,w \models \Exi{p}\varphi$. Since $\Exi{p}\varphi$ is a uniform interpolant for $(\varphi,p)$, we obtain $\mathbb{F} \models \varphi \to \Exi{p}\varphi$ by $(\mathtt{derivability})$. It follows from $N,v \models \varphi$ that $N,v \models \Exi{p}\varphi$. By $(\mathtt{variable})$ condition, $ \Exi{p}\varphi$ is $p$-free. We derive from $(M,w) \bisim^{\mathsf{Prop}\setminus \setof{p} } (N,v)$ that $M,w \models \Exi{p}\varphi$, as desired. 

For the left-to-right direction, assume that $M,w \models \Exi{p}\varphi$. Put $n$ := $\mathtt{d}(\varphi)$. 
We define 
\begin{center}
$\chi$ := $\chi^{\mathsf{P}(\varphi)\setminus \setof{p};n}_{(M,w)}$ = $\bigwedge \inset{\rho \in \Phi(\mathsf{P}(\varphi)\setminus \setof{p};n)}{M,w\models \rho}$. 
\end{center}
Note that $\chi$ is $p$-free and $M,w \models \chi$ hence $(\Exi{p}\varphi) \land \chi$ is satisfiable in $\mathbb{F}$. We prove that $\varphi\land  \chi$ is satisfiable in $\mathbb{F}$, i.e., $\mathbb{F} \not\models \neg(\varphi  \land  \chi)$. This is proved as follows: Suppose otherwise. Then $\mathbb{F} \models \varphi \to \neg \chi$. 
Since $\Exi{p}\varphi$ is a uniform interpolant for $(\varphi,p)$ and $\neg \chi$ is $p$-free, we obtain $\mathbb{F} \models \Exi{p} \varphi \to \neg \chi$ by $(\mathtt{uniformity})$. This means that $(\Exi{p}\varphi) \land \chi$ is not satisfiable in $\mathbb{F}$. A contradiction. Thus, $\varphi\land  \chi$ is satisfiable in $\mathbb{F}$. 

We can find a pointed model $(M_{0},w_{0})$ from $\mathbb{F}$ such that $M_{0}, w_{0} \models \varphi$ and $M_{0}, w_{0} \models \chi$. By the latter, we obtain $(M_{0},w_{0}) \bisim_{n}^{\mathsf{P}(\varphi)\setminus \setof{p}} (M,w)$. 
Since $\mathsf{P}(\varphi)\setminus \setof{p}$ = $\mathsf{P}(\varphi) \cap (\mathsf{Prop}\setminus \setof{p})$, the bisimulation expansion property of $\mathbb{F}$ enables us to find a pointed model $(N,x)$ from $\mathbb{F}$ such that 
\[
(M_{0},w_{0}) \bisim_{n}^{\mathsf{P}(\varphi)} (N,x) \bisim^{\mathsf{Prop} \setminus \setof{p}} (M,w).
\]
By $M_{0}, w_{0} \models \varphi$, we obtain $(N,x) \models \varphi$. By $(N,x) \bisim^{\mathsf{Prop} \setminus \setof{p}} (M,w)$, we obtain our desired goal. 
\end{proof}

\section{Uniform Interpolation Property of Basic Tense Logic}
\label{sec:uipkt}

This section establishes that the class $\mathbb{F}_{\mathtt{all}}$ of all frames has the bisimulation expansion property hence $\mathbf{Kt}$ ($=$ $\inset{\varphi \in\mathsf{Form}}{\mathbb{F}_{\mathtt{all}} \models \varphi}$) has the uniform interpolation property. 

\begin{lemma}
\label{lem:bep}
The class $\mathbb{F}_{\mathtt{all}}$ of all frames has the bisimulation expansion property. 
\end{lemma}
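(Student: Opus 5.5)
The plan is to adapt Visser's product construction to the two-way setting, building $N$ as a model whose states are ``histories'' tracking paths through the two given models. Fix finite $\mathsf{P}_1$, arbitrary $\mathsf{P}_2$, and pointed models $(M_1,w_1)$, $(M_2,w_2)$ with $(M_1,w_1)\bisim_n^{\mathsf{P}_1\cap\mathsf{P}_2}(M_2,w_2)$. By Proposition \ref{prop:nbismnmoeq} we may take a downward closed layered $(\mathsf{P}_1\cap\mathsf{P}_2)$-bisimulation $Z$ with $w_1Z_nw_2$. A naive product $\inset{(u,k,v)}{uZ_kv}$, with $(u,k,v)R^N(u',k-1,v')$ iff $uR_1u'$ and $vR_2v'$, fails for tense logic: going forward then backward, the layer index must still decrease, while the $M_2$-component must remain fully bisimilar in both directions. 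So the states of $N$ will be finite \emph{paths} (unravelling in both directions) that keep a counter.

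\textbf{Construction.} Let the domain of $N$ consist of the triples $(u,k,v)$ with $uZ_kv$ for $k\leqslant n$, together with ``free'' states $(\ast,v)$ for $v\in W_2$, used once the counter is exhausted. Rather than a plain set of triples, I will take the two-way unravelling: states are finite sequences $s=\langle t_0,d_1,t_1,\ldots,d_m,t_m\rangle$ where $t_0=(w_1,n,w_2)$, each $d_i\in\setof{+,-}$, and each step $t_{i-1}\to t_i$ follows $R_1$ and $R_2$ forward (if $d_i=+$) or backward (if $d_i=-$) in both components, with the counter dropping by one. Once the counter reaches $0$ (or the $M_1$-component is abandoned), the path continues in $M_2$ alone. $R^N$ relates $s$ to its one-step extensions by $+$, and relates $s$ extended by $-$ back to $s$. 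This makes $N$ a tree-like (two-way) model, so the converse relation is controlled. For the valuation, $p\in\mathsf{P}_1$ is read off the $M_1$-component when it exists, and $p\in\mathsf{P}_2\setminus\mathsf{P}_1$ is read off the $M_2$-component. On states that still carry an $M_1$-component the two readings agree on $\mathsf{P}_1\cap\mathsf{P}_2$ by (\texttt{Atom}), so the valuation is well defined; on free states $p\in\mathsf{P}_1\setminus\mathsf{P}_2$ can be set arbitrarily, say false.

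\textbf{Verification.} First, the projection to the $M_2$-component is a $\mathsf{P}_2$-bisimulation between $N$ and $M_2$, which gives $(N,x)\bisim^{\mathsf{P}_2}(M_2,w_2)$ at the root $x$. (\texttt{Atom}) holds by the choice of valuation. (\texttt{Forth}) and ($\mathtt{Forth}^{-1}$) hold by construction. For (\texttt{Back}) and ($\mathtt{Back}^{-1}$), given an $R_2$-successor or predecessor $v'$ of the current $M_2$-component: if the counter is $k+1$, use (\texttt{Back}) or ($\mathtt{Back}^{-1}$) of $Z$ to find a matching $u'$ with $u'Z_kv'$ and extend the path; if the counter is $0$, extend to a free state. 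For a backward step at a non-root state that arose from a forward step, the parent is one predecessor, and the remaining predecessors are supplied by $-$ extensions, so the two-way unravelling keeps both directions total. Second, the relation $\inset{(u,k,s)}{s\text{ ends in }(u,k',\cdot)\text{ with }k'\geqslant k}$ is a layered $\mathsf{P}_1$-bisimulation between $M_1$ and $N$ containing $(w_1,n,x)$. Here (\texttt{Forth}) and ($\mathtt{Forth}^{-1}$) use (\texttt{Forth}) and ($\mathtt{Forth}^{-1}$) of $Z$, and (\texttt{Back}) and ($\mathtt{Back}^{-1}$) follow because the $N$-neighbours of a state with counter $\geqslant 1$ are exactly its path extensions, which project to $R_1$-neighbours with counter one lower. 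The parent of a non-root state also has the right shape: it has counter one higher, and downward closure of $Z$ lets us lower it.

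\textbf{Main obstacle.} The delicate step is the interaction of the parent edge with the counter. The parent of $s$ has counter $k+1$ while $s$ has counter $k$, so from $s$ the parent is a neighbour with a \emph{larger} index. This is harmless for the layered $\mathsf{P}_1$-bisimulation thanks to downward closure, but it must be checked carefully, together with the requirement that free states never become neighbours of states that still need an $M_1$-witness at positive layer. If this bookkeeping causes trouble, my fallback is to define states as pairs $(\text{$M_2$-path},\text{partial $M_1$-path})$ and to index layers by $n$ minus the path length, which makes the monotonicity automatic.
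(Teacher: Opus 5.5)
Your argument is correct, but it achieves two-way closure by a different device than the paper. The paper does not unravel. It adds to $M_{1}$ a fresh point $\mathtt{T}$ that is $R_{1}^{\mathtt{T}}$-related to every point in both directions, and takes the domain of $N$ to be $Z^{+}=Z\cup\inset{(\mathtt{T},-1,y)}{y\in W_{2}}$ itself. It relates $(a_{1},\alpha,a_{2})$ to $(b_{1},\beta,b_{2})$ iff $a_{1}R_{1}^{\mathtt{T}}b_{1}$, $a_{2}R_{2}b_{2}$, and $\mathtt{pd}(\alpha)=\beta$ or $\mathtt{pd}(\beta)=\alpha$.

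The symmetric layer condition is what cures the defect you found in the naive product. A state of layer $\alpha\geqslant 1$ now has successors \emph{and} predecessors at layer $\alpha-1$, supplied by ($\mathtt{Back}$) and ($\mathtt{Back}^{-1}$) of $Z$. States of layers $0$ and $-1$ get all their $M_{2}$-neighbours matched inside $\setof{\mathtt{T}}\times\setof{-1}\times W_{2}$, which plays exactly the role of your free states $(\ast,v)$. The projection $\pi_{3}$ is then a $\mathsf{P}_{2}$-bounded morphism, as your projection is.

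The paper's route buys economy: no paths or direction labels, and the verification is a short case split on indices. Your tree buys locality: each state has exactly one neighbour with a larger counter, namely its parent. In the paper's product, by contrast, a state of layer $\alpha\geqslant 1$ may also have neighbours at layer $\alpha+1$ in both directions.

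You should state two things explicitly. First, free extensions are taken only at counter $0$ or from free states; otherwise ($\mathtt{Back}$) on the $M_{1}$-side fails at positive counters. Second, on free states each $p\in\mathsf{P}_{1}\cap\mathsf{P}_{2}$ is read off the $M_{2}$-component.

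Your ``$k'\geqslant k$'' indexing of the relation to $M_{1}$ is exactly what is needed, and the paper's product needs it as well. There $S$ also links layer $\alpha+1$ with layer $\alpha+2$. So in case ii) of the paper's ($\mathtt{Forth}^{-1}$) check, the given neighbour $(b_{1},\alpha+2,b_{2})$ does not lie in $Z(\alpha)$. Hence $Y_{\alpha}$ must be read as $\inset{(z,\pi_{1}(z))}{z\in Z,\ \pi_{2}(z)\geqslant\alpha}$ rather than as the graph of $\pi_{1}$ on $Z(\alpha)$ alone. With that reading, the paper's case analysis goes through essentially unchanged.
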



\begin{proof}
Fix any $\mathsf{P}_{1}, \mathsf{P}_{2} \subseteq \mathsf{Prop}$ such that $\mathsf{P}_{1}$ is finite and any $n \in \omega$. 
Let us also fix any pointed models $(M_{1},w_{1})$ and $(M_{2},w_{2})$ from $\mathbb{F}_{\mathtt{all}}$ with $(M_{1},w_{1}) \bisim_{n}^{\mathsf{P}_{1} \cap \mathsf{P}_{2}} (M_{2},w_{2})$. 
Let us put $M_{i}$ := $(W_{i},R_{i},V_{i})$. 
Since $\mathsf{P}_{1}\cap \mathsf{P}_{2}$ is finite, Proposition \ref{prop:nbismnmoeq} enables us to find a downward-closed layered $\mathsf{P}_{1}\cap \mathsf{P}_{2}$-bisimulation $Z \subseteq W_{1} \times \omega \times W_{2}$ such that $(w_{1},n,w_{2}) \in Z$. We add a new state ``$\mathtt{T}$'' to $M_{1}$ to define $M_{1}^{\mathtt{T}}$ := $(W_{1}^{\mathtt{T}},R_{1}^{\mathtt{T}}, V_{1}^{\mathtt{T}})$ as follows:
\begin{center}
$W_{1}^{\mathtt{T}}$ := $W_{1} \cup \setof{\mathtt{T}}$; \quad $R_{1}^{\mathtt{T}}$ := $R_{1} \cup \inset{(\mathtt{T},x), (x,\mathtt{T})}{x \in W_{1} \cup \setof{\mathtt{T}}}$; \quad $V_{1}^{\mathtt{T}}(p)$ := $V_{1}(p)$ for all $p \in \mathsf{Prop}$.
\end{center}
By noting that $\{-1\} \cup \omega$  (= $1+\omega$) and $\omega$ are order-isomorphic as well-ordered sets, we revise $Z$ into $Z^{+} $ as follows: $Z^{+} := Z \cup \inset{(\mathtt{T},-1,y)}{y \in W_{2}}$. We also define the \emph{predecessor function} $\mathtt{pd}: \{-1\} \cup \omega \to \{-1\} \cup \omega$ by 
$\mathtt{pd}(\alpha)$ $:=$  $\alpha -1 $ (if $\alpha \geqslant 1$); $-1$ (if $\alpha  \in \setof{-1,0}$).  
Now we define our intended intermediate model $N$ $=$ $(Z^{+},S,U)$ as follows:
\begin{itemize}
    \item $(a_{1},\alpha ,a_{2})S(b_{1},\beta,b_{2})$ iff 
    $a_{1}R_{1}^{\mathtt{T}} b_{1}$, $a_{2}R_{2}b_{2}$ and ($\mathtt{pd}(\alpha)$ = $\beta$ or $\mathtt{pd}(\beta)$ = $\alpha$). 
    \item $U: \mathsf{P}_{1} \cup \mathsf{P}_{2} \to \wp(Z^{+})$ is defined by 
    \[
    \begin{array}{lll}
    (a_{1},\alpha,a_{2}) \in U(p)&\iff& 
    \begin{cases}
     a_{1} \in V_{1}(p)   & \text{ if $p \in \mathsf{P}_{1}\setminus \mathsf{P}_{2}$;} \\
      a_{2} \in V_{2}(p)  & \text{ if $p \in \mathsf{P}_{2}$.}
    \end{cases}
    \end{array}
    \]
\end{itemize}
In what follows, we establish $(M_{1},w_{1}) \bisim^{\mathsf{P}_{1}}_{n} (N, (w_{1},n,w_{2}))
\bisim^{\mathsf{P}_{2}} (M_{2},w_{2})$. Define the projections $\pi_{i}$ ($i \in \setof{1,2,3}$) from $Z^{+}$ to each component by the following: for any $(a_{1},\alpha,a_{2}) \in Z^{+}$, 
\[
\begin{array}{rrlrrlrrl}
     \pi_{1}: (a_{1},\alpha,a_{2})&\mapsto& a_{1};  &
     \pi_{2}: (a_{1},\alpha,a_{2})&\mapsto& \alpha; &
     \pi_{3}: (a_{1},\alpha,a_{2})&\mapsto& a_{2}.
\end{array}
\]
Given an element $\alpha \in \omega$, 
we define $Z(\alpha)$ := $\pi_{2}^{-1}[\setof{\alpha}]$, the inverse image of $\setof{\alpha}$ under the projection $\pi_{2}$. 
Note that we do not need to consider $-1$ as an index in the next claim. 
Define $Y_{\alpha}$ := $Gr(\pi_{1} \upharpoonright Z(\alpha)$ ($\alpha \in \omega$), where $Gr(f)$ = $\inset{(x,f(x))}{x \in X}$ is a graph of a mapping $f:X \to Y$ and $f\upharpoonright X'$ is the restriction of $f$ to $X' \subseteq X$.

\begin{clma1}
\label{claim:m1}
$(Y_{\alpha})_{\alpha \in \omega}$ is a layered $\mathsf{P}_{1}$-bisimulation between $N$ and $M_{1}$.  
\end{clma1}

\begin{pfclma}
We establish $(\mathtt{Atom})$, $(\mathtt{Forth}^{-1})$ and 
$(\mathtt{Back}^{-1})$ since 
$(\mathtt{Forth})$ and 
$(\mathtt{Back})$ are shown similarly to 
$(\mathtt{Forth}^{-1})$ and 
$(\mathtt{Back}^{-1})$, respectively. 
\begin{itemize}
    \item \fbox{$\mathtt{Atom}$} Fix any $\alpha \in \omega$ and $p \in \mathsf{P}_{1}$. Suppose that $\pi_{1}(a_{1},\alpha,a_{2})$ = $a_{1}$. 
We show that $\pi_{1}(a_{1},\alpha,a_{2}) \in V_{1}(p)$ iff $(a_{1},\alpha,a_{2}) \in U(p)$. When $p \in \mathsf{P}_{1} \setminus \mathsf{P}_{2}$, the equivalence is immediate by definition. So, we let $p \in \mathsf{P}_{1}\cap \mathsf{P}_{2}$. 
Then $(a_{1},\alpha,a_{2}) \in U(p)$ iff $a_{2} \in V_{2} (p)$ by definition. Since $(a_{1},\alpha,a_{2}) \in Z$, $a_{2} \in V_{2} (p)$ is equivalent with $a_{1} \in V_{1}(p)$, as desired. 
\item \fbox{$\mathtt{Forth}^{-1}$} Assume  $(a_{1}, \alpha+1,a_{2}) Y_{\alpha+1} a_{1}$ and $Z^{+}\ni(b_{1},\beta,b_{2})S(a_{1}, \alpha+1,a_{2})$. 
It follows from the definition of $S$ that $b_{1}R_{1}^{\mathtt{T}}a_{1}$ hence 
$b_{1}R_{1}a_{1}$ by $\alpha + 1, \beta \in \omega$. 
In what follows, it suffices to establish $(b_{1},\alpha,b_{2})Y_{\alpha} b_{1}$. 
By assumption, we obtain i) $\mathtt{pd}(\alpha + 1)$ = $\beta$ or ii) $\mathtt{pd}(\beta)$ = $\alpha +1$. 
When i)  $\mathtt{pd}(\alpha + 1)$ = $\beta$ holds, we get $\beta$ = $\alpha$. Since $(b_{1},\alpha,b_{2}) \in Z(\alpha)$, we obtain  $(b_{1},\alpha,b_{2}) Y_{\alpha} b_{1}$. 
When ii) $\mathtt{pd}(\beta)$ = $\alpha +1$ holds, we get $\beta$ = $\alpha + 2$. Thus, $(b_{1},\alpha+2,b_{2}) \in Z$ by $\alpha \in \omega$. By the downward closure of $Z$, we obtain $(b_{1},\alpha,b_{2}) \in Z$. Therefore, we get $(b_{1},\alpha,b_{2}) Y_{\alpha} b_{1}$. 

\item \fbox{$\mathtt{Back}^{-1}$} Assume that $(a_{1}, \alpha+1,a_{2})  Y_{\alpha + 1} a_{1}$ and $b_{1} R_{1}a_{1}$. 
By $(a_{1}, \alpha+1, a_{2}) \in Z$ and $b_{1}R_{1}a_{1}$, we can find $b_{2} \in W_{2}$ such that $b_{2}R_{2}a_{2}$ and $(b_{1},\alpha,b_{2}) \in Z$ by $(\mathtt{Forth}^{-1})$ of $Z$. Then $(b_{1}, \alpha,b_{2}) Y_{\alpha} b_{1}$ and $(b_{1}, \alpha,b_{2}) S (a_{1}, \alpha + 1,a_{2})$ by $\mathtt{pd}(\alpha +1)$ = $\alpha$. 
\end{itemize}
This finishes establishing $(Y_{\alpha})_{\alpha \in \omega}$ is a layered $\mathsf{P}_{1}$-bisimulation between $N$ and $M_{1}$. 
\end{pfclma}

\begin{clma}
\label{claim:m2}
$\pi_{3}$ is a {$\mathsf{P}_{2}$}-bounded morphism from $N$ to $M_{2}$. 
\end{clma}

\begin{pfclma}
By Definition \ref{dfn:bm}, we check the following four conditions. 
\begin{itemize}
    \item \fbox{$\mathtt{Atom}$} Fix any $p \in \mathsf{P}_{2}$ and $(a_{1},\alpha,a_{2}) \in Z^{+}$. 
    We show that $\pi_{
    3}(a_{1},\alpha,a_{2}) \in V_{2}(p)$ iff $(a_{1},\alpha,a_{2}) \in U(p)$. This is immediate by definition of $U$. 
    \item \fbox{$\mathtt{Forth}$}
    Assume $(a_{1},\alpha,a_{2})S(b_{1},\beta,b_{2})$. It follows that $a_{2}R_{2}b_{2}$ hence $\pi_{3}(a_{1},\alpha,a_{2})R_{2}\pi_{3}(b_{1},\beta,b_{2})$. 
    \item \fbox{$\mathtt{Back}$} 
    Fix any $(a_{1},\alpha,a_{2}) \in Z^{+}$. Assume that $a_{2}$ $=$  $\pi_{3}(a_{1},\alpha,a_{2})R_{2}b_{2}$. We divide our argument into the following two cases: i) $\alpha \geqslant 1$; ii) $\alpha \in \setof{-1,0}$. 
    For the case i), we proceed as follows: By $\alpha \geqslant 1$, we obtain $(a_{1},\alpha,a_{2}) \in Z$. 
    By $a_{2}R_{2}b_{2}$ and $(\mathtt{Back})$ of $Z$, we can find some $b_{1} \in W_{1}$ such that $a_{1}R_{1}b_{1}$ and $(b_{1},\alpha-1,b_{2}) \in Z$. So, 
    $(a_{1},\alpha,a_{2})S(b_{1},\alpha-1,b_{2})$ and $\pi_{3}(b_{1},\alpha-1,b_{2})$ $=$ $b_{2}$. 
    Finally, we consider the case ii): $\alpha \in \setof{-1,0}$. 
    By definition of $S$, $(a_{1},\alpha,a_{2})S(\mathtt{T},-1,b_{2})$ and $\pi_{3}(\mathtt{T},-1,b_{2})$ $=$ $b_{2}$, as required. 
    \item \fbox{$\mathtt{Back}^{-1}$}
    Fix any $(a_{1},\alpha,a_{2}) \in Z^{+}$. Assume that   $b_{2}R_{2}\pi_{3}(a_{1},\alpha,a_{2})$ = $a_{2}$. 
    We divide our argument into the following two cases: i) $\alpha \geqslant 1$; ii) $\alpha \in \setof{-1,0}$. 
    For the case i), we proceed as follows: By $\alpha \geqslant 1$, we obtain $(a_{1},\alpha,a_{2}) \in Z$. 
    By $b_{2}R_{2}a_{2}$ and $(\mathtt{Back}^{-1})$ of $Z$, we can find some $b_{1} \in W_{1}$ such that $b_{1}R_{1}a_{1}$ and $(b_{1},\alpha-1,b_{2}) \in Z$. So, 
    $(b_{1},\alpha-1,b_{2})S(a_{1},\alpha,a_{2})$ and $\pi_{3}(b_{1},\alpha-1,b_{2})$ $=$ $b_{2}$. 
    Finally, we consider the case ii): $\alpha \in \setof{-1,0}$. 
    By definition of $S$, $(\mathtt{T},-1,b_{2})S(a_{1},\alpha,a_{2})$ and $\pi_{3}(\mathtt{T},-1,b_{2})$ $=$ $b_{2}$, as required.
\end{itemize}
This finishes establishing that $\pi_{3}$ is a $\mathsf{P}_{2}$-bounded morphism from $N$ to $M_{2}$. 
\end{pfclma}

It is easy to see that $((w_{1},n,w_{2}),w_{1}) \in Gr(\pi_{1} \upharpoonright Z(n) )$ and that $\pi_{3}: (w_{1},n,w_{2}) \mapsto w_{2}$. 
By Claims \ref{claim:m1} and \ref{claim:m2}, we have established $(M_{1},w_{1}) \bisim^{\mathsf{P}_{1}}_{n} (N, (w_{1},n,w_{2}))
\bisim^{\mathsf{P}_{2}} (M_{2},w_{2})$.
\end{proof}

By Lemmas \ref{lem:bep2uip} and \ref{lem:bep}, we obtain our main theorem of this paper as follows:

\begin{theorem}
\label{thm:uipkt}
The smallest normal tense logic $\mathbf{Kt}$ has the uniform interpolation property. 
\end{theorem}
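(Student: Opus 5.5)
The theorem follows by combining the two lemmas just established with the completeness fact for $\mathsf{H}(\mathbf{Kt})$. The plan is first to identify $\mathbf{Kt}$ with a frame logic $\Lambda_{\mathbb{F}}$, and then to apply Lemma \ref{lem:bep2uip} to the class $\mathbb{F}_{\mathtt{all}}$.

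\textbf{Step 1.} By the soundness and completeness fact, $\mathbf{Kt} = \Lambda_{\mathbb{F}_{\mathtt{all}}} = \inset{\varphi \in \mathsf{Form}}{\mathbb{F}_{\mathtt{all}} \models \varphi}$. The uniform interpolation property of a normal tense logic depends only on which formulas belong to the set $\Lambda$. So it suffices to show that $\Lambda_{\mathbb{F}_{\mathtt{all}}}$ has the property.

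\textbf{Step 2.} By Lemma \ref{lem:bep}, the class $\mathbb{F}_{\mathtt{all}}$ has the bisimulation expansion property of Definition \ref{dfn:bep}.

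\textbf{Step 3.} Lemma \ref{lem:bep2uip} applied to $\mathbb{F}_{\mathtt{all}}$ then gives that $\Lambda_{\mathbb{F}_{\mathtt{all}}}$ has the uniform interpolation property. For each pair $(\varphi, p)$, the interpolant is $\Exi{p}\varphi = \bigwedge \inset{\psi \in \Phi(\mathsf{P}(\varphi)\setminus\setof{p};\mathtt{d}(\varphi))}{\mathbb{F}_{\mathtt{all}} \models \varphi \to \psi}$. By Step 1, this is the uniform interpolation property for $\mathbf{Kt}$.

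There is no real obstacle at this stage, since the substantive work is in Lemma \ref{lem:bep}. Before concluding, I would check two points:
\begin{itemize}
\item Lemma \ref{lem:bep2uip} is stated for an arbitrary frame class, so it applies to $\mathbb{F}_{\mathtt{all}}$ without further conditions.
\item The finiteness needed for the characteristic formulas $\chi$ holds, because the relevant variable sets $\mathsf{P}(\varphi)\setminus\setof{p}$ and $\mathsf{P}(\varphi)\cap\mathsf{P}(\psi)$ are finite, so Propositions \ref{prop:foreqfindep} and \ref{prop:nbismnmoeq} apply.
\end{itemize}
Corollary \ref{cor:bisimquan} additionally shows that this interpolant is semantically a temporal bisimulation quantifier over all frames. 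Proposition \ref{prop:uip2cip} then recovers Fact \ref{fact:CIPKt} as a corollary.
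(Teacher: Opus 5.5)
Your proposal is correct and matches the paper's proof. The paper obtains the theorem by combining Lemma \ref{lem:bep} (all frames have the bisimulation expansion property) with Lemma \ref{lem:bep2uip}, using the completeness fact $\mathbf{Kt} = \Lambda_{\mathbb{F}_{\mathtt{all}}}$ to transfer the property to $\mathbf{Kt}$, exactly as you do.
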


\section{Uniform Interpolation Property of Normal Tense Extensions}
\label{sec:uipextension}

This section begins with several examples of frame definability to introduce various extensions of $\mathbf{Kt}$. 
We first discuss the notion of mirror images and prove that for a normal tense logic $\Lambda$ invariant under the mirror image, the uniform interpolation property of $\Lambda \oplus \Delta$ implies that of $\Lambda$ extended by the mirror image of $\Delta$. 
Subsequently, we demonstrate that the uniform interpolation property is preserved under the addition of constant axioms. 
Finally, we establish that this preservation also holds for the reflexive counterpart of a normal tense logic under certain assumptions.

Let us say that a formula set $\Gamma$ defines a frame class $\mathbb{F}$ if for all frames $F$, the following equivalence holds: $F \in \mathbb{F}$ iff $F \models \varphi$ for all $\varphi \in \Gamma$. Table \ref{table:framedef} presents examples of frame class definability.
The formulas $\mathtt{LM}$ and $\mathtt{FM}$ of Table \ref{table:framedef} express the existence of a last and a first moment (cf.~\cite{Burgess1984}), respectively. On the other hand, the formulas $\mathtt{D}_{\Box}$ and $\mathtt{D}_{\bbox}$ of Table \ref{table:framedef} express the absence of a last and a first moment, respectively. We can say that $\mathtt{LM}$ and $\mathtt{D}_{\Box}$ are \emph{mirror images} of $\mathtt{FM}$ and $\mathtt{D}_{\bbox}$, respectively, in the following sense. 

\begin{definition}
Given a formula $\varphi$, the mirror image $\mathtt{m}(\varphi)$ of $\varphi$ is defined as the formula obtained by simultaneously replacing all occurrences of $\bbox$ with $\Box$ and vice versa.
For a formula set $\Delta$, define $\mathtt{m}[\Delta]$ := $\inset{\mathtt{m}(\varphi)}{\varphi \in \Delta}$. 
\end{definition}

Given a normal modal logic $\Lambda$ and a set $\Delta$ of formulas, recall that $\Lambda \oplus \Delta$ is the smallest normal modal logics containing $\Lambda \cup \Delta$. 
The following proposition is easy to obtain. 

\begin{proposition}
\label{prop:mirror}
\begin{enumerate*}[label=\arabic*., ref=\arabic*]
    \item 
    \label{item:mirrorref} Let $\Lambda$ be a normal tense logic. Then $\Box p \to p \in \Lambda$ iff $\bbox p \to p \in \Lambda$. 
    Moreover, $\Box p \to \Box \Box p \in \Lambda$ iff $\bbox p \to \bbox \bbox p \in \Lambda$.
    \item \label{item:mirrortra} For any $\Delta \subseteq \setof{\mathtt{T}_{\Box}, \mathtt{4}_{\Box}}$, $\mathtt{m}[\mathbf{Kt}\oplus \Delta]$ $=$ $\mathbf{Kt}\oplus \Delta$. 
    \item\label{item:mirror} Let $\Lambda$ be a normal tense logic such that $\mathtt{m}[\Lambda]$ = $\Lambda$ and $\Delta$ a formula set. Then $\varphi \in \Lambda \oplus \Delta$ iff $\mathtt{m}(\varphi) \in \Lambda \oplus \mathtt{m}[\Delta]$. 
\end{enumerate*}
\end{proposition}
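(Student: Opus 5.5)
We prove the three items of Proposition \ref{prop:mirror} in order, using the closure of normal tense logics under $(\mathtt{UP})$, $(\mathtt{MP})$ and necessitation, together with the interaction axioms $(\Box\bdia)$ and $(\bbox\dia)$.

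For item \ref{item:mirrorref}, assume $\Box p \to p \in \Lambda$. Substituting $\neg p$ and contraposing gives $p \to \dia p \in \Lambda$. Applying $(\mathtt{Nec}_{\bbox})$ and $(\mathtt{K}_{\bbox})$ yields $\bbox p \to \bbox\dia p$; substituting $\bdia \neg p$ for $p$ gives $\bbox\bdia\neg p \to \bbox \dia\bdia\neg p$. A cleaner route is the standard adjunction: in any normal tense logic, $\dia\varphi \to \psi \in \Lambda$ iff $\varphi \to \bbox\psi \in \Lambda$, and $\bdia \varphi \to \psi\in\Lambda$ iff $\varphi \to \Box \psi \in \Lambda$. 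These follow from $(\Box\bdia)$, $(\bbox\dia)$, necessitation and monotonicity; for instance, from $\dia\varphi\to\psi$ we get $\bbox\dia\varphi \to \bbox\psi$, and combined with $\varphi\to\bbox\dia\varphi$ this gives $\varphi\to\bbox\psi$. With the adjunction in hand:
\begin{itemize}
\item $\Box p\to p$ is equivalent to $p \to \dia p$, hence via the adjunction to $\bdia p \to p$ (take $\varphi = p$, $\psi=\dia p$ and reverse appropriately), hence to $\bbox p \to p$.
\item Transitivity $\dia\dia p\to\dia p$ is, by applying the adjunction twice, equivalent to $p \to \bbox\bbox\dia p$. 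This is obtained in turn from $\bdia\bdia p\to \bdia p$, which is the dual of $\bbox p \to \bbox\bbox p$.
\end{itemize}
Each step is reversible, so the same argument gives both directions.

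For item \ref{item:mirror}, the key observation is that $\mathtt{m}$ maps every axiom of Table \ref{table:ax} to an axiom: it swaps $(\mathtt{K}_\Box)$ with $(\mathtt{K}_{\bbox})$ and $(\Box\bdia)$ with $(\bbox\dia)$, and it leaves tautologies fixed. It also commutes with the rules: with $(\mathtt{MP})$ and uniform substitution, since $\mathtt{m}(\varphi\sigma)=\mathtt{m}(\varphi)\mathtt{m}\sigma$ where $\mathtt{m}\sigma$ is $\sigma$ followed by $\mathtt{m}$, and it swaps the two necessitation rules. Hence $\mathtt{m}[\Gamma]$ is a normal tense logic whenever $\Gamma$ is. Moreover, $\mathtt{m}[\Lambda\oplus\Delta]$ is the smallest normal tense logic containing $\mathtt{m}[\Lambda]\cup\mathtt{m}[\Delta] = \Lambda\cup\mathtt{m}[\Delta]$, because $\mathtt{m}$ is an involution and preserves inclusions between normal logics. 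Therefore $\mathtt{m}[\Lambda\oplus\Delta] = \Lambda\oplus\mathtt{m}[\Delta]$, and item \ref{item:mirror} follows using $\mathtt{m}\mathtt{m}=\mathrm{id}$.

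For item \ref{item:mirrortra}, apply item \ref{item:mirror} with $\Lambda=\mathbf{Kt}$; here $\mathtt{m}[\mathbf{Kt}]=\mathbf{Kt}$ is the special case $\Delta=\varnothing$ of the argument above. This gives $\mathtt{m}[\mathbf{Kt}\oplus\Delta]=\mathbf{Kt}\oplus\mathtt{m}[\Delta]$. By item \ref{item:mirrorref}, $\mathbf{Kt}\oplus\mathtt{m}[\Delta]$ coincides with $\mathbf{Kt}\oplus\Delta$: each logic contains the generators of the other, since $\mathtt{T}_{\bbox}\in\mathbf{Kt}\oplus\mathtt{T}_\Box$ and conversely, and likewise for $\mathtt{4}$.

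The main obstacle is the syntactic derivation in item \ref{item:mirrorref}, specifically establishing the adjunction cleanly inside $\mathsf{H}(\mathbf{Kt})$. Everything else is bookkeeping about closure under the rules.
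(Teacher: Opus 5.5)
The paper states this proposition without proof. Your handling of items \ref{item:mirror} and \ref{item:mirrortra} is correct: $\mathtt{m}$ preserves normality, and minimality plus $\mathtt{m}\circ\mathtt{m}=\mathrm{id}$ gives $\mathtt{m}[\Lambda\oplus\Delta]=\Lambda\oplus\mathtt{m}[\Delta]$. The $\mathtt{4}$ half of item \ref{item:mirrorref} via the double adjunction is also sound.

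The genuine error is in the $\mathtt{T}$ half of item \ref{item:mirrorref}. Your chain goes from $p\to\dia p$ to $\bdia p\to p$ and then to $\bbox p\to p$, but $\bdia p\to p$ is equivalent to neither. It is valid on a frame only if $R$ is contained in the identity. So it fails on the reflexive frame $\{a,b\}$ with the extra edge $aRb$: take $V(p)=\{a\}$ and evaluate at $b$. On that frame both $\Box p\to p$ and $\bbox p\to p$ are valid, hence $\bdia p\to p\notin\mathbf{Kt}\oplus\mathtt{T}_{\Box}$. Your adjunction does not produce it either. With $\varphi=p$, $\psi=\dia p$ it only yields $\dia p\to\dia p$ iff $p\to\bbox\dia p$, or $\bdia p\to\dia p$ iff $p\to\Box\dia p$. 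The correct dual form of $\bbox p\to p$ is $p\to\bdia p$, and the repair is one line. Compose the axiom $p\to\Box\bdia p$ with the $\mathtt{T}_{\Box}$-instance $\Box\bdia p\to\bdia p$, then substitute $\neg p$ and contrapose. In adjunction terms, this is $\bbox p\to\dia\bbox p\to p$: use $q\to\dia q$ at $q=\bbox p$, then the counit $\dia\bbox p\to p$, which is the contrapositive of $(\Box\bdia)$ at $\neg p$. The converse direction is the mirror derivation using $(\bbox\dia)$ and $\bbox\dia p\to\dia p$.

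Two minor points. First, in the $\mathtt{4}$ half, getting $p\to\bbox\bbox\dia p$ from $\mathtt{4}_{\bbox}$ is a one-way derivation, not a reversible equivalence: instantiate $\bbox p\to\bbox\bbox p$ at $\dia p$ and precompose with $p\to\bbox\dia p$. So the other direction should come from the mirror derivation, using the adjunction $\bdia\varphi\to\psi$ iff $\varphi\to\Box\psi$, rather than from ``reversing each step''. Second, $\mathtt{m}[\mathbf{Kt}]=\mathbf{Kt}$ is not literally the case $\Delta=\varnothing$ of your item \ref{item:mirror} argument, since that argument presupposes $\mathtt{m}[\Lambda]=\Lambda$. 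Argue directly instead: $\mathtt{m}[\mathbf{Kt}]$ is a normal tense logic, hence contains $\mathbf{Kt}$, and applying $\mathtt{m}$ gives the reverse inclusion.
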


\begin{table}[htbp]
    \centering
\begin{tabular}{|cllcll|}
    \hline
    $(\mathtt{D}_{\Box})$ & $\dia \top$ & $\Any{x}\Exi{y}xRy$ &
    $(\mathtt{D}_{\bbox})$ & $\bdia \top$ & $\Any{x}\Exi{y}yRx$ \\
    $(\mathtt{T}_{\Box})$ & $\Box p \to p$ & $\Any{x}xRx$ &
    $(\mathtt{4}_{\Box})$ & $\Box p \to \Box\Box p$ & $\Any{x,y,z}(xRy \land yRz \to xRz)$ \\
    \hline
    \multicolumn{1}{|c}{$(\mathtt{LM})$} 
    & \multicolumn{1}{c}{$\Box \bot \lor \dia \Box \bot$}  & \multicolumn{4}{l|}{$\Any{x}(\Exi{y} (xRy) \to \Exi{z}(xRz \land \Any{w} (\neg zRw)))$} \\
    \multicolumn{1}{|c}{$(\mathtt{FM})$} & \multicolumn{1}{c}{$\bbox \bot \lor \bdia \bbox \bot$}  &  \multicolumn{4}{l|}{$\Any{x}(\Exi{y} (yRx) \to \Exi{z}(zRx \land \Any{w} (\neg wRz)))$} \\
    \hline
\end{tabular}
     \caption{Examples of Frame Definability}
    \label{table:framedef}
\end{table}

We note that the mirror image $\mathtt{m}$ does not change the set of propositional variables in a formula. 

\begin{proposition}
\label{prop:uipmirror}
Let $\Delta$ be a set of formulas and $\Lambda$ be a normal tense logic such that $\mathtt{m}[\Lambda]$ = $\Lambda$. 
If $\Lambda \oplus \Delta$ has the uniform interpolation property, then $\Lambda \oplus \mathtt{m}[\Delta]$ also has the uniform interpolation property. 
\end{proposition}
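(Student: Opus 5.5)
Given a pair $(\varphi,p)$, the plan is to build a uniform interpolant in $\Lambda \oplus \mathtt{m}[\Delta]$ by conjugating with the mirror image. Take the uniform interpolant $\Exi{p}\mathtt{m}(\varphi)$ of $(\mathtt{m}(\varphi),p)$ in $\Lambda\oplus\Delta$ and define
\[
\Exi{p}\varphi := \mathtt{m}(\Exi{p}\mathtt{m}(\varphi)).
\]
The only tools needed are that $\mathtt{m}$ is an involution ($\mathtt{m}(\mathtt{m}(\chi))=\chi$), that it commutes with $\to$ (so $\mathtt{m}(\chi\to\theta)=\mathtt{m}(\chi)\to\mathtt{m}(\theta)$), and that it preserves variable sets, as noted just before the statement. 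The transfer of membership between the two logics is Proposition~\ref{prop:mirror}, item~\ref{item:mirror}, which applies because $\mathtt{m}[\Lambda]=\Lambda$.

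\textbf{(\texttt{variable}).} We have $\mathsf{P}(\Exi{p}\varphi)=\mathsf{P}(\Exi{p}\mathtt{m}(\varphi))\subseteq \mathsf{P}(\mathtt{m}(\varphi))\setminus\setof{p} = \mathsf{P}(\varphi)\setminus\setof{p}$.

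\textbf{(\texttt{derivability}).} By derivability in $\Lambda\oplus\Delta$, $\mathtt{m}(\varphi)\to \Exi{p}\mathtt{m}(\varphi)\in\Lambda\oplus\Delta$. Proposition~\ref{prop:mirror}.\ref{item:mirror} then gives $\mathtt{m}(\mathtt{m}(\varphi)\to\Exi{p}\mathtt{m}(\varphi))\in\Lambda\oplus\mathtt{m}[\Delta]$, which is exactly $\varphi\to\Exi{p}\varphi$.

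\textbf{(\texttt{uniformity}).} Let $\psi$ be $p$-free with $\varphi\to\psi\in\Lambda\oplus\mathtt{m}[\Delta]$. Since $\varphi\to\psi=\mathtt{m}(\mathtt{m}(\varphi)\to\mathtt{m}(\psi))$, Proposition~\ref{prop:mirror}.\ref{item:mirror} (right-to-left, applied to the formula $\mathtt{m}(\varphi)\to\mathtt{m}(\psi)$) yields $\mathtt{m}(\varphi)\to\mathtt{m}(\psi)\in\Lambda\oplus\Delta$. Because $\mathtt{m}(\psi)$ is $p$-free, uniformity in $\Lambda\oplus\Delta$ gives $\Exi{p}\mathtt{m}(\varphi)\to\mathtt{m}(\psi)\in\Lambda\oplus\Delta$. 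Mirroring back with Proposition~\ref{prop:mirror}.\ref{item:mirror} produces $\Exi{p}\varphi\to\psi\in\Lambda\oplus\mathtt{m}[\Delta]$.

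The only delicate point is the direction of Proposition~\ref{prop:mirror}.\ref{item:mirror}. As stated it moves from $\Lambda\oplus\Delta$ to $\Lambda\oplus\mathtt{m}[\Delta]$. The reverse transfer in the uniformity step is obtained by writing the target formula as the mirror of its mirror and using the right-to-left direction. Alternatively, one can apply the same item with $\mathtt{m}[\Delta]$ in place of $\Delta$, using $\mathtt{m}[\mathtt{m}[\Delta]]=\Delta$.
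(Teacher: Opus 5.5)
Your proposal is correct and matches the paper's proof step for step. You use the same interpolant $\mathtt{m}(\Exi{p}\mathtt{m}(\varphi))$, get (\texttt{variable}) from the fact that $\mathtt{m}$ preserves variables, and transfer (\texttt{derivability}) and (\texttt{uniformity}) back and forth via item~\ref{item:mirror} of Proposition~\ref{prop:mirror}; your worry about the direction of that item is unnecessary since it is stated as a biconditional, and the involution $\mathtt{m}(\mathtt{m}(\chi))=\chi$ you make explicit is what the paper uses tacitly.
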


\begin{proof}
Suppose that $\Lambda \oplus \Delta$ has the uniform interpolation property. 
For the uniform interpolation property of 
$\Lambda \oplus \mathtt{m}[\Delta]$, let us fix any $(\varphi,p) \in \mathsf{Form} \times \mathsf{Prop}$. By the supposition, let $\Exi{p}\mathtt{m}(\varphi)$ be a uniform interpolant for $(\mathtt{m}(\varphi),p)$ in $\Lambda \oplus \Delta$. 
We show that $\mathtt{m}(\Exi{p}\mathtt{m}(\varphi))$ is our desired uniform interpolant for $(\varphi,p)$ in $\Lambda \oplus \mathtt{m}[\Delta]$. 
First, we show $(\mathtt{variable})$: $\mathsf{P}(\mathtt{m}(\Exi{p}\mathtt{m}(\varphi)))$ = 
$\mathsf{P}(\Exi{p}\mathtt{m}(\varphi))$ $\subseteq$ $\mathsf{P}(\mathtt{m}(\varphi)) \setminus \setof{p}$ 
$=$ $\mathsf{P}(\varphi) \setminus \setof{p}$, as desired. Next, we move to  $(\mathtt{derivability})$.
We prove that $\varphi \to \mathtt{m}(\Exi{p}\mathtt{m}(\varphi)) \in \Lambda \oplus \mathtt{m}[\Delta]$. 
Since $\Exi{p}\mathtt{m}(\varphi)$ is a uniform interpolant for $(\mathtt{m}(\varphi),p)$ in $\Lambda \oplus \Delta$, we obtain $\mathtt{m}(\varphi)  \to \Exi{p}\mathtt{m}(\varphi) \in \Lambda \oplus \Delta$ hence $\varphi \to \mathtt{m}(\Exi{p}\mathtt{m}(\varphi)) \in \Lambda \oplus \mathtt{m}[\Delta]$ by item \ref{item:mirror} of Proposition \ref{prop:mirror}. 
Finally, we establish $(\mathtt{uniformity})$. 
Fix any $p$-free formula $\psi$ and suppose that $\varphi \to \psi \in \Lambda \oplus \mathtt{m}[\Delta]$. 
Then $\mathtt{m}(\varphi) \to \mathtt{m}(\psi) \in \Lambda \oplus \Delta$ by item \ref{item:mirror} of Proposition \ref{prop:mirror}. 
Since $\Exi{p}\mathtt{m}(\varphi)$ is a uniform interpolant for $(\mathtt{m}(\varphi),p)$ in $\Lambda \oplus \Delta$ and $\mathtt{m}(\psi)$ is still $p$-free, we obtain $\Exi{p}\mathtt{m}(\varphi) \to \mathtt{m}(\psi) \in \Lambda \oplus \Delta$. We conclude from item \ref{item:mirror} of Proposition \ref{prop:mirror}  that 
$\mathtt{m}(\Exi{p}\mathtt{m}(\varphi)) \to \psi \in \Lambda \oplus \mathtt{m}[\Delta]$, as required. 
\end{proof}

\noindent By Theorem \ref{thm:uipkt} and Propositions \ref{prop:mirror} and \ref{prop:uipmirror}, we can establish the following. 

\begin{corollary}
\label{cor:uipmirrorKt}
If $\mathbf{Kt}\oplus \Delta$ has the uniform interpolation property, then $\mathbf{Kt}\oplus \mathtt{m}[\Delta]$ has the uniform interpolation property. 
\end{corollary}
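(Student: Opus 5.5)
The corollary follows by instantiating Proposition \ref{prop:uipmirror} with $\Lambda := \mathbf{Kt}$. The only hypothesis of that proposition not already supplied by the assumption on $\mathbf{Kt}\oplus\Delta$ is $\mathtt{m}[\mathbf{Kt}] = \mathbf{Kt}$, so the plan is to establish this equality first and then apply the proposition directly. Theorem \ref{thm:uipkt} is not needed for the implication itself. It matters only in the remark that $\Delta = \varnothing$ gives a nonvacuous instance.

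For the equality $\mathtt{m}[\mathbf{Kt}] = \mathbf{Kt}$, I will take $\Delta = \varnothing$ in item \ref{item:mirrortra} of Proposition \ref{prop:mirror}. This is allowed since $\varnothing \subseteq \setof{\mathtt{T}_{\Box}, \mathtt{4}_{\Box}}$, and $\mathbf{Kt}\oplus\varnothing = \mathbf{Kt}$. Should one want a self-contained check, the argument is an induction on derivations in $\mathsf{H}(\mathbf{Kt})$ showing that $\mathtt{m}$ maps theorems to theorems:
\begin{itemize}
\item $\mathtt{m}$ fixes tautology instances;
\item $\mathtt{m}$ swaps $(\mathtt{K}_{\Box})$ with $(\mathtt{K}_{\bbox})$, $(\Box\bdia)$ with $(\bbox\dia)$, and $(\mathtt{Nec}_{\Box})$ with $(\mathtt{Nec}_{\bbox})$;
\item $\mathtt{m}$ commutes with $(\mathtt{MP})$ and with uniform substitution, in the sense that $\mathtt{m}(\varphi\sigma) = \mathtt{m}(\varphi)\sigma'$ with $\sigma'(q) := \mathtt{m}(\sigma(q))$.
\end{itemize}
This gives $\mathtt{m}[\mathbf{Kt}] \subseteq \mathbf{Kt}$, and since $\mathtt{m}$ is an involution, equality follows.

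With $\mathtt{m}[\mathbf{Kt}] = \mathbf{Kt}$ in hand, assume $\mathbf{Kt}\oplus\Delta$ has the uniform interpolation property. Proposition \ref{prop:uipmirror} with $\Lambda = \mathbf{Kt}$ then yields the uniform interpolation property for $\mathbf{Kt}\oplus\mathtt{m}[\Delta]$.

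No step here is a real obstacle. The only point requiring care is the invariance $\mathtt{m}[\mathbf{Kt}] = \mathbf{Kt}$, and that is already covered by Proposition \ref{prop:mirror}.
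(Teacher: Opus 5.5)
Your proposal is correct and matches the paper's argument: instantiate Proposition \ref{prop:uipmirror} at $\Lambda = \mathbf{Kt}$, with $\mathtt{m}[\mathbf{Kt}] = \mathbf{Kt}$ supplied by item \ref{item:mirrortra} of Proposition \ref{prop:mirror} taking $\Delta = \varnothing$. You are also right that Theorem \ref{thm:uipkt}, which the paper cites alongside these results, is not needed for the implication itself.
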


Recall that a formula $\varphi$ is constant if $\mathsf{P}(\varphi)$ $=$ $\varnothing$. 
The following extends Rautenberg's result on Craig interpolation~\cite[Theorem 3]{Rautenberg1983} and Kurahashi's results on uniform (Lyndon) interpolation~\cite[Proposition 3]{Kurahashi2020} for basic modal logic to the case of basic tense logic. 

\begin{proposition}
\label{prop:uipconstant}
Let $\Delta$ be a set of constant formulas. If $\Lambda$ has the uniform interpolation property, then $\Lambda \oplus \Delta$ also has the uniform interpolation property. 
\end{proposition}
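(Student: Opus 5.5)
The plan is to reduce derivability in $\Lambda \oplus \Delta$ to derivability in $\Lambda$ by a deduction-theorem-style argument, and then transfer uniform interpolants directly. The key fact is the following: for a normal tense logic $\Lambda$ and a set $\Delta$ of formulas, $\varphi \in \Lambda \oplus \Delta$ iff there are a finite $\Delta_{0} \subseteq \Delta$ and $k \in \omega$ such that $\mathsf{G}^{k}(\bigwedge \Delta_{0}) \to \varphi \in \Lambda$. Here $\mathsf{G}\theta$ abbreviates $\Box\theta \land \bbox\theta \land \theta$, and $\mathsf{G}^{k}$ is its $k$-fold iteration. This is the standard local deduction theorem for normal tense logics.

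I would prove it in the usual way. The right-to-left direction is immediate: $\mathsf{G}^{k}(\bigwedge\Delta_{0}) \in \Lambda \oplus \Delta$ by necessitation and conjunction. For left-to-right, one shows that the set of $\varphi$ having such a witness contains $\Lambda \cup \Delta$. It is closed under (\texttt{MP}) by taking the union of the $\Delta_{0}$'s and the maximum of the $k$'s, using $\mathsf{G}^{k+1}\theta \to \mathsf{G}^{k}\theta \in \Lambda$. It is closed under $(\mathtt{Nec}_{\Box})$ and $(\mathtt{Nec}_{\bbox})$ by raising $k$ by one, via $\mathsf{G}^{k+1}\theta \to \Box\mathsf{G}^{k}\theta$ together with the $\mathtt{K}$ axioms. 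It is also closed under (\texttt{UP}). This is exactly where constancy of $\Delta$ is used: a substitution $\sigma$ fixes every formula of $\Delta_{0}$, so applying $\sigma$ to $\mathsf{G}^{k}(\bigwedge\Delta_{0}) \to \varphi \in \Lambda$ gives $\mathsf{G}^{k}(\bigwedge\Delta_{0}) \to \varphi\sigma \in \Lambda$. I expect this closure under uniform substitution to be the main point requiring care; everything else is routine.

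Now fix $(\varphi,p)$ and let $\Exi{p}\varphi$ be the uniform interpolant of $(\varphi,p)$ in $\Lambda$; I claim it also works in $\Lambda \oplus \Delta$. The condition (\texttt{variable}) is unchanged. (\texttt{derivability}) holds because $\Lambda \subseteq \Lambda \oplus \Delta$.

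For (\texttt{uniformity}), let $\psi$ be $p$-free with $\varphi \to \psi \in \Lambda \oplus \Delta$. By the fact above there is a constant formula $\delta := \mathsf{G}^{k}(\bigwedge\Delta_{0})$ with $\delta \to (\varphi \to \psi) \in \Lambda$, i.e.\ $\varphi \to (\delta \to \psi) \in \Lambda$. Since $\delta$ is constant, $\delta \to \psi$ is $p$-free. Uniformity in $\Lambda$ then gives $\Exi{p}\varphi \to (\delta\to\psi) \in \Lambda$, and since $\delta \in \Lambda \oplus \Delta$, we get $\Exi{p}\varphi \to \psi \in \Lambda\oplus\Delta$. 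The same interpolant therefore serves, which completes the plan.
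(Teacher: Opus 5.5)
Your proposal is correct and follows essentially the same route as the paper. Both reduce $\varphi \to \psi \in \Lambda \oplus \Delta$ to $\delta \to (\varphi \to \psi) \in \Lambda$ for a constant $\delta$ built from box-prefixed members of $\Delta$, then apply uniformity in $\Lambda$ to the $p$-free formula $\delta \to \psi$; your $\mathsf{G}^{k}(\bigwedge\Delta_{0})$ plays the role of the paper's $\bigwedge\mathop{\mathsf{L}}\Delta'$, and you additionally sketch the local deduction theorem that the paper only asserts. One small correction: as you first state that theorem, ``for a set $\Delta$ of formulas,'' it is false without substitution instances, so restrict it to constant $\Delta$; your proof already does this at the (\texttt{UP}) step.
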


\begin{proof}
Let $\Delta$ be a set of constant formulas and suppose that $\Lambda$ has the uniform interpolation property. 
To show that $\Lambda \oplus \Delta$ has the uniform interpolation property, let us fix any formula $\varphi$ and $p \in \mathsf{Prop}$. 
By supposition, there exists a uniform interpolant $\Exi{p}\varphi$ in $\Lambda$ such that it satisfies the specified three conditions. 
We show that the same $\Exi{p}\varphi$ is a (post-) uniform interpolant for $(\varphi,p)$ in $\Lambda \oplus \Delta$. 
The conditions $(\mathtt{variable})$ and $(\mathtt{derivability})$
are easy to verify, so we focus on the final condition of $(\mathtt{uniformity})$. 
Fix any formula $\psi$ such that $p \notin \mathsf{P}(\psi)$ ($p$-freeness) and $\varphi \to \psi \in \Lambda \oplus \Delta$. Our goal is to show that $\Exi{p}\varphi \to \psi \in \Lambda \oplus \Delta$. 
It follows from $\varphi \to \psi \in \Lambda \oplus \Delta$ that there exists a finite multiset $\Delta'$ constructed from a finite subset of $\Delta$ and a family $(\mathsf{L}_{\delta})_{\delta \in \Delta'} \subseteq \setof{\Box,\bbox}^{< \omega}$ such that $\bigwedge \inset{ \mathsf{L}_{\delta} \delta}{\delta \in \Delta'} \to (\varphi \to \psi) \in \Lambda$ (since $\Delta$ is a set of constant formulas). 
Put $\bigwedge\mathop{\mathsf{L}}\Delta'$ := $\bigwedge \inset{ \mathsf{L}_{\delta} \delta }{\delta \in \Delta'}$. 
Then $\varphi \to (\bigwedge\mathop{\mathsf{L}}\Delta' \to \psi) \in \Lambda$.
Since $\Delta$ is a set of constant formulas, $\bigwedge\mathop{\mathsf{L}}\Delta'$ is also constant. 
So $\bigwedge\mathop{\mathsf{L}}\Delta' \to \psi$ is $p$-free. By supposition, we get $\Exi{p}\varphi \to (\bigwedge\mathop{\mathsf{L}}\Delta'\to \psi) \in \Lambda \subseteq \Lambda \oplus \Delta$ hence $\bigwedge\mathop{\mathsf{L}}\Delta' \to (\Exi{p}\varphi \to \psi) \in  \Lambda \oplus \Delta$. 
Since 
$\bigwedge\mathop{\mathsf{L}}\Delta' \in \Lambda \oplus \Delta$, this enables us to conclude that 
$\Exi{p}\varphi \to \psi \in \Lambda \oplus \Delta$, as desired. 
\end{proof}

By Theorem \ref{thm:uipkt} and Proposition \ref{prop:uipconstant}, we can establish the following. 

\begin{corollary}
\label{cor:uipconstantkt}
Let $\Delta$ be a set of constant formulas. Then $\mathbf{Kt}\oplus \Delta$ has the uniform interpolation property. 
Therefore, for every $\Delta \subseteq \setof{\mathtt{LM}, \mathtt{FM}, \mathtt{D}_{\Box}, \mathtt{D}_{\bbox}}$ of Table \ref{table:framedef}, $\mathbf{Kt} \oplus \Delta$ has the uniform interpolation property. 
\end{corollary}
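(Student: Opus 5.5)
The first claim is immediate from Theorem \ref{thm:uipkt} together with Proposition \ref{prop:uipconstant}, applied with $\Lambda := \mathbf{Kt}$. Since $\mathbf{Kt}$ has the uniform interpolation property and $\Delta$ consists of constant formulas, Proposition \ref{prop:uipconstant} gives the uniform interpolation property of $\mathbf{Kt}\oplus\Delta$. No further work is needed here. Proposition \ref{prop:uipconstant} was stated for an arbitrary normal tense logic $\Lambda$, so instantiating it at $\mathbf{Kt}$ is legitimate.

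For the second claim, I would reduce to the first by checking that each of $\mathtt{LM}$, $\mathtt{FM}$, $\mathtt{D}_{\Box}$, $\mathtt{D}_{\bbox}$ in Table \ref{table:framedef} is a constant formula, i.e.\ that $\mathsf{P}(\cdot)=\varnothing$. Reading them off the table:
\begin{itemize}
\item $\mathtt{D}_{\Box}$ is $\dia\top$, i.e.\ $\neg\Box\neg\top$;
\item $\mathtt{D}_{\bbox}$ is $\bdia\top$;
\item $\mathtt{LM}$ is $\Box\bot\lor\dia\Box\bot$;
\item $\mathtt{FM}$ is $\bbox\bot\lor\bdia\bbox\bot$.
\end{itemize}
Each is built only from $\bot$, $\to$, $\Box$ and $\bbox$, since $\top$, $\neg$, $\lor$, $\dia$ and $\bdia$ are abbreviations. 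So each contains no propositional variables. Hence any $\Delta\subseteq\setof{\mathtt{LM},\mathtt{FM},\mathtt{D}_{\Box},\mathtt{D}_{\bbox}}$ is a set of constant formulas, and the first part applies.

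I do not expect a real obstacle. The only point worth noting is that the frame-definability column of Table \ref{table:framedef} plays no role: the argument is purely syntactic via Proposition \ref{prop:uipconstant}. In particular, we never need $\mathbf{Kt}\oplus\Delta$ to be complete with respect to the defined frame class.

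As a sanity check on symmetry, the mirror-image route gives a consistent picture. Corollary \ref{cor:uipmirrorKt} would transfer the result between $\mathtt{LM}$ and $\mathtt{FM}$, and between $\mathtt{D}_{\Box}$ and $\mathtt{D}_{\bbox}$. This is unnecessary, however, since the constancy argument handles all combinations directly.
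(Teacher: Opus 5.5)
Your proposal is correct and matches the paper's argument: the first claim is Proposition~\ref{prop:uipconstant} instantiated at $\Lambda := \mathbf{Kt}$ using Theorem~\ref{thm:uipkt}. The second claim follows because $\mathtt{LM}$, $\mathtt{FM}$, $\mathtt{D}_{\Box}$ and $\mathtt{D}_{\bbox}$ are all variable-free, so the frame-definability column plays no role.
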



\begin{definition}
We define the {\em boxdot translation} $(\cdot)^{\boxdot}: \mathsf{Form} \to \mathsf{Form}$ as follows:
\[
\begin{array}{rllrlllrll}
p^{\boxdot}  &:=&p, &
\bot^{\boxdot}  &:=&\bot, &
(\varphi \to \psi)^{\boxdot} &:=& \varphi^{\boxdot} \to \psi^{\boxdot}, \\
(\Box \varphi)^{\boxdot} &:=& \varphi^{\boxdot} \land \Box (\varphi^{\boxdot}),&
(\bbox \varphi)^{\boxdot} &:=& \varphi^{\boxdot} \land \bbox (\varphi^{\boxdot}).& && \\
\end{array}
\]
Given any normal tense logic $\Lambda$, we define ${\boxdot}^{-1}(\Lambda)$ := $\inset{\varphi}{\varphi^{\boxdot} \in \Lambda}$. 
\end{definition}


\begin{proposition}
\label{prop:invboxdot}
Let $\Lambda$ be a normal tense logic. 
\begin{enumerate*}[label=\arabic*., ref=\arabic*]
    \item \label{item:equiv}If $\Box p \to p \in \Lambda$, then $\varphi \leftrightarrow \varphi^{\boxdot} \in \Lambda$. 
    \item \label{item:invntl}${\boxdot}^{-1}(\Lambda)$ is a normal tense logic.
    \item \label{item:invntlref} $\Box p \to p \in {\boxdot}^{-1}(\Lambda)$. 
    \item \label{item:invntlbelowt} ${\boxdot}^{-1}(\Lambda) \subseteq \Lambda \oplus \setof{\Box p \to p}$. 
    \item \label{item:invequivt}
    If $\Lambda \subseteq \boxdot^{-1}(\Lambda)$ then ${\boxdot}^{-1}(\Lambda) = \Lambda \oplus \setof{\Box p \to p}$.
\end{enumerate*}
\end{proposition}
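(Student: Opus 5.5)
The plan is to treat the five items in order. Items \ref{item:invntlbelowt} and \ref{item:invequivt} will be short consequences of the first three, so the real work lies in items \ref{item:equiv} and \ref{item:invntl}.

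For item \ref{item:equiv}, assume $\Box p \to p \in \Lambda$. By item \ref{item:mirrorref} of Proposition \ref{prop:mirror} we also have $\bbox p \to p \in \Lambda$. Since $\Lambda$ is closed under $(\mathtt{UP})$, both $\Box \psi \leftrightarrow (\psi \land \Box \psi)$ and $\bbox \psi \leftrightarrow (\psi \land \bbox\psi)$ belong to $\Lambda$ for every $\psi$. We then argue by induction on $\varphi$. The atomic and $\bot$ cases are trivial, and the case of $\to$ follows from the induction hypothesis by propositional reasoning. For the case $\Box\varphi$, the induction hypothesis $\varphi \leftrightarrow \varphi^{\boxdot} \in \Lambda$ together with $(\mathtt{Nec}_{\Box})$ and $(\mathtt{K}_{\Box})$ gives $\Box\varphi \leftrightarrow \Box\varphi^{\boxdot} \in \Lambda$. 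Combining this with $\Box\varphi^{\boxdot} \leftrightarrow (\varphi^{\boxdot}\land\Box\varphi^{\boxdot})$ yields $\Box\varphi \leftrightarrow (\Box\varphi)^{\boxdot}$. The case $\bbox\varphi$ is handled symmetrically.

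For item \ref{item:invntl}, we check each clause of Table \ref{table:ax} for $\boxdot^{-1}(\Lambda)$.

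\begin{itemize}
\item The key preliminary observation is that the translation commutes with substitution: $(\varphi\sigma)^{\boxdot} = \varphi^{\boxdot}\sigma^{\boxdot}$, where $\sigma^{\boxdot}(q) := (\sigma(q))^{\boxdot}$. This is proved by a routine induction on $\varphi$.
\item Tautologies are covered by this observation: if $\varphi$ is a tautology, then $\varphi^{\boxdot}$ is a substitution instance of the same propositional tautology, because the translation acts homomorphically on $\bot$ and $\to$.
\item Closure under $(\mathtt{UP})$ follows from the commutation as well, since $\varphi^{\boxdot}\in\Lambda$ gives $(\varphi\sigma)^{\boxdot} = \varphi^{\boxdot}\sigma^{\boxdot}\in\Lambda$.
\item $(\mathtt{MP})$ is immediate, again because the translation is homomorphic on $\to$.
\item For $(\mathtt{Nec}_{\Box})$: from $\varphi^{\boxdot}\in\Lambda$ we obtain $\Box\varphi^{\boxdot}\in\Lambda$, and hence $\varphi^{\boxdot}\land\Box\varphi^{\boxdot}\in\Lambda$. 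The rule $(\mathtt{Nec}_{\bbox})$ is analogous.
\item The translation of $(\mathtt{K}_\Box)$ is $((p\to q)\land\Box(p\to q)) \to ((p\land\Box p)\to(q\land\Box q))$. This is derivable in $\mathbf{Kt}\subseteq\Lambda$ from $(\mathtt{K}_\Box)$ and propositional reasoning. The axiom $(\mathtt{K}_{\bbox})$ is treated in the same way.
\end{itemize}

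The step I expect to need the most care is the pair of interaction axioms, because $\dia$ and $\bdia$ are abbreviations and must be unfolded first. We compute $(\bdia p)^{\boxdot} = \neg(\neg p \land \bbox\neg p)$, which is equivalent to $p \lor \bdia p$. Hence $(p\to\Box\bdia p)^{\boxdot}$ is equivalent to
\[
p \to \bigl((p\lor\bdia p)\land\Box(p\lor\bdia p)\bigr).
\]
This follows in $\mathbf{Kt}$ from two facts: $p \to p\lor\bdia p$ is a tautology, and $p\to\Box\bdia p$ combined with monotonicity of $\Box$ gives $p \to \Box(p\lor\bdia p)$. The mirror axiom $(\bbox\dia)$ is handled symmetrically.

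For item \ref{item:invntlref}, note that $(\Box p\to p)^{\boxdot} = (p\land\Box p)\to p$, which is a tautology and hence belongs to $\Lambda$.

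For item \ref{item:invntlbelowt}, let $\varphi^{\boxdot}\in\Lambda$. Then $\varphi^{\boxdot} \in \Lambda\oplus\setof{\Box p\to p}$. Applying item \ref{item:equiv} to the logic $\Lambda\oplus\setof{\Box p\to p}$ gives $\varphi\leftrightarrow\varphi^{\boxdot}$ in that logic, so $\varphi\in\Lambda\oplus\setof{\Box p\to p}$.

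For item \ref{item:invequivt}, assume $\Lambda\subseteq\boxdot^{-1}(\Lambda)$. By items \ref{item:invntl} and \ref{item:invntlref}, $\boxdot^{-1}(\Lambda)$ is a normal tense logic containing $\Lambda\cup\setof{\Box p\to p}$. Therefore $\Lambda\oplus\setof{\Box p\to p}\subseteq\boxdot^{-1}(\Lambda)$, and item \ref{item:invntlbelowt} supplies the reverse inclusion.
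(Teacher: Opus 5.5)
Your proposal is correct and follows the paper's proof essentially step by step. It uses the same induction for item 1 with $\bbox p \to p$ obtained from Proposition~\ref{prop:mirror}, the same reduction of $(p \to \Box\bdia p)^{\boxdot}$ to $p \to ((p \lor \bdia p) \land \Box(p \lor \bdia p))$ for item 2, and the same short arguments for items 3--5, while also verifying details the paper leaves implicit: closure under $(\mathtt{UP})$ via $(\varphi\sigma)^{\boxdot} = \varphi^{\boxdot}\sigma^{\boxdot}$, the remaining axioms and rules, and the use of item 2 in item 5.
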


\begin{proof}
\begin{enumerate}
    \item We proceed by induction on $\varphi$. We only comment on the case where $\varphi$ is of the form $\bbox \psi$. We establish $\bbox \psi \leftrightarrow (\psi^{\boxdot} \land \bbox (\psi^{\boxdot}))$. By the induction hypothesis, it suffices to establish that $\bbox \psi \to \psi \in \Lambda$, which is obtained from item \ref{item:mirrorref} of Proposition \ref{prop:mirror}.  
    \item We only check axiom $(\Box \bdia)$, i.e., $p \to \Box \bdia p \in \boxdot^{-1}(\Lambda)$. 
    We show that $(p \to \Box \bdia p)^{\boxdot} \in \Lambda$. 
    We note that $(p \to \Box \bdia p)^{\boxdot}$ $\leftrightarrow (p \to (p \lor \bdia p)  \land  \Box (p \lor \bdia p)) \in \Lambda$. 
    But $(p \to (p \lor \bdia p)  \land  \Box (p \lor \bdia p)) \in \Lambda$ by axiom $(\Box \bdia)$.    
    \item It suffices to prove that $(\Box p \to p)^{\boxdot} \in \Lambda$. But $(\Box p \to p)^{\boxdot}$ := $(\Box p \land p) \to p$ is an instance of tautologies, which is in $\Lambda$.    
    \item Fix any formula $\varphi$ such that $\varphi^{\boxdot} \in \Lambda$. We show that $\varphi \in \Lambda \oplus \setof{\Box p \to p}$.  
    By item \ref{item:equiv}, we obtain $\varphi \leftrightarrow \varphi^{\boxdot} \in \Lambda \oplus \setof{\Box p \to p}$. 
    So it suffices to show that $\varphi^{\boxdot} \in \Lambda \oplus \setof{\Box p \to p}$. 
    But this is clear from $\varphi^{\boxdot} \in \Lambda$. 
    \item Suppose that $\Lambda \subseteq \boxdot^{-1}(\Lambda)$. 
    By item \ref{item:invntlref} and the supposition, we obtain $\Lambda \oplus \setof{\Box p \to p}\subseteq \boxdot^{-1}(\Lambda)$. 
    It follows from item \ref{item:invntlbelowt} that $\boxdot^{-1}(\Lambda) = \Lambda \oplus \setof{\Box p \to p}$, as desired. 
    \qedhere
\end{enumerate}
\end{proof}

The following provides an extension of Kurahashi's result~\cite[Proposition 3]{Kurahashi2020} on the boxdot translation from normal modal logics to normal tense logics.

\begin{proposition}
\label{prop:refuip}
Let $\Lambda$ be a normal tense logic such that $\Lambda \subseteq \boxdot^{-1}(\Lambda)$. 
If $\Lambda$ has the uniform interpolation property, then $\boxdot^{-1}(\Lambda)$ $(= \Lambda \oplus \setof{\Box p \to p})$ also has the uniform interpolation property. 
\end{proposition}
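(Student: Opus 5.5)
The natural candidate for a uniform interpolant of $(\varphi,p)$ in $\boxdot^{-1}(\Lambda)$ is $\Exi{p}(\varphi^{\boxdot})$, the uniform interpolant of $(\varphi^{\boxdot},p)$ in $\Lambda$, which exists by hypothesis. The plan is to verify the three conditions for this formula, using Proposition \ref{prop:invboxdot} to move between $\Lambda$ and $\boxdot^{-1}(\Lambda)$. Throughout we rely on the assumption $\Lambda \subseteq \boxdot^{-1}(\Lambda)$, which by item \ref{item:invequivt} of Proposition \ref{prop:invboxdot} gives $\boxdot^{-1}(\Lambda) = \Lambda \oplus \setof{\Box p \to p}$. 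In particular, $\Lambda \subseteq \boxdot^{-1}(\Lambda)$, and by items \ref{item:equiv} and \ref{item:invntlref} every formula is equivalent to its boxdot translation in $\boxdot^{-1}(\Lambda)$.

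\emph{Variable.} The translation $(\cdot)^{\boxdot}$ does not change propositional variables, so $\mathsf{P}(\Exi{p}(\varphi^{\boxdot})) \subseteq \mathsf{P}(\varphi^{\boxdot})\setminus\setof{p} = \mathsf{P}(\varphi)\setminus\setof{p}$.

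\emph{Derivability.} We have $\varphi^{\boxdot} \to \Exi{p}(\varphi^{\boxdot}) \in \Lambda \subseteq \boxdot^{-1}(\Lambda)$. Since $\varphi \leftrightarrow \varphi^{\boxdot} \in \boxdot^{-1}(\Lambda)$ by items \ref{item:equiv} and \ref{item:invntlref}, it follows that $\varphi \to \Exi{p}(\varphi^{\boxdot}) \in \boxdot^{-1}(\Lambda)$.

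\emph{Uniformity.} Let $\psi$ be $p$-free with $\varphi\to\psi \in \boxdot^{-1}(\Lambda)$. By definition of $\boxdot^{-1}$, $(\varphi\to\psi)^{\boxdot} = \varphi^{\boxdot}\to\psi^{\boxdot} \in \Lambda$. Since $\psi^{\boxdot}$ is $p$-free, uniformity in $\Lambda$ gives $\Exi{p}(\varphi^{\boxdot}) \to \psi^{\boxdot} \in \Lambda \subseteq \boxdot^{-1}(\Lambda)$. Now $\psi \leftrightarrow \psi^{\boxdot} \in \boxdot^{-1}(\Lambda)$, hence $\Exi{p}(\varphi^{\boxdot}) \to \psi \in \boxdot^{-1}(\Lambda)$.

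The main obstacle I expect is exactly this last step. We cannot simply pass from $\Exi{p}(\varphi^{\boxdot})\to\psi^{\boxdot}\in\Lambda$ to $(\Exi{p}(\varphi^{\boxdot})\to\psi)^{\boxdot} \in \Lambda$, because the interpolant itself is not boxdot-invariant. Instead, the argument has to go through $\Lambda\subseteq\boxdot^{-1}(\Lambda)$ together with reflexivity inside $\boxdot^{-1}(\Lambda)$, which is where the hypothesis $\Lambda \subseteq \boxdot^{-1}(\Lambda)$ is essential. It remains to check that item \ref{item:equiv} applies to the normal tense logic $\boxdot^{-1}(\Lambda)$; this holds by items \ref{item:invntl} and \ref{item:invntlref}.
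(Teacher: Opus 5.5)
Your proposal is correct and matches the paper's proof essentially step by step. The paper also takes the $\Lambda$-uniform interpolant of $(\varphi^{\boxdot},p)$ as the candidate. It then verifies derivability and uniformity by transferring from $\Lambda$ to $\boxdot^{-1}(\Lambda)$ via $\Lambda \subseteq \boxdot^{-1}(\Lambda)$, together with the equivalence $\rho \leftrightarrow \rho^{\boxdot} \in \boxdot^{-1}(\Lambda)$ obtained from items 1--3 of Proposition~\ref{prop:invboxdot}.
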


\begin{proof}
Let $\Lambda$ be a normal tense logic such that $\Lambda \subseteq \boxdot^{-1}(\Lambda)$ and that $\Lambda$ has the uniform interpolation property. 
By item \ref{item:invequivt} of Proposition \ref{prop:invboxdot}, we obtain $\boxdot^{-1}(\Lambda) = \Lambda \oplus \setof{\Box p \to p}$. 
For the uniform interpolation property of $\boxdot^{-1}(\Lambda)$, let us fix any $(\varphi, p) \in \mathsf{Form} \times \mathsf{Prop}$.
Let $\Exi{p}\varphi^{\boxdot}$ be a uniform interpolant for $(\varphi^{\boxdot},p)$ in $\Lambda$. We show that $\Exi{p}\varphi^{\boxdot}$ is a uniform interpolant for $(\varphi,p)$ in $\boxdot^{-1}(\Lambda)$. Since the boxdot translation does not change the set of propositional variables in a formula, $\mathsf{P}(\Exi{p}\varphi^{\boxdot}) \subseteq \mathsf{P}
(\varphi^{\boxdot}) \setminus \setof{p}$ = $\mathsf{P}
(\varphi) \setminus \setof{p}$.
So $(\mathtt{variable})$ condition holds. 
For $(\mathtt{derivability})$, 
we prove that $\varphi \to \Exi{p}\varphi^{\boxdot} \in \boxdot^{-1}(\Lambda)$. 
Since $\varphi^{\boxdot} \to \Exi{p}\varphi^{\boxdot} \in \Lambda$ and $\varphi \leftrightarrow \varphi^{\boxdot} \in \boxdot^{-1}(\Lambda)$ (by items \ref{item:equiv}, \ref{item:invntl} and \ref{item:invntlref} of Proposition \ref{prop:invboxdot}), $\Lambda \subseteq \boxdot^{-1}(\Lambda)$ implies that 
$\varphi \to \Exi{p}\varphi^{\boxdot} \in \boxdot^{-1}(\Lambda)$, as required. 
For $(\mathtt{uniformity})$, fix any $p$-free formula $\psi$ with $\varphi \to \psi \in  \boxdot^{-1}(\Lambda)$. 
We show that $\Exi{p}\varphi^{\boxdot} \to \psi \in  \boxdot^{-1}(\Lambda)$.
By $\varphi \to \psi \in  \boxdot^{-1}(\Lambda)$, we obtain $\varphi^{\boxdot} \to \psi^{\boxdot} \in \Lambda$ by definition. 
Since $\psi^{\boxdot}$ is $p$-free, we obtain $\Exi{p}\varphi^{\boxdot} \to \psi^{\boxdot} \in \Lambda$. 
By $\Lambda \subseteq \boxdot^{-1}(\Lambda)$, it holds that $\Exi{p}\varphi^{\boxdot} \to \psi^{\boxdot} \in 
\boxdot^{-1}(\Lambda)$. 
Since $\rho \leftrightarrow \rho^{\boxdot} \in \boxdot^{-1}(\Lambda)$ for all formulas $\rho$, we obtain $\psi \leftrightarrow \psi^{\boxdot} \in  \boxdot^{-1}(\Lambda)$. 
We conclude that $\Exi{p}\varphi^{\boxdot} \to \psi \in 
\boxdot^{-1}(\Lambda)$, as desired.
\end{proof}

By Proposition \ref{prop:refuip}, we obtain the following specific results on normal tense logics. 

\begin{corollary}
\label{cor:boxdot}
\begin{enumerate}
    \item $\mathbf{Kt} \oplus \mathtt{T}_{\Box}$ has the uniform interpolation property. 
    \item Let $\Delta \subseteq \setof{\mathtt{D}_{\Box}, \mathtt{D}_{\bbox}}$. 
    If $\mathbf{Kt}\oplus \mathtt{4}_{\Box} \oplus \Delta$ has the uniform interpolation property, then $\mathbf{Kt}\oplus \setof{\mathtt{4}_{\Box}, \mathtt{T}_{\Box}}$ also has the uniform interpolation property. 
    \end{enumerate}
\end{corollary}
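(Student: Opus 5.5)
For item 1 I will apply Proposition \ref{prop:refuip} with $\Lambda := \mathbf{Kt}$, and for item 2 with $\Lambda := \mathbf{Kt}\oplus \mathtt{4}_{\Box} \oplus \Delta$. In both cases the only thing to check is the hypothesis $\Lambda \subseteq \boxdot^{-1}(\Lambda)$, and then to identify $\Lambda \oplus \setof{\Box p \to p}$ with the target logic.

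The inclusion can be proved syntactically. By item \ref{item:invntl} of Proposition \ref{prop:invboxdot}, $\boxdot^{-1}(\Lambda)$ is a normal tense logic. If $\Lambda$ is presented as $\mathbf{Kt}\oplus\Gamma$, then $\Lambda$ is the smallest normal tense logic containing $\Gamma$. So it suffices to show $\gamma^{\boxdot} \in \Lambda$ for every extra axiom $\gamma\in\Gamma$.

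\emph{Item 1.} Here $\Gamma = \varnothing$, so $\mathbf{Kt} \subseteq \boxdot^{-1}(\mathbf{Kt})$ is immediate. Since $\mathbf{Kt}$ has the uniform interpolation property by Theorem \ref{thm:uipkt}, Proposition \ref{prop:refuip} yields the uniform interpolation property for $\boxdot^{-1}(\mathbf{Kt}) = \mathbf{Kt}\oplus\setof{\Box p \to p}$, which is $\mathbf{Kt}\oplus\mathtt{T}_{\Box}$.

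\emph{Item 2.} I check the translated axioms in turn.
\begin{itemize}
\item For $\mathtt{D}_{\Box}$: $(\dia\top)^{\boxdot}$ is $\neg(\neg\top \land \Box\neg\top)$, which is a tautology. The same holds for $(\bdia\top)^{\boxdot}$.
\item For $\mathtt{4}_{\Box}$: $(\Box p \to \Box\Box p)^{\boxdot}$ is $(p\land\Box p) \to \bigl((p \land \Box p) \land \Box(p \land \Box p)\bigr)$. It suffices to derive $\Box p \to \Box(p\land\Box p)$ in $\mathbf{Kt}\oplus\mathtt{4}_{\Box}$. This follows from $\Box p \to \Box\Box p$ and the normality of $\Box$, since $\Box p\land\Box\Box p \to \Box(p\land\Box p)$ holds.
\end{itemize}
Hence $\Lambda \subseteq \boxdot^{-1}(\Lambda)$, and Proposition \ref{prop:refuip} gives the uniform interpolation property for $\boxdot^{-1}(\Lambda) = \mathbf{Kt}\oplus\setof{\mathtt{4}_{\Box},\mathtt{T}_{\Box}}\oplus\Delta$.

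It remains to remove $\Delta$, i.e.\ to show that $\mathtt{D}_{\Box}$ and $\mathtt{D}_{\bbox}$ are already derivable from $\mathtt{T}_{\Box}$. The substitution instance $\Box\bot\to\bot$ of $\mathtt{T}_{\Box}$ gives $\dia\top$. By item \ref{item:mirrorref} of Proposition \ref{prop:mirror}, $\bbox p \to p$ is also in the logic, so $\bdia\top$ follows in the same way. Thus $\mathbf{Kt}\oplus\setof{\mathtt{4}_{\Box},\mathtt{T}_{\Box}}\oplus\Delta = \mathbf{Kt}\oplus\setof{\mathtt{4}_{\Box},\mathtt{T}_{\Box}}$, which finishes item 2.

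The only step requiring any real care is the derivation of $(\mathtt{4}_{\Box})^{\boxdot}$ inside $\Lambda$. Even that is a routine normal-modal-logic derivation, so I expect no genuine obstacle.
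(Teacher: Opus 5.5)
Your proposal is correct and follows the paper's route. The paper obtains both items directly from Proposition~\ref{prop:refuip}, using Theorem~\ref{thm:uipkt} for item 1, and you supply the checks it leaves implicit: the inclusion $\Lambda \subseteq \boxdot^{-1}(\Lambda)$ via the translated axioms $(\mathtt{4}_{\Box})^{\boxdot}$, $(\mathtt{D}_{\Box})^{\boxdot}$, $(\mathtt{D}_{\bbox})^{\boxdot}$, and the redundancy of $\Delta$ once $\mathtt{T}_{\Box}$ (and hence $\bbox p \to p$) is present.
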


\section{Failure of Uniform Interpolation Property of Tense $\mathbf{S4}$}
\label{sec:uipfailureS4t}

This section demonstrates that the tense expansion of modal logic $\mathbf{S4}$ lacks the uniform interpolation property. To establish this, we first characterize the uniform interpolation property for the normal tense logic of a frame class in terms of the bisimulation expansion property, under the assumption that the logic of the frame class is compact and the frame class satisfies the following Hennessy-Milner property. 

\begin{definition}
A frame class $\mathbb{F}$ has the \emph{Hennessy-Milner property} if, for every two pointed models $(M_{1},w_{1})$ and $(M_{2}, w_{2})$ from $\mathbb{F}$, $(M_{1},w_{1})  \moeq^{\mathsf{P}} (M_{2},w_{2})$ implies 
$(M_{1},w_{1})  \bisim^{\mathsf{P}} (M_{2},w_{2})$ for every $\mathsf{P} \subseteq \mathsf{Prop}$. 
\end{definition}

It is easy to see that a class of finite frames always has the Hennessy-Milner property. 

\begin{definition}
\label{dfn:compact}
Given a class of frames $\mathbb{F}$, we say that the normal tense logic 
$\Lambda_{\mathbb{F}}$ is \emph{compact} if, whenever every finite subset 
$\Gamma' \subseteq \Gamma$ is satisfiable in $\mathbb{F}$, the set $\Gamma$ 
itself is also satisfiable in $\mathbb{F}$.
\end{definition}


\begin{proposition}
\label{prop:uip2bep}
Let $\mathbb{F}$ be a frame class satisfying the Hennessy-Milner property such that $\Lambda_{\mathbb{F}}$ is compact. The frame class $\Lambda_{\mathbb{F}}$ has the uniform interpolation property iff $\mathbb{F}$ has the bisimulation expansion property. 
\end{proposition}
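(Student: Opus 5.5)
The right-to-left direction is Lemma \ref{lem:bep2uip}, so I only need to show that the uniform interpolation property of $\Lambda_{\mathbb{F}}$ yields the bisimulation expansion property of $\mathbb{F}$. The plan is to build the intermediate model $(N,x)$ by compactness, obtaining it first as a model satisfying the right theory and then using the Hennessy-Milner property to turn formula equivalence into bisimilarity.

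Fix $\mathsf{P}_{1}$ finite, $\mathsf{P}_{2}$ arbitrary, $n\in\omega$, and pointed models $(M_{1},w_{1})$ and $(M_{2},w_{2})$ from $\mathbb{F}$ with $(M_{1},w_{1})\bisim_{n}^{\mathsf{P}_{1}\cap\mathsf{P}_{2}}(M_{2},w_{2})$. Put $\chi:=\chi^{\mathsf{P}_{1};n}_{(M_{1},w_{1})}$, which lies in $\mathsf{Form}(\mathsf{P}_{1};n)$, and let $T:=\inset{\psi\in\mathsf{Form}(\mathsf{P}_{2})}{M_{2},w_{2}\models\psi}$. The goal is to show that $\setof{\chi}\cup T$ is satisfiable in $\mathbb{F}$. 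Given a satisfying pointed model $(N,x)$ from $\mathbb{F}$, we get $N,x\models\chi$, so Proposition \ref{prop:nbismnmoeq} gives $(M_{1},w_{1})\bisim_{n}^{\mathsf{P}_{1}}(N,x)$. Since $T$ is a complete $\mathsf{P}_{2}$-theory, we also get $(N,x)\moeq^{\mathsf{P}_{2}}(M_{2},w_{2})$, and the Hennessy-Milner property upgrades this to $(N,x)\bisim^{\mathsf{P}_{2}}(M_{2},w_{2})$.

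By compactness it suffices to satisfy $\chi\wedge\psi$ for a single $\psi\in T$, since $T$ is closed under finite conjunctions. Suppose for contradiction that $\mathbb{F}\models\chi\to\neg\psi$. Let $\mathsf{Q}:=\mathsf{P}_{1}\setminus\mathsf{P}_{2}$, which is finite; enumerate it as $q_{1},\dots,q_{k}$ and iterate the uniform interpolants to form $\theta:=\Exi{q_{1}}\cdots\Exi{q_{k}}\chi$. Here $\neg\psi$ may contain variables of $\mathsf{P}_{2}$, but it is $\mathsf{Q}$-free, so repeated use of (\texttt{uniformity}) gives $\mathbb{F}\models\theta\to\neg\psi$. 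Repeated (\texttt{derivability}) gives $\mathbb{F}\models\chi\to\theta$, so $M_{1},w_{1}\models\theta$, and by (\texttt{variable}) $\theta\in\mathsf{Form}(\mathsf{P}_{1}\cap\mathsf{P}_{2})$.

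The main obstacle is the modal depth of $\theta$. To transfer $\theta$ from $(M_{1},w_{1})$ to $(M_{2},w_{2})$ we can only use $n$-bisimilarity, but the uniform interpolant given abstractly by the property need not have depth $\le n$. I would fix this by replacing $\theta$ with a depth-bounded formula. Let $\theta':=\bigvee\inset{\rho\in\Phi(\mathsf{P}_{1}\cap\mathsf{P}_{2};n)}{\rho \text{ a characteristic formula consistent with } \chi}$, the disjunction of the $(\mathsf{P}_{1}\cap\mathsf{P}_{2};n)$-types compatible with $\chi$. Then $\mathbb{F}\models\chi\to\theta'$ and $\theta'$ has depth $\le n$. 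It remains to show $\mathbb{F}\models\theta'\to\neg\psi$, which reduces to showing that every such compatible type implies $\neg\psi$. This is exactly the statement that the interpolant can be taken of depth $n$. I expect to prove it by applying (\texttt{uniformity}) to $\theta$ together with each type $\rho$: since $\mathbb{F}\models\theta\to\neg\psi$, any model of $\rho$ that also satisfies $\theta$ satisfies $\neg\psi$. The hard point is arguing that each compatible type $\rho$ implies $\theta$. If this refinement fails, I would instead pass through the Hennessy-Milner property and compactness on the $M_{2}$ side directly.

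Granting the depth bound, $M_{2},w_{2}\models\theta'$ by Proposition \ref{prop:nbismnmoeq}, hence $M_{2},w_{2}\models\neg\psi$. This contradicts $\psi\in T$.
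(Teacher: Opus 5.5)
Your outline is the paper's own: iterate the uniform interpolant $\theta=\Exi{q_{1}}\cdots\Exi{q_{k}}\chi$ of $\chi=\chi^{\mathsf{P}_{1};n}_{(M_{1},w_{1})}$, use compactness against the complete $\mathsf{P}_{2}$-theory $T$ of $(M_{2},w_{2})$, then apply Proposition~\ref{prop:nbismnmoeq} and the Hennessy--Milner property. All of those steps are correct. The proof is nevertheless incomplete, because the step you grant is the one that carries the content.

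Passing from $M_{1},w_{1}\models\theta$ to $M_{2},w_{2}\models\theta$ using only $(M_{1},w_{1})\bisim_{n}^{\mathsf{P}_{1}\cap\mathsf{P}_{2}}(M_{2},w_{2})$ needs $\mathtt{d}(\theta)\leqslant n$. The definition of the uniform interpolation property says nothing about depth. Your suggested repair does not get around this. Since $\theta'$ is a $\mathsf{Q}$-free consequence of $\chi$, (\texttt{uniformity}) already gives $\mathbb{F}\models\theta\to\theta'$. So ``every compatible type implies $\theta$'' is just $\theta\equiv\theta'$, i.e.\ $\theta$ is equivalent to a depth-$n$ formula. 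Moreover, (\texttt{uniformity}) only ever yields implications \emph{out of} $\theta$, never into it, so it cannot give $\rho\to\theta$ for a $\mathsf{Q}$-free $\rho$. Your remark that a model of $\rho$ satisfying $\theta$ satisfies $\neg\psi$ presupposes $\theta$ rather than deriving it.

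In fact $M_{2},w_{2}\models\theta$ is not an auxiliary lemma at all; it is equivalent to the instance being proved. If some $(N,x)$ as in Definition~\ref{dfn:bep} exists, then $N,x\models\chi$, hence $N,x\models\theta$, hence $M_{2},w_{2}\models\theta$ since $\mathsf{P}(\theta)\subseteq\mathsf{P}_{2}$. Conversely, your compactness argument turns $M_{2},w_{2}\models\theta$ into such an $(N,x)$. The unspecified fallback ``via Hennessy--Milner and compactness on the $M_{2}$ side'' cannot help either, because the only link between the two models is depth-$n$ information.

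For comparison, the paper's proof follows exactly your route. At this point it simply asserts $M_{2},w_{2}\models\theta$ from $\mathbb{F}\models\chi\to\theta$, $M_{1},w_{1}\models\chi$ and $n$-equivalence, tacitly assuming $\mathtt{d}(\theta)\leqslant n$. So you have found a real lacuna, but neither you nor the paper fills it.

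What your argument does establish is the following:
\begin{itemize}
\item Under Hennessy--Milner and compactness, the bisimulation expansion property is equivalent to \emph{depth-preserving} uniform interpolation, i.e.\ interpolants with $\mathtt{d}(\Exi{p}\varphi)\leqslant\mathtt{d}(\varphi)$. These are what Lemma~\ref{lem:bep2uip} produces, and uniform interpolants are unique up to $\Lambda_{\mathbb{F}}$-equivalence.
\item Alternatively, uniform interpolation implies a weakened expansion property whose hypothesis is $\bisim_{m}^{\mathsf{P}_{1}\cap\mathsf{P}_{2}}$, where $m$ is the maximal depth of the finitely many possible $\theta$'s for given $(\mathsf{P}_{1},n)$. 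This weakened property still suffices for Lemma~\ref{lem:bep2uip}.
\end{itemize}

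For the paper's downstream use (Corollary~\ref{cor:hmuip2bisimquan}) the issue can be avoided entirely. Run your compactness argument with $\varphi$ in place of $\chi$ and the full $(\mathsf{Prop}\setminus\setof{p})$-theory of $(M,w)$ in place of $T$. There $M,w\models\Exi{p}\varphi$ is given outright, so no depth bound is needed.
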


\begin{proof}
    Let $\mathbb{F}$ be a frame class such that it has the Hennessy-Milner property. 
    The right-to-left direction is already established in Lemma~\ref{lem:bep2uip}. 
    So, we establish the left-to-right direction. 
    Assume that $\Lambda_{\mathbb{F}}$ has the uniform interpolation property. 
    To establish the bisimulation expansion property of $\mathbb{F}$, 
    let us fix any subsets $\mathsf{P}_{1}, \mathsf{P}_{2} \subseteq \mathsf{Prop}$ such that $\mathsf{P}_{1}$ is finite, any pointed models $(M_{1}, w_{1})$ and $(M_{2},w_{2})$ from $\mathbb{F}$ and any $n \in \omega$. 
    Suppose that $(M_{1},w_{1}) \moeq_{n}^{\mathsf{P}_{1} \cap \mathsf{P}_{2}} (M_
    {2},w_{2})$.  
    Put $\mathsf{P}_{1} \setminus \mathsf{P}_{2}$ := $\setof{p_{1},\ldots,p_{n}}$.
    Recall $\chi_{(M_{1},w_{1})}^{\mathsf{P}_{1};n}$ := $\bigwedge \inset{\chi \in \Phi (\mathsf{P}_{1};n)}{M_{1},w_{1} \models \chi}$. 
    By our supposition, we can construct a uniform interpolant 
    $\theta$ := $\Exi{p_{1}}\cdots \Exi{p_{n}} \chi_{(M_{1},w_{1})}^{\mathsf{P}_{1};n}$.     
    We have $\mathsf{P}\left(\theta \right)$ = $\mathsf{P} \left(\chi_{(M_{1},w_{1})}^{\mathsf{P}_{1};n}\right) \setminus \setof{p_{1},\ldots,p_{n}}$ = $\mathsf{P}_{1} \setminus (\mathsf{P}_{1} \setminus \mathsf{P}_{2})$ = $\mathsf{P}_{1} \cap \mathsf{P}_{2}$. 
    It follows from $\mathbb{F} \models \chi_{(M_{1},w_{1})}^{\mathsf{P}_{1};n} \to \theta$, 
    $(M_{1},w_{1}) \models \chi_{(M_{1},w_{1})}^{\mathsf{P}_{1};n}$ and
    $M_{1},w_{1} \moeq_{n}^{\mathsf{P}_{1} \cap \mathsf{P}_{2}} (M_
    {2},w_{2})$ that 
    $M_{2},w_{2} \models \theta$. 
    Define $\mathsf{Th}(M_
    {2}, w_{2})$ := $\inset{\rho \in \mathsf{Form}(\mathsf{P}_{2})}{M_{2},w_{2}\models \rho}$. 
    Then $\setof{\theta} \cup \mathsf{Th}(M_
    {2}, w_{2})$ is satisfiable in $\mathbb{F}$. 
    We can also establish that 
    $\setof{ \chi_{(M_{1},w_{1})}^{\mathsf{P}_{1};n} } \cup \mathsf{Th}(M_
    {2}, w_{2})$ is satisfiable in $\mathbb{F}$.
    Suppose otherwise. 
    Since $\mathsf{Th}(M_
    {2}, w_{2})$ is closed under conjunctions and $\Lambda_{\mathbb{F}}$ is compact, we can find a formula $\rho \in \mathsf{Th}(M_
    {2}, w_{2})$ such that  $\mathbb{F} \models \chi_{(M_{1},w_{1})}^{\mathsf{P}_{1};n} \to \neg\rho$ holds. 
    By $\mathsf{P} (\rho) \cap \setof{p_{1},\ldots,p_{n}}$ = $\varnothing$, 
    $\rho$ is $\setof{p_{1},\ldots,p_{n}}$-free. 
    So, we obtain 
    $\mathbb{F} \models \theta \to \neg \rho$, which means that $\setof{\theta} \cup \mathsf{Th}(M_{2},w_{2})$ is not satisfiable in $\mathbb{F}$. 
    However, $\setof{\theta} \cup \mathsf{Th}(M_{2},w_{2})$ is satisfiable in $\mathbb{F}$. We obtain a desired contradiction. 
    By the satisfiability of 
        $\setof{ \chi_{(M_{1},w_{1})}^{\mathsf{P}_{1};n} } \cup \mathsf{Th}(M_
    {2}, w_{2})$ in $\mathbb{F}$, we can find a pointed model $(N,x)$ from $\mathbb{F}$ such that $N,x \models \chi_{(M_{1},w_{1})}^{\mathsf{P}_{1};n}$ and $N,x \models \mathsf{Th}(M_
    {2}, w_{2})$. 
    It follows that $(M_{1},w_{1}) \bisim_{n}^{\mathsf{P}_{1}} (N,x)$ and $(N,x) 
    \moeq^{\mathsf{P}_{2}} (M_{2},w_{2})$ hence $(N,x) 
    \bisim^{\mathsf{P}_{2}} (M_{2},w_{2})$ by the Hennessy-Milner property of $\mathbb{F}$. 
    This finishes showing the bisimulation expansion property of $\mathbb{F}$. 
\end{proof}

By Lemma \ref{lem:bep2uip}, Corollary \ref{cor:bisimquan} and Proposition \ref{prop:uip2bep}, we can also obtain the following. 

\begin{corollary}
\label{cor:hmuip2bisimquan}
Let $\mathbb{F}$ be a frame class satisfying the Hennessy-Milner property such that $\Lambda_{\mathbb{F}}$ is compact. Then if $\Lambda_{\mathbb{F}}$ has the uniform interpolation property, the uniform interpolant $\Exi{p}\varphi$ for $(\varphi,p) \in \mathsf{Form} \times \mathsf{Prop}$ in $\Lambda_{\mathbb{F}}$ is a bisimulation quantifier in the sense of Corollary \ref{cor:bisimquan}.
\end{corollary}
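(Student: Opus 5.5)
The plan is to chain the three earlier results, since the corollary is essentially their composition.

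Assume $\mathbb{F}$ has the Hennessy-Milner property, $\Lambda_{\mathbb{F}}$ is compact, and $\Lambda_{\mathbb{F}}$ has the uniform interpolation property. By the left-to-right direction of Proposition \ref{prop:uip2bep}, $\mathbb{F}$ then has the bisimulation expansion property. With the bisimulation expansion property available, Corollary \ref{cor:bisimquan} applies directly. It yields, for every $(M,w)$ from $\mathbb{F}$, that $M,w \models \Exi{p}\varphi$ iff there is $(N,v)$ from $\mathbb{F}$ with $(M,w) \bisim^{\mathsf{Prop}\setminus\setof{p}} (N,v)$ and $N,v\models\varphi$.

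One subtlety needs checking: the uniform interpolant supplied by the assumed property must be the one for which the equivalence holds. Inspecting the proof of Corollary \ref{cor:bisimquan}, it uses only the three defining conditions (\texttt{variable}), (\texttt{derivability}) and (\texttt{uniformity}) of an arbitrary uniform interpolant, plus the bisimulation expansion property. Lemma \ref{lem:bep2uip} enters that proof only to guarantee that some interpolant exists. Hence the equivalence holds for every uniform interpolant of $(\varphi,p)$ in $\Lambda_{\mathbb{F}}$. This is consistent with uniform interpolants being unique up to $\Lambda_{\mathbb{F}}$-equivalence, since each pair of them entail one another by (\texttt{uniformity}).

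The only step that needs real care is the application of Proposition \ref{prop:uip2bep}. Its proof starts from the hypothesis $(M_{1},w_{1}) \moeq_{n}^{\mathsf{P}_{1}\cap\mathsf{P}_{2}} (M_{2},w_{2})$, whereas Definition \ref{dfn:bep} is phrased with $\bisim_{n}$. These agree by Proposition \ref{prop:nbismnmoeq} because $\mathsf{P}_{1}\cap\mathsf{P}_{2}$ is finite. The proposition also uses an arbitrary, possibly infinite, $\mathsf{P}_{2}$ through the Hennessy-Milner property. That is exactly the generality needed to feed $\mathsf{P}_{2}=\mathsf{Prop}\setminus\setof{p}$ into Corollary \ref{cor:bisimquan}, so the pieces fit.
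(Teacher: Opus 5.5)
Your proposal is correct and is exactly the paper's argument: Proposition~\ref{prop:uip2bep} yields the bisimulation expansion property, and Corollary~\ref{cor:bisimquan} then gives the clause. Your remark that the proof of Corollary~\ref{cor:bisimquan} uses only (\texttt{variable}), (\texttt{derivability}) and (\texttt{uniformity}), and so covers every uniform interpolant, is a correct precision that the paper leaves implicit.

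As an aside, you can skip the bisimulation expansion property entirely. This also avoids the step in the proof of Proposition~\ref{prop:uip2bep} that moves $\theta$ from $(M_{1},w_{1})$ to $(M_{2},w_{2})$ along $\moeq_{n}^{\mathsf{P}_{1}\cap\mathsf{P}_{2}}$; that step tacitly assumes $\mathtt{d}(\theta)\leqslant n$, which the uniform interpolation property does not provide. The direct argument runs as follows. If $M,w\models\Exi{p}\varphi$, then (\texttt{uniformity}) shows that every finite part of $\setof{\varphi}\cup\inset{\rho\in\mathsf{Form}(\mathsf{Prop}\setminus\setof{p})}{M,w\models\rho}$ is satisfiable in $\mathbb{F}$. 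Compactness then makes the whole set satisfiable at some $(N,v)$ from $\mathbb{F}$. Finally, the Hennessy-Milner property turns $(N,v)\moeq^{\mathsf{Prop}\setminus\setof{p}}(M,w)$ into the required bisimilarity.
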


\begin{definition}
Let $\mathbf{S4t}$ be the smallest normal tense logic that contains $\mathtt{T}_{\Box}$ and $\mathtt{4}_{\Box}$ in Table \ref{table:framedef}. Let $\mathbb{F}^{\mathtt{fin}}_{\mathbf{S4}}$ be the class of all reflexive and transitive finite frames $(W,R)$, i.e., all quasi-orders $(W,R)$. 
\end{definition}

\noindent The following can be readily shown via the canonical model construction and 
filtration technique. The finite model property of $\mathbf{S4t}$ was also established in~\cite[Theorem 3.6]{Sano2020a} using an analytic sequent calculus.  

\begin{proposition}
\label{prop:S4t}
The normal tense logic $\mathbf{S4t}$ is sound and complete for 
the class $\mathbb{F}^{\mathtt{fin}}_{\mathbf{S4}}$ of all finite reflexive and transitive frames, i.e., $\mathbf{S4t}$ = $\Lambda_{\mathbb{F}^{\mathtt{fin}}_{\mathbf{S4}}}$. 
Moreover, $\mathbf{S4t}$ = $\Lambda_{\mathbb{F}^{\mathtt{fin}}_{\mathbf{S4}}}$ is compact. 
\end{proposition}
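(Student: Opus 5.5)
The plan has two parts: completeness of $\mathbf{S4t}$ for $\mathbb{F}^{\mathtt{fin}}_{\mathbf{S4}}$, and the compactness claim. Soundness is routine. By item \ref{item:mirrorref} of Proposition \ref{prop:mirror}, $\mathtt{T}_{\Box}$ and $\mathtt{4}_{\Box}$ are valid on a frame iff $R$ is reflexive and transitive, which holds on every quasi-order. Since every axiom of $\mathsf{H}(\mathbf{Kt})$ is valid on all frames and the rules preserve validity, $\mathbf{S4t} \subseteq \Lambda_{\mathbb{F}^{\mathtt{fin}}_{\mathbf{S4}}}$.

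For completeness, I would first build the canonical model $M^{c}=(W^{c},R^{c},V^{c})$ of $\mathbf{S4t}$. Its states are maximal $\mathbf{S4t}$-consistent sets, with $\Gamma R^{c}\Delta$ iff $\setof{\psi}{\Box\psi\in\Gamma}\subseteq\Delta$. The standard tense argument uses $(\Box\bdia)$ and $(\bbox\dia)$ to show that this is equivalent to $\setof{\psi}{\bbox\psi\in\Delta}\subseteq\Gamma$, so a single relation interprets both modalities and the truth lemma holds for $\Box$ and $\bbox$. Because $\mathtt{T}_{\Box}$ and $\mathtt{4}_{\Box}$ are canonical, $R^{c}$ is a quasi-order. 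Hence any $\mathbf{S4t}$-consistent formula $\varphi$ is satisfied at some point of a reflexive transitive, possibly infinite, model.

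Next I would filtrate through $\Sigma$, the closure of $\setof{\varphi}$ under subformulas and single negations; this is finite. On the finite quotient $W^{c}/\!\equiv_{\Sigma}$ I use the \emph{two-sided transitive filtration}: $[\Gamma]R^{f}[\Delta]$ iff, for every $\Box\psi\in\Sigma$, $\Box\psi\in\Gamma$ implies $\Box\psi\wedge\psi\in\Delta$, and, for every $\bbox\psi\in\Sigma$, $\bbox\psi\in\Delta$ implies $\bbox\psi\wedge\psi\in\Gamma$. This relation is clearly reflexive and transitive, and it contains the image of $R^{c}$ because $\mathtt{4}_{\Box}$ and its mirror $\bbox p\to\bbox\bbox p$ (item \ref{item:mirrorref} of Proposition \ref{prop:mirror}) are in $\mathbf{S4t}$.

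The main obstacle, and the step to check with most care, is the filtration lemma for both $\Box$ and $\bbox$ under this symmetric definition. The point to verify is the following. Suppose $\Box\psi\notin\Gamma$. By the truth lemma there is some $R^{c}$-successor $\Delta$ with $\psi\notin\Delta$, and the clause relating $\Gamma$ to $\Delta$ holds by the inclusion of the image of $R^{c}$ just noted. Conversely, when $[\Gamma]R^{f}[\Delta]$ and $\Box\psi\in\Gamma$, the definition puts $\psi$ into $\Delta$ directly. The case of $\bbox$ is the mirror of this. With the filtration lemma in hand, $\varphi$ is satisfied in a finite quasi-order, which gives $\mathbf{S4t}=\Lambda_{\mathbb{F}^{\mathtt{fin}}_{\mathbf{S4}}}$. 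The finite model property established by the sequent calculus in~\cite[Theorem 3.6]{Sano2020a} gives an alternative route to the same equality.

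For compactness, the argument I can actually carry out is with respect to reflexive transitive frames of arbitrary size. If every finite subset of $\Gamma$ is satisfiable, then every finite subset is $\mathbf{S4t}$-consistent, so $\Gamma$ is $\mathbf{S4t}$-consistent. Lindenbaum's lemma extends $\Gamma$ to a point of $M^{c}$, and $\Gamma$ is satisfied there. I have to flag a genuine gap, however. Taken literally with $\mathbb{F}=\mathbb{F}^{\mathtt{fin}}_{\mathbf{S4}}$ as in Definition~\ref{dfn:compact}, the compactness claim appears to fail. Consider
\[
\Gamma=\setof{\dia p_{i}}_{i\in\omega}\cup\setof{\Box\neg(p_{i}\wedge p_{j})}_{i<j}.
\]
Every finite subset of $\Gamma$ is satisfiable in a finite quasi-order, yet $\Gamma$ itself forces infinitely many distinct states. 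So I would prove compactness in the canonical-model sense described above, and for the later applications I would check that only this weaker form, together with the Hennessy-Milner property, is what Proposition~\ref{prop:uip2bep} actually needs.
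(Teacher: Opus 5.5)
Your soundness and completeness argument is correct and follows the route the paper indicates; the paper gives only the one-line pointer to ``canonical model construction and filtration''. The two-sided transitive filtration you define is a good way to carry this out. It is well defined on $\equiv_{\Sigma}$-classes. It is reflexive by $\mathtt{T}_{\Box}$ and $\bbox p\to p$, and transitive by construction. It contains the image of $R^{c}$ by $\mathtt{4}_{\Box}$ and $\bbox p\to\bbox\bbox p$. The filtration lemma for both $\Box$ and $\bbox$ goes through as you describe. One minor point: the link between $\mathtt{T}_{\Box}$, $\mathtt{4}_{\Box}$ and reflexivity, transitivity is standard frame correspondence, not item~\ref{item:mirrorref} of Proposition~\ref{prop:mirror}. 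That item is what you need for the past-tense mirrors.

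Your objection to the compactness clause is also correct, and it is a flaw in the statement, not in your proof. Definition~\ref{dfn:compact} makes compactness a property of the frame class. For $\mathbb{F}^{\mathtt{fin}}_{\mathbf{S4}}$ your $\Gamma$ refutes it. A finite subset mentioning $p_{0},\ldots,p_{k}$ holds at the root of a finite quasi-order with one extra point per index. But a point satisfying all of $\Gamma$ needs, for each $i$, a successor where $p_{i}$ is true and every other $p_{j}$ is false, so it needs infinitely many distinct successors. Filtration cannot repair this, because it only finitizes a single formula, that is, a finite $\Sigma$. Canonicity gives compactness for the class of all quasi-orders, exactly as you argue.

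Your closing hedge is not quite right, though. The class of all quasi-orders lacks the Hennessy--Milner property. The proof of Proposition~\ref{prop:uip2bep} applies that property to the model $(N,x)$ produced by compactness, so the weaker form does not plug into Proposition~\ref{prop:uip2bep} or Corollary~\ref{cor:hmuip2bisimquan} as stated.

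The left-to-right direction of Lemma~\ref{lem:bqaltS4t} can instead bypass them:
\begin{itemize}
\item Suppose a finite $(M,w)$ satisfies $\Exi{p}\Exi{q}\varphi$. Applying (\texttt{uniformity}) twice shows that $\varphi$ together with the $\setof{p,q}$-free theory of $(M,w)$ is $\mathbf{S4t}$-consistent.
\item So this set holds at some point of the canonical model.
\item The canonical model is saturated for $\dia$ and $\bdia$, and $M$ is finite. Hence modal equivalence over $\mathsf{Prop}\setminus\setof{p,q}$ between the two is a temporal bisimulation.
\item The $r$-alternating path that $\varphi$ forces in the reflexive transitive canonical model therefore transfers to $M$.
\end{itemize}
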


The lemma below extends Visser's theorem for the modal logic $\mathbf{S4}$~\cite[Theorem 16.3]{Visser1996preprint} to a tense-logic setting. The inclusion of the additional conditions $(\mathtt{Forth}^{-1})$ and $(\mathtt{Back}^{-1})$ does not alter the original line of reasoning. 
Given a model $M$ := $(W,R,V)$. we use $V[s:= X]$ to denote the same valuation as $V$ except that it sends a propositional variable $s$ to a set $X \subseteq W$, and we define $M[s:=X] := (W,R,V[s:=X])$.

\begin{lemma}
\label{lem:bqaltS4t}
Define $\varphi(p,q,r)$ as 
$\varphi(p,q,r) := p \land \Box (p \to \dia q) \land \Box(q \to \dia p) \land \Box (p \to r) \land \Box (q \to \neg r)$. Suppose that $\mathbf{S4t}$ has the uniform interpolation property. 
Let $(M,w)$ be a pointed model from $\mathbb{F}^{\mathtt{fin}}_{\mathbf{S4}}$, where $M$ $=$ $(W,R,V)$. Then $M,w \models \Exi{p}\Exi{q} \varphi (p,q,r)$ iff there exists an infinite $R$-sequence $(w_{n})_{n \in \omega} \subseteq W$ such that $w_{0}$ = $w$, 
$w_{k}Rw_{k+1}$, $M,w_{2k} \models r$ and $M,w_{2k + 1} \not\models r$ hold, for all $k \in \omega$.  
\end{lemma}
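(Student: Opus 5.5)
Since $\mathbb{F}^{\mathtt{fin}}_{\mathbf{S4}}$ is a class of finite frames, it has the Hennessy-Milner property, and by Proposition \ref{prop:S4t} $\mathbf{S4t} = \Lambda_{\mathbb{F}^{\mathtt{fin}}_{\mathbf{S4}}}$ is compact. Assuming $\mathbf{S4t}$ has the uniform interpolation property, Corollary \ref{cor:hmuip2bisimquan} applied twice (to $\Exi{q}\varphi$ and then to $\Exi{p}\Exi{q}\varphi$) yields the semantic clause. We have $M,w\models \Exi{p}\Exi{q}\varphi(p,q,r)$ iff there is a pointed model $(N,v)$ from $\mathbb{F}^{\mathtt{fin}}_{\mathbf{S4}}$ with $(M,w)\bisim^{\mathsf{Prop}\setminus\setof{p,q}}(N,v)$ and $N,v\models\varphi(p,q,r)$. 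Composing the two bisimulation steps is harmless because $\bisim$ is closed under relational composition. The plan is to prove both directions against this characterization.

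For the left-to-right direction, take $(N,v)$ with $N=(W',R',V')$ finite, reflexive and transitive, and $N,v\models\varphi$. We build a sequence in $N$: $v_0=v$ satisfies $p$. Whenever $v_{2k}$ satisfies $p$ and is $R'$-reachable from $v$, the conjunct $\Box(p\to\dia q)$ holds at $v$ and transitivity propagates it, so we find $v_{2k+1}$ with $v_{2k}R'v_{2k+1}$ and $N,v_{2k+1}\models q$. Symmetrically we find $v_{2k+2}$ satisfying $p$. By $\Box(p\to r)$ and $\Box(q\to\neg r)$, together with reflexivity at $v_0$, the even points satisfy $r$ and the odd points satisfy $\neg r$. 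Next we transport the sequence to $M$. Let $Z$ be the bisimulation with $wZv$; note $r\in\mathsf{Prop}\setminus\setof{p,q}$. By repeated use of $(\mathtt{Back})$ we obtain $w_0=w$ and $w_k Z v_k$ with $w_kRw_{k+1}$. The $(\mathtt{Atom})$ clause for $r$ then gives the required alternation in $M$.

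For the right-to-left direction, suppose the infinite alternating $R$-sequence exists in $M$. The natural choice is $N := M[p:=X][q:=Y]$, where $X$ and $Y$ are defined from the sequence. The identity relation is then a $\mathsf{Prop}\setminus\setof{p,q}$-bisimulation, so it suffices to arrange $N,w\models\varphi$. Since $W$ is finite, the sequence eventually cycles. We set $X := \inset{w_{2k}}{k\in\omega}$ and $Y := \inset{w_{2k+1}}{k\in\omega}$. Then $w\in X$. Every element of $X$ has an $R$-successor in $Y$ (namely its own next element), and every element of $Y$ has an $R$-successor in $X$; we use transitivity only to note that these successors lie wherever required. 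Moreover $X\subseteq V(r)$ and $Y\cap V(r)=\varnothing$, so $X\cap Y=\varnothing$. The boxed conjuncts quantify over all $R$-successors of $w$, but each conjunct is conditional on $p$ or $q$, so it only constrains points of $X$ or $Y$, and for those the required properties hold. Hence $N,w\models\varphi$.

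The main obstacle is making sure that the second step is legitimate. The bisimulation quantifier clause quantifies over models from $\mathbb{F}^{\mathtt{fin}}_{\mathbf{S4}}$, so we must check that changing the valuation keeps us in the class; it does, since the frame is unchanged. In the first direction, we must check that finiteness of $N$ is not an obstacle. It is not: the sequence is allowed to repeat states, and the argument never needs the $v_k$ to be distinct. The other steps are routine.
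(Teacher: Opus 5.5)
Your proof is correct and follows essentially the paper's route. You read $\Exi{p}\Exi{q}$ as iterated bisimulation quantifiers via Corollary~\ref{cor:hmuip2bisimquan}, transport the $r$-alternating path of the witness model back to $M$ along $(\mathtt{Back})$, and for the converse re-valuate $p$ and $q$ on the even and odd points of the given sequence. The one imprecision is that closure of $\bisim$ under composition justifies only the left-to-right half of your merged $\mathsf{Prop}\setminus\setof{p,q}$ clause. In the right-to-left direction you should therefore pass explicitly through the intermediate model $M[p:=X]$, as the paper does; this is immediate here.
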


\begin{proof}
Suppose that $\mathbf{S4t}$ has the uniform interpolation property. For the right-to-left direction, assume there exists an infinite $R$-sequence $(w_{n})_{n \in \omega} \subseteq W$ satisfying the stated conditions. 
Define $E := \inset{w_{2k}}{k \in \omega}$ and $O := \inset{w_{2k+1}}{k \in \omega}$. By $r \notin \setof{p,q}$, we have $M[p:=E][q:= O],w \models \varphi(p,q,r)$. Moreover, the following hold:
\[(M,w) \bisim^{\mathsf{Prop}\setminus\setof{p}} (M[p:= E],w) \bisim^{\mathsf{Prop}\setminus\setof{q}} (M[p:=E][q:= O],w).
\]
Therefore, we obtain $M,w \models \Exi{p}\Exi{q} \varphi (p,q,r)$ by Proposition \ref{prop:S4t} and Corollary \ref{cor:hmuip2bisimquan}.

For the left-to-right direction, suppose $M,w \models \Exi{p}\Exi{q} \varphi (p,q,r)$. By Proposition \ref{prop:S4t} and Corollary \ref{cor:hmuip2bisimquan}, there exist two pointed models $(M_{1},w_{1})$ and $(M_{2},w_{2})$ such that 
$(M,w) \bisim^{\mathsf{Prop}\setminus\setof{p}} (M_{1},w_{1}) \bisim^{\mathsf{Prop}\setminus\setof{q}} (M_{2},w_{2})$ and $M_{2},w_{2} \models \varphi (p,q,r)$. The latter implies that there exists an infinite $R_{2}$-path in $M_{2}$ along which the truth value of $r$ alternates at every step. By the transitivity of bisimulation, it follows that $(M,w) \bisim^{\mathsf{Prop}\setminus\setof{p,q}} (M_{2},w_{2})$. Since $r \notin \setof{p,q}$, the valuation of $r$ is preserved between $M$ and $M_{2}$. Therefore, by the $(\mathtt{Back})$ condition, we can construct a corresponding infinite $R$-path in $M$ along which $r$ alternates between being true and false at every step, as desired.
\end{proof}

\begin{figure}[htbp]
    \centering
\vspace{-0.5cm}
    \begin{small}
\[
\xygraph{!~:{@{<->}}
    \bullet ([]!{+(0,-.3)} {2n+2}) ([]!{+(0,0.3)} {r}) : [r]
    \bullet ([]!{+(0,-.3)} {2n+1}) -@{->} [r] 
    \bullet ([]!{+(0,-.3)} {2n}) 
    ([]!{+(0,0.3)} {r}) -@{->} [r]  \cdots  -@{->} [r]
    \bullet ([]!{+(0,-.3)} {2})([]!{+(0,0.3)} {r}) -@{->} [r]
    \bullet ([]!{+(0,-.3)} {1}) -@{->} [r]
    \bullet ([]!{+(0,-.3)} {0})
    ([]!{+(0,0.3)} {r})
    }
\]
\end{small}
\vspace{-0.5cm}
    \caption{A model $M$ in the proof of Lemma \ref{lem:undefalt}, with reflexive loops and transitive compositions omitted.}
    \label{fig:model2n+2}
\end{figure}

The following result extends Ghilardi and Zawadowski's~\cite{Ghilardi1995} theorem on the failure of the uniform interpolation property in $\mathbf{S4}$ to a tense-logical setting. Our proof builds upon the one given in~\cite[Theorem 16.4]{Visser1996preprint}, which provides an alternative proof of their theorem. The model used in our proof (see Figure~\ref{fig:model2n+2}) is identical to the one presented in~\cite[p.~269]{Ghilardi1995} and~\cite[Theorem 16.4]{Visser1996preprint}.
Furthermore, the addition of the conditions $(\mathtt{Forth}^{-1})$ and $(\mathtt{Back}^{-1})$ to the notion of (non-temporal) layered bisimulations (e.g., from~\cite{Visser1996}) remains consistent with the model construction in~\cite{Visser1996preprint}.

\begin{lemma}
\label{lem:undefalt}
There is no formula $\psi(r) \in \mathsf{Form}(\setof{r})$ such that, for all pointed models $(M,w)$ from $\mathbb{F}^{\mathtt{fin}}_{\mathbf{S4}}$ such that $M$ = $(W,R,V)$, the following equivalence holds: $M,w \models \psi(r)$ iff there exists an infinite $R$-sequence $(w_{n})_{n \in \omega} \subseteq W$ such that $w_{0}$ = $w$, $w_{k}Rw_{k+1}$, $M,w_{2k} \models r$ and $M,w_{2k + 1} \not\models r$ hold, for all $k \in \omega$. 
\end{lemma}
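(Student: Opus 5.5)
We argue by contradiction, using modal depth. Suppose such a $\psi(r)$ exists and put $n := \mathtt{d}(\psi)$. Consider the finite model $M$ of Figure~\ref{fig:model2n+2}. Its states are $W := \setof{0,1,\ldots,2n+2}$. The relation $R$ is the reflexive transitive closure of $i \to i-1$ for $1 \leq i \leq 2n+2$, together with the symmetric pair between $2n+2$ and $2n+1$. Thus $\setof{2n+1,2n+2}$ is a cluster at the top, and the chain descends to the endpoint $0$. Finally $V(r) := \setof{i}{i \text{ even}}$. The state $2n+2$ has an infinite alternating $R$-path: oscillate forever between $2n+2$ and $2n+1$. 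The state $2n$ has no such path, since every $R$-path from $2n$ stays in $\setof{0,\ldots,2n}$, and a strictly alternating path must move strictly downward in a finite chain. Hence $M,2n+2 \models \psi$ and $M,2n \not\models \psi$. The contradiction will come from showing $(M,2n+2) \bisim_{n}^{\setof{r}} (M,2n)$, using Proposition~\ref{prop:nbismnmoeq} and $\mathtt{d}(\psi)=n$.

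To build the layered bisimulation, I will compare states by their distance from the bottom. Define $Z$ by $(i,k,j) \in Z$ iff $i \equiv j \pmod 2$ and either $i=j$ or $\min(i,j) \geq 2k$. The intuition is that, within $k$ steps of looking forward or backward, two states of the same parity that are both high enough cannot be told apart. The bottom is at least $2k$ steps away, so descending $k$ times never reaches the point where the chain ends. In particular $(2n+2,n,2n) \in Z$.

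I then check (\texttt{Atom}), which holds by the parity condition, and the four back-and-forth clauses. Let $(i,k+1,j) \in Z$ with $i \neq j$, so $i,j \geq 2k+2$.
\begin{itemize}
\item For a forward move $i R i'$ (so $i' \leq i$, or $i'$ lies in the top cluster), I match it by moving from $j$ to a state $j'$ of the same parity as $i'$. If $i' = i$, take $j' = j$; otherwise take $j' = i'$ when that is $R$-reachable from $j$, and otherwise $j' = j-1$ or $j' = j$ chosen for the right parity. A response of the right parity at height at least $2k$ is always available, because $j \geq 2k+2$ and one can descend one step or stay put by reflexivity.
\item Backward moves go up the chain or into the top cluster. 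I match each by moving from $j$ up by the same amount, capped at the cluster, choosing the cluster element of the right parity.
\end{itemize}

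I expect the main obstacle to be the asymmetry of the top cluster. The state $2n+2$ can step to $2n+1$ and back, while a state in the middle of the chain cannot move up without a converse step. I must verify that every converse move available at $2n+2$ (into the cluster only) is matched from $2n$ with the right parity. For instance, the move $2n+2 \to 2n+1$ viewed as a predecessor must be matched by a predecessor of $2n$ that has odd parity, namely $2n+1$. Symmetrically, $2n$ has predecessors $2n+1$ and $2n+2$ that $2n+2$ also has. If the simple invariant above fails in some corner case, I will fall back on defining $Z$ via the invariant "same parity, and both $\geq 2k$ or equal" and prove the clauses by case analysis on whether the states lie in the cluster. The downward closure needed for Proposition~\ref{prop:nbismnmoeq} holds because the condition $\min(i,j) \geq 2k$ weakens as $k$ decreases.
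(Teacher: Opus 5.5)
Your proposal is correct and follows essentially the same route as the paper: the same finite reflexive–transitive model with top cluster $\{2n+1,2n+2\}$, and a parity-plus-height layered $\{r\}$-bisimulation relating $2n+2$ and $2n$, which contradicts $M,2n+2\models\psi$ and $M,2n\not\models\psi$. The only difference is bookkeeping: you use the threshold $\min(i,j)\geq 2k$ with $n=\mathtt{d}(\psi)$, whereas the paper uses $\min(x,x')\geq k$ with $\mathtt{d}(\psi)\leq 2n$. Both invariants satisfy all four clauses, including the in-cluster converse step $2n+2\to 2n+1$ that you single out, which is matched by $2n+1$ itself.
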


\begin{proof}
Suppose for contradiction that we can find a formula $\psi(r)$ such that the stated equivalence holds for all pointed models $(M,w)$ from $\mathbb{F}^{\mathtt{fin}}_{\mathbf{S4}}$. Fix some $n \in \omega$ such that $\mathtt{d}(\psi(r)) \leqslant 2n$. 
Define a model $M = (W,R,V)$ as follows (see Figure \ref{fig:model2n+2}): $W := \{0, 1, \ldots, 2n, 2n+1, 2n+2\}$, $xRy$ iff $x \geqslant y$ or $x \in \setof{2n+1, 2n+2}$; $x \in V(r)$ iff $x$ is even. 
The set $\setof{2n+1,2n+2}$ forms a \emph{cluster} in terms of $R$. 
It is noted that 
$M,2n+2 \models \psi(r)$ but $M,2n \not\models \psi(r)$. 
Let us define $Z \subseteq W \times \omega \times W$ by 
\begin{center}
$(x, n, x') \in Z$ iff $x=x'$ or ($x \equiv x'\pmod 2$ and $n \leqslant \min (x,x')$). 
\end{center}
By definition, we obtain $(2n, 2n, 2n+2) \in Z$. 

\begin{clma1}
\label{clm:S4ce}
$Z$ is a layered (temporal) $\setof{r}$-bisimulation between $M$ and $M$. 
\end{clma1}

\begin{pfclma}
It is immediate to see that $(\mathtt{Atom})$ holds by definition. In what follows, we check the remaining conditions. Suppose that $(x,k+1,x') \in Z$. 
If $x=x'$, then all four required conditions hold immediately. For example, 
$(\mathtt{Forth})$ holds as follows: Assume $xRy$. Then $(y,k,y) \in Z$ holds by definition. So, we can find $y' \in W$ such that $x$ = $x'Ry'$ and $(y,k,y') \in Z$, as desired. 

Let us suppose that $x \neq x'$. Since $(x,k+1,x') \in Z$, it holds that ($x \equiv x'\pmod 2$ and $k + 1 \leqslant \min (x,x')$). 
Now we assume without loss of generality that $x < x'$. 
Since our definition of $Z$ is symmetric with respect to $x$ and $x'$, it suffices to prove the four remaining conditions under this assumption $x<x'$. 
First, we establish that $x \notin \setof{2n+1,2n+2}$. 
Suppose otherwise, i.e., 
$x \in \setof{2n+1,2n+2}$. 
If $x$ = $2n+1$, then $x'$ = $2n+2$, which is a contradiction with $x \equiv x'\pmod 2$. 
If $x$ = $2n+2$, then it is impossible to have $x<x'$. 
So, we have shown that $x \notin \setof{2n+1,2n+2}$ hence $x \leqslant 2n$. 

It is noted that arguments below for $(\mathtt{Forth})$ and $(\mathtt{Back})$ are the same as those in the proof of~\cite[Theorem 16.4]{Visser1996preprint}. 

\begin{itemize}
    \item \fbox{$\mathtt{Forth}$}
    Suppose that $xRy$, i.e., $y \leqslant x$ by $x \notin \setof{2n+1,2n+2}$. 
    We show that there exists $y \in W$ such that $x'Ry'$ and $(y,k,y') \in Z$. We define $y' := y$. Since $y \leqslant x < x'$, we obtain $x'Ry$. We get $(y,k,y) \in Z$ immediately by definition. 
    \item \fbox{$\mathtt{Back}$}
    Suppose that $x'Ry'$. We show that there exists $y \in W$ such that $xRy$ and $(y,k,y') \in Z$. 
    We divide our argument into the following two cases: i) $y' \leqslant x$ and ii) $y' > x$. 
    \begin{itemize}
    \item For case i) of $y' \leqslant x$, we proceed as follows: 
    Define $y := y'$. 
    It follows from $y'\leqslant x$ that $xRy'$ by definition. Moreover, $(y',k,y') \in Z$ holds immediately. 
    \item For case ii) of $y' > x$, we define our candidate $y$ as follows: 
    \[
    y:= 
    \begin{cases}
    x    &\text{if $y' \equiv x\pmod 2$};\\
    x-1   &\text{if $y' \not\equiv x\pmod 2$}.
    \end{cases}
    \]
    It is noted that $k+1 
\leqslant \min (x,x') = x$ implies $x \geqslant 1$ and so $x-1$ is well-defined. 
    By definition, we obtain $xRy$. 
    For both cases of $y$, it suffices to show that $(y,k,y') \in Z$. We note from $y' > x$ that $y' > y$ hence $y' \neq y$. 
    It is easy to see that $y \equiv y' \pmod 2$. So, it remains to establish $k \leqslant \min (y,y')$. This is verified as follows: 
    Since $k+1 \leqslant \min (x,x') = x$ (by $x <x'$), we obtain $k \leqslant x-1 \leqslant y = \min (y,y')$ (by $y' > y$), as desired. 
    \end{itemize} 
    \item \fbox{$\mathtt{Forth}^{-1}$}
    Suppose that $yRx$. 
    Whether $y \in \setof{2n+1, 2n+2}$ holds or not, we obtain $x \leqslant y$ (by $x \notin  \setof{2n+1, 2n+2}$). 
    We show that there exists $y' \in W$ such that $y'Rx'$ and $(y,k,y') \in Z$.  
    We divide our argument into the following two cases: i) $x' \leqslant y$ and ii) $x' > y$. 
    \begin{itemize}
        \item i) Assume $x' \leqslant y$. 
        Then $yRx'$. Together with $(y,k,y)\in Z$, we have obtained our goal. 
        \item ii) Assume $x' > y$. 
        Let us define $\mathtt{suc}: W \to W$ by $\mathtt{suc}(z)$ 
        $:=$ $z+1$ (if $z \leqslant 2n+1$); $\mathtt{suc}(z)$ 
        $:=$ $2n+1$ (if $z = 2n+2$).  
        We also define $y'$ as follows: 
        \[
        y' :=
        \begin{cases}
        x'   & \text{if $y \equiv x' \pmod 2$}; \\
        \mathtt{suc}(x')    & \text{if $y \not\equiv x' \pmod 2$}.
        \end{cases}
        \]
        For both choices of $y'$, we obtain $y'Rx'$.  
        By definition, it is clear that $y \equiv y' \pmod 2$. 
        \begin{itemize}
            \item Let $y \equiv x' \pmod 2$. 
            To show that $(y,k,y') \in Z$, it suffices to establish $k \leqslant \min (y,y')$ = $y$ as follows: $k \leqslant k+1 \leqslant \min (x,x')$ = $x \leqslant y$, as desired.  
            \item Let $y \not\equiv x' \pmod 2$. 
            If $y \in \setof{2n+1,2n+2}$, 
            then $(x',y)$ = $(2n+2,2n+1)$ by $x'>y$. It follows that $y'$ = $\mathtt{suc}(x')$ = $2n+1$ = $y$ hence $(y,k,y') \in Z$. 
            In what follows, we let 
            $y \notin \setof{2n+1,2n+2}$, i.e., $y \leqslant 2n$.   
            By $x' > y$, we get $y' \neq y$ by definition of $y'$ (this also  holds when $x'$ = $2n+2$ and so $y'$ = $2n+1$). 
            To show that $(y,k,y') \in Z$, it suffices to establish $k \leqslant \min (y,y')$ = $y$ as follows: 
            $k \leqslant k+1 \leqslant \min (x,x')$ $=$ $x \leqslant y$, as desired.  
        \end{itemize}
    \end{itemize}
    
    \item \fbox{$\mathtt{Back}^{-1}$}
    Suppose that $y'Rx'$. We show that there exists $y \in W$ such that $yRx$ and $(y,k,y') \in Z$. 
    Assume $y' \in \setof{2n+1,2n+2}$. 
    Then we obtain $y' Rx$ and $(y',k,y') \in Z$. So it suffices to define $y := y'$. 
    Assume otherwise, i.e., $y' \notin \setof{2n+1,2n+2}$. Then it follows from $y'Rx'$ that $x' \leqslant y'$ by definition of $R$.    
    By $x<x'$, we obtain $x < y'$ hence $y'Rx$. 
    Since $(y',k,y') \in Z$ holds immediately, it suffices to define $y := y'$. 
    \end{itemize}
This finishes establishing that 
$Z$ is a layered $\setof{r}$-bisimulation between $M$ and $M$.  
\end{pfclma}

By Claim \ref{clm:S4ce} and Proposition \ref{prop:nbismnmoeq}, it follows from $M,2n+2 \models \varphi(r)$ that 
$M,2n \models \varphi(r)$, 
which is a contradiction with $M,2n \not\models \varphi(r)$.
\end{proof}

\begin{theorem}
\label{thm:failuareuipS4t}
The normal tense logic $\mathbf{S4t}$ does not have the uniform interpolation property. 
\end{theorem}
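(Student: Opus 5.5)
The proof is a direct combination of Lemma~\ref{lem:bqaltS4t} and Lemma~\ref{lem:undefalt}, argued by contradiction. Suppose that $\mathbf{S4t}$ has the uniform interpolation property. Take the formula $\varphi(p,q,r)$ of Lemma~\ref{lem:bqaltS4t}, where $p,q,r$ are pairwise distinct propositional variables. Apply uniform interpolation twice and set $\psi(r) := \Exi{p}\Exi{q}\varphi(p,q,r)$.

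By the (\texttt{variable}) condition applied twice, $\mathsf{P}(\psi(r)) \subseteq \mathsf{P}(\varphi) \setminus \setof{p,q} = \setof{r}$. Hence $\psi(r) \in \mathsf{Form}(\setof{r})$. (The order of the quantifiers has to match Lemma~\ref{lem:bqaltS4t}: first eliminate $q$ from $\varphi$, then eliminate $p$.)

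Now, by Lemma~\ref{lem:bqaltS4t}, the following holds for every pointed model $(M,w)$ from $\mathbb{F}^{\mathtt{fin}}_{\mathbf{S4}}$: $M,w \models \psi(r)$ iff there is an infinite $R$-sequence starting at $w$ along which $r$ alternates between true (even indices) and false (odd indices). This is exactly the formula that Lemma~\ref{lem:undefalt} says cannot exist in $\mathsf{Form}(\setof{r})$. This contradiction shows that $\mathbf{S4t}$ lacks the uniform interpolation property.

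The only point needing care is that Lemma~\ref{lem:bqaltS4t} relies on Corollary~\ref{cor:hmuip2bisimquan}. That corollary needs two things: that $\mathbf{S4t} = \Lambda_{\mathbb{F}^{\mathtt{fin}}_{\mathbf{S4}}}$ is compact, which is Proposition~\ref{prop:S4t}, and that the class of finite frames has the Hennessy--Milner property, which the paper already notes. Both are covered by earlier results. So no real obstacle remains; the main work was done in the two preceding lemmas.
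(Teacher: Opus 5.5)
Your argument is exactly the paper's: assume uniform interpolation, form $\Exi{p}\Exi{q}\varphi(p,q,r)$, use (\texttt{variable}) to put it in $\mathsf{Form}(\setof{r})$, and let Lemma~\ref{lem:bqaltS4t} contradict Lemma~\ref{lem:undefalt}. The problem is the very point you single out as needing care, and the paper has the same gap. Under Definition~\ref{dfn:compact}, $\Lambda_{\mathbb{F}^{\mathtt{fin}}_{\mathbf{S4}}}$ is \emph{not} compact. Take distinct variables $p_{0},p_{1},\ldots$ and $\Gamma := \inset{\dia(p_{i} \land \bigwedge_{j<i}\neg p_{j})}{i \in \omega}$. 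Every finite subset is satisfied at the root of a finite reflexive transitive frame whose root sees finitely many reflexive leaves, with $p_{i}$ true exactly at the $i$-th leaf. But a point satisfying all of $\Gamma$ needs infinitely many pairwise distinct successors: a witness for $i$ satisfies $p_{i}$, while a witness for any $i'>i$ satisfies $\neg p_{i}$. So Proposition~\ref{prop:uip2bep}, and with it Corollary~\ref{cor:hmuip2bisimquan}, cannot be applied to $\mathbb{F}^{\mathtt{fin}}_{\mathbf{S4}}$. Over all reflexive transitive frames compactness does hold, but there the Hennessy--Milner property fails. The right-to-left half of Lemma~\ref{lem:bqaltS4t} survives, since it uses only (\texttt{derivability}), (\texttt{variable}) and bisimulation invariance. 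The left-to-right half, however, is exactly what gives $M,2n \not\models \Exi{p}\Exi{q}\varphi$ in the model of Lemma~\ref{lem:undefalt}, and it is left unjustified.

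A direct repair avoids bisimulation quantifiers. Let $(M,w)$ be a finite reflexive transitive model, $M=(W,R,V)$, with no alternating $R$-sequence from $w$. Choose variables $s_{x}$ ($x \in W$) outside $\setof{p,q,r}$, let $\ell_{x}$ be $r$ if $M,x \models r$ and $\neg r$ otherwise, and put $\psi := s_{w} \land \bigwedge_{x \in W}\Box\bigl(s_{x} \to \ell_{x} \land \Box\bigvee_{xRy} s_{y}\bigr)$. In any reflexive transitive model, $\varphi(p,q,r)$ at $v$ yields an infinite path among the successors of $v$ along which $r$ alternates, starting with $r$ at $v$. If $\psi$ also holds at $v$, following the $s$-labels along this path produces an alternating $R$-sequence from $w$ in $M$. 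Hence $\varphi \to \neg\psi \in \mathbf{S4t}$. Since $\neg\psi$ is $\setof{p,q}$-free, (\texttt{uniformity}) applied twice gives $\Exi{p}\Exi{q}\varphi \to \neg\psi \in \mathbf{S4t}$. The model obtained from $M$ by setting $V(s_{x}) := \setof{x}$ satisfies $\psi$ at $w$ and agrees with $M$ on $r$, so $M,w \not\models \Exi{p}\Exi{q}\varphi$. With this in place of the left-to-right half of Lemma~\ref{lem:bqaltS4t}, the rest of your argument goes through.
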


\begin{proof}
Suppose for contradiction that 
$\mathbf{S4t}$ has the uniform interpolation property. 
For the formula $\varphi(p,q,r)$ in the statement of Lemma \ref{lem:bqaltS4t} and $p,q \in \mathsf{Prop}$, we can construct the formula $\Exi{p}\Exi{q} \varphi(p,q,r)$ in terms of uniform interpolants, whose semantic clause at $(M,w)$ from $\mathbb{F}^{\mathtt{fin}}_{\mathbf{S4}}$ means that 
there exists an infinite $R$-path along which the propositional variable $r$ alternates between being true and false at every step. 
However, by noting that $\mathsf{P}(\Exi{p}\Exi{q} \varphi(p,q,r)) \subseteq \setof{r}$, this is impossible by Lemma \ref{lem:undefalt}. 
\end{proof}

By Corollary \ref{cor:boxdot} and Theorem \ref{thm:failuareuipS4t}, we can also derive the following. 

\begin{corollary}
\label{cor:failureK4t}
For every $\Delta \subseteq \setof{\mathtt{D}_{\Box}, \mathtt{D}_{\bbox}}$,  $\mathbf{Kt}\oplus \mathtt{4}_{\Box} \oplus \Delta$ lacks the uniform interpolation property. 
\end{corollary}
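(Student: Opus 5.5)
The plan is a proof by contraposition that combines Corollary \ref{cor:boxdot} with Theorem \ref{thm:failuareuipS4t}. Fix $\Delta \subseteq \setof{\mathtt{D}_{\Box}, \mathtt{D}_{\bbox}}$ and suppose, towards a contradiction, that $\mathbf{Kt}\oplus \mathtt{4}_{\Box} \oplus \Delta$ has the uniform interpolation property.

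The second item of Corollary \ref{cor:boxdot} then yields that $\mathbf{Kt}\oplus\setof{\mathtt{4}_{\Box},\mathtt{T}_{\Box}}$ has the uniform interpolation property. By definition this logic is exactly $\mathbf{S4t}$, the smallest normal tense logic containing $\mathtt{T}_{\Box}$ and $\mathtt{4}_{\Box}$. That contradicts Theorem \ref{thm:failuareuipS4t}, and so finishes the argument.

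The only step needing care is that the second item of Corollary \ref{cor:boxdot} really applies to $\Lambda := \mathbf{Kt}\oplus \mathtt{4}_{\Box}\oplus\Delta$. Behind that corollary is Proposition \ref{prop:refuip}, which requires the hypothesis $\Lambda \subseteq \boxdot^{-1}(\Lambda)$. So I would check that $\chi^{\boxdot}\in\Lambda$ for each axiom $\chi$ of $\Lambda$, and that $\boxdot^{-1}(\Lambda)$ is closed under the rules, which is item \ref{item:invntl} of Proposition \ref{prop:invboxdot}. The individual axioms go as follows.
\begin{itemize}
\item For $\mathtt{4}_{\Box}$, $(\Box p\to\Box\Box p)^{\boxdot}$ is the reflexive-transitive-closure version of transitivity. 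It follows from $\mathtt{4}_{\Box}$ by normal modal reasoning.
\item For $\mathtt{D}_{\Box}$, $(\dia\top)^{\boxdot}$ is equivalent to $\top\lor\dia\top$, which is a tautology. The same holds for $\mathtt{D}_{\bbox}$.
\item For $\mathtt{4}_{\bbox}$, which belongs to $\Lambda$ by item \ref{item:mirrorref} of Proposition \ref{prop:mirror}, the case is mirror-symmetric to $\mathtt{4}_{\Box}$.
\end{itemize}
Since the paper states the second item of Corollary \ref{cor:boxdot} for precisely these $\Delta$, I will treat it as given. The main point is then just matching $\mathbf{Kt}\oplus\setof{\mathtt{4}_{\Box},\mathtt{T}_{\Box}}$ with $\mathbf{S4t}$.
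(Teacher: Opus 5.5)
Your proof is correct and is exactly the paper's argument: the corollary follows by combining item~2 of Corollary~\ref{cor:boxdot} with Theorem~\ref{thm:failuareuipS4t}, using that $\mathbf{Kt}\oplus\setof{\mathtt{4}_{\Box},\mathtt{T}_{\Box}}$ is by definition $\mathbf{S4t}$. Your check of $\Lambda\subseteq\boxdot^{-1}(\Lambda)$ fills in a step the paper leaves implicit. The $\mathtt{4}_{\bbox}$ case is redundant, because $\boxdot^{-1}(\Lambda)$ is already a normal tense logic containing $\mathtt{4}_{\Box}$ by item~\ref{item:invntl} of Proposition~\ref{prop:invboxdot}, and that same item covers the $\mathbf{Kt}$ axioms.
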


\section{Further Directions}
\label{sec:concl}

We have established that the basic tense logic $\mathbf{Kt}$ has the uniform interpolation property. It is straightforward to extend our argument to the multimodal (or indexed) version of $\mathbf{Kt}$ with modalities $[a]$ and $[a^{\smallsmile}]$ ($a \in \mathsf{Mod}$), as the addition of indices does not alter the outline of the argument. This extension also allows us to establish the uniform interpolation property for star-free propositional dynamic logic with converse \cite{Harel2000}, as there exists a translation from this logic into the indexed version of $\mathbf{Kt}$ that preserves the set of propositional variables occurring in a formula.\footnote{The author would like to thank Yde Venema for this observation.}

There are several directions for future research. First, an intriguing open question arising from this work concerns the extension of nabla ($\nabla$) normal forms to a tense-logical setting (see also Appendix \ref{sec:canfor}). In basic modal logic, $\nabla$-based normal forms (or disjunctive forms characterized by non-interacting conjunctive components) share deep structural links with both uniform interpolation and bisimulation quantifiers (cf.~\cite[Section 5]{Bezhanishvili2025}). A natural next step is to determine whether an analogous normal form can be formulated for $\mathbf{Kt}$ to prove its uniform interpolation property, which remains an open problem in~\cite[Section 5]{Bezhanishvili2025}.

Second, we may investigate whether a Pitts--B\'{i}lkov\'{a}-style proof-theoretic approach (cf.~\cite{Pitts1992,Bilkova2007}) can be applied to the uniform interpolation property of $\mathbf{Kt}$ (even for modal logic $\mathbf{B}$). 

Third, we could provide further examples of normal tense logics with the uniform interpolation property. For example, we have not considered the axiom $\mathtt{5}_{\Box}$ ($\dia p \to \Box \dia p$) and its mirror image $\mathtt{5}_{\bbox}$ ($\bdia p \to \bbox \bdia p$) in this paper. In the context of basic modal logics, it is well known, e.g., in~\cite{Zakharyaschev2001}, that every normal modal logic containing $\mathtt{5}_{\Box}$ is \emph{locally tabular} (i.e., the number of non-equivalent formulas generated from a finite set of propositional variables is finite, cf.~\cite[p.19]{CZ1997}). 
Since local tabularity and the Craig interpolation property jointly imply the uniform interpolation property, it would be interesting to investigate the uniform interpolation property of the normal tense extension containing $\setof{\mathtt{5}_{\Box}, \mathtt{5}_{\bbox}}$.

Fourth, we may extend our argument to establish the uniform Lyndon interpolation property of $\mathbf{Kt}$, which further requires considering the polarities of propositional variables. As demonstrated by Kurahashi~\cite{Kurahashi2020}, this can be accomplished by introducing a polarized variant of layered bisimulation. 

Finally, basic tense logic was pioneered by Arthur Prior~\cite{Prior1957,Prior1967,Prior1968,Prior2003}, who subsequently extended $\mathbf{Kt}$ into a \emph{hybrid} logic equipped with nominals, the global modality, and a universal quantifier over nominals (cf.~\cite{Blackburn2020,Blackburn2023}). In light of recent studies on propositionally quantified hybrid logic, exploring the philosophical implications of the uniform interpolation property for $\mathbf{Kt}$ becomes pertinent, given that it enables the definition of a bisimulation quantifier~\cite{DAgostino2006,D’Agostino2015}, which is a distinct form of propositional quantification. 

\section*{Acknowledgements}

The author would like to thank the three reviewers for their constructive comments and suggestions, which have helped improve the quality of this submission. He would also like to thank Hans van Ditmarsch and Yde Venema for their insightful discussions on the uniform interpolation property in epistemic logic with distributed knowledge in Toulouse in early April 2025 and basic tense logic in Amsterdam in May 2025, respectively. This work was partially supported by JSPS KAKENHI Grants-in-Aid for Scientific Research (B) (Grant Number JP22H00597) and (C) (Grant Number JP25K03537).

\appendix

\section{Proof of Proposition \ref{prop:foreqfindep}}
\label{sec:canfor}

This section provides an outline of a proof of Proposition \ref{prop:foreqfindep} in terms of \emph{canonical formulas} defined in~\cite{Moss2007} (see also~\cite{Fang2019}). 

\begin{definition}
Given a finite set $\mathsf{P}\subseteq \mathsf{Prop}$ and 
$\mathsf{Q}\subseteq \mathsf{P}$ of propositional variables, 
we define an abbreviation 
$\pi (\mathsf{Q}, \mathsf{P})$ := $\bigwedge \mathsf{Q} \land \bigwedge \neg (\mathsf{P} \setminus \mathsf{Q})$. 
Given a finite set $\Psi$ of formulas, we also introduce the cover (or nabla) modalities $\nabla \Psi$ and $\bnabla \Psi$ as follows: 
\[
\begin{array}{llllll}
    \nabla \Psi &:=&  \bigwedge \dia \Psi \land \Box \bigvee \Psi, &
    \bnabla \Psi &:=& 
    \bigwedge \bdia \Psi \land \bbox \bigvee \Psi.
\end{array}
\]
\end{definition}
\noindent We note that $\Box \varphi$ and $\dia \varphi$ can be written as $\nabla \varnothing \lor \nabla \setof{\varphi}$ and $\nabla \setof{\top, \varphi}$, respectively. 

\begin{definition}
For a finite subset $\mathsf{P}\subseteq \mathsf{Prop}$ and $n \in \omega$, we define a \emph{finite} set $\mathsf{C}(\mathsf{P};n)$ of \emph{canonical formulas} with respect to $(\mathsf{P};n)$ as follows:
\[
\begin{array}{lll}
    \mathsf{C}(\mathsf{P};0)  &:=& 
    \inset{\pi(\mathsf{Q},\mathsf{P})}{\mathsf{Q} \subseteq \mathsf{P}}; \\
    \mathsf{C}(\mathsf{P};n+1) &:=& \inset{\pi \land \nabla \Psi \land \bnabla \Psi'}{\pi \in \mathsf{C}(\mathsf{P};0) \text{ and } \Psi, \Psi' \subseteq \mathsf{C}(\mathsf{P};n) }.\\ 
\end{array}
\]
\end{definition}
\noindent It is noted that $\mathsf{C}(\mathsf{P};n)$ is finite for each $n$ and so $\mathsf{C}(\mathsf{P};n)$ is well-defined. 

\begin{proposition}
\label{prop:canonicalformula}
Let $\mathsf{P}\subseteq \mathsf{Prop}$ be a finite subset and $n \in \omega$. 
\begin{enumerate}
    \item Given a pointed model $(M,w)$, there exists a unique $\chi \in \mathsf{C}(\mathsf{P};n)$ such that $M,w\models \chi$. 
    \item Let $\chi \in \mathsf{C}(\mathsf{P};n)$. For every $\varphi \in \mathsf{Form}(\mathsf{P};n)$, it holds that   $\mathbb{F}_{\mathtt{all}} \models \chi \to \varphi$ or $\mathbb{F}_{\mathtt{all}} \models \chi\to \neg \varphi$. 
    \item For all formulas $\varphi \in \mathsf{Form}(\mathsf{P};n)$, there exists ${\Phi} \subseteq \mathsf{C}(\mathsf{P};n)$ such that $\mathbb{F}_{\mathtt{all}} \models \varphi \leftrightarrow \bigvee \Phi$. 
\end{enumerate}
\end{proposition}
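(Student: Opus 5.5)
The three items will be proved together by induction on $n$. Item 1 is the core fact. Item 2 follows from item 1 and Proposition \ref{prop:nbismnmoeq}. Item 3 follows from items 1 and 2.

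\textbf{Item 1.} I induct on $n$. For $n=0$, let $\mathsf{Q} := \inset{p\in\mathsf{P}}{M,w\models p}$. Then $\pi(\mathsf{Q},\mathsf{P})$ is the unique member of $\mathsf{C}(\mathsf{P};0)$ true at $w$, since distinct $\mathsf{Q}$ disagree on some variable.

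For $n+1$, by the induction hypothesis each state $v$ satisfies a unique $\chi_v \in \mathsf{C}(\mathsf{P};n)$. Define
\[
\Psi := \inset{\chi_v}{wRv}, \qquad \Psi' := \inset{\chi_v}{vRw},
\]
both finite because $\mathsf{C}(\mathsf{P};n)$ is finite. Then $M,w \models \pi\land\nabla\Psi\land\bnabla\Psi'$, where $\pi$ is the level-$0$ component true at $w$.

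For uniqueness, suppose $M,w\models \pi_1\land\nabla\Psi_1\land\bnabla\Psi'_1$. The level-$0$ case gives $\pi_1=\pi$. Each $\psi \in \Psi_1$ holds at some successor $v$, so $\psi=\chi_v\in\Psi$ by uniqueness at level $n$. Conversely, each successor $v$ satisfies $\bigvee\Psi_1$, so $\chi_v\in\Psi_1$, again by uniqueness. Hence $\Psi_1=\Psi$, and likewise $\Psi'_1=\Psi'$ using predecessors. This gives uniqueness. One side fact is needed: each canonical formula has modal depth $\le n$ and uses only variables in $\mathsf{P}$, which is immediate by induction.

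\textbf{Item 2.} This is the main step. I will show that any two pointed models $(M_1,w_1)$ and $(M_2,w_2)$ satisfying the same $\chi\in\mathsf{C}(\mathsf{P};n)$ are $n$-bisimilar. Define
\[
Z := \inset{(u_1,k,u_2)}{u_1 \text{ and } u_2 \text{ satisfy the same member of } \mathsf{C}(\mathsf{P};k)}.
\]
\begin{itemize}
\item (\texttt{Atom}) holds because the level-$0$ components agree.
\item (\texttt{Forth}) holds by the previous item's analysis. If $u_1R_1v_1$, then $\chi_{v_1}$ at level $k$ lies in $\Psi$. Since $u_2 \models \nabla\Psi$, some $R_2$-successor $v_2$ of $u_2$ satisfies $\chi_{v_1}$.
\item (\texttt{Back}) uses $\Box\bigvee\Psi$ together with uniqueness at level $k$.
\item The converse clauses, ($\mathtt{Forth}^{-1}$) and ($\mathtt{Back}^{-1}$), use $\bnabla\Psi'$ in the same way.
\end{itemize}
So $Z$ is a layered $\mathsf{P}$-bisimulation, and Proposition \ref{prop:nbismnmoeq} (item \ref{item:nbisim} implies item \ref{item:nmoeq}) yields agreement on all of $\mathsf{Form}(\mathsf{P};n)$.

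Now suppose $\not\models\chi\to\varphi$ and $\not\models\chi\to\neg\varphi$. Then there are models of $\chi\land\neg\varphi$ and of $\chi\land\varphi$. These two models satisfy the same $\chi$ but disagree on $\varphi$, a contradiction.

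\textbf{Item 3.} Put $\Phi := \inset{\chi\in\mathsf{C}(\mathsf{P};n)}{\mathbb{F}_{\mathtt{all}}\models\chi\to\varphi}$. Then $\bigvee\Phi\to\varphi$ holds trivially. For the converse, if $M,w\models\varphi$, item 1 supplies a unique $\chi$ true at $w$. Item 2 excludes $\models\chi\to\neg\varphi$, so $\models \chi \to \varphi$, hence $\chi\in\Phi$ and $M,w\models\bigvee\Phi$.

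Proposition \ref{prop:foreqfindep} then follows, since $\mathsf{Form}(\mathsf{P};n)/\equiv$ embeds into $\wp(\mathsf{C}(\mathsf{P};n))$.
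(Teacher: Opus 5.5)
Your proof is correct. Items 1 and 3 follow the paper's outline: Moss-style uniqueness by induction on $n$, and $\Phi$ taken to be the canonical formulas entailing $\varphi$. Item 2, however, takes a different route.

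The paper defers item 2 to Moss's Lemma 2.7. That is a direct induction on $\varphi\in\mathsf{Form}(\mathsf{P};n)$ showing that $\chi$ decides $\varphi$. The substantive cases are $\Box\psi$ and $\bbox\psi$. There, by the induction hypothesis, each member of $\Psi$ (resp.\ $\Psi'$) decides $\psi$. So $\chi$ entails $\Box\psi$ via $\Box\bigvee\Psi$ when all of them entail $\psi$, and entails $\dia\neg\psi$ via $\bigwedge\dia\Psi$ otherwise; the $\bbox$ case is symmetric with $\Psi'$.

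You instead show that sharing a canonical formula of level $k$ is a layered $\mathsf{P}$-bisimulation. You then apply the implication from item~\ref{item:nbisim} to item~\ref{item:nmoeq} of Proposition~\ref{prop:nbismnmoeq}.

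The two routes buy different things. The syntactic induction is self-contained and yields derivability in $\mathsf{H}(\mathbf{Kt})$ directly. Your argument reuses the paper's central notion and gives an extra fact: pointed models sharing a member of $\mathsf{C}(\mathsf{P};n)$ are $n$-bisimilar. The converse also holds, by your side fact that canonical formulas lie in $\mathsf{Form}(\mathsf{P};n)$. Consequently, the canonical formula true at $(M,w)$ is equivalent to $\chi^{\mathsf{P};n}_{(M,w)}$.

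You should state explicitly why your route is not circular. Proposition~\ref{prop:nbismnmoeq} as stated involves $\chi^{\mathsf{P};n}$, whose definition presupposes Proposition~\ref{prop:foreqfindep}. The paper derives that proposition from the very statement you are proving. What saves you is that the one implication you use is proved by a standalone induction on formulas, independent of the finiteness result.
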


\begin{proof}
(Outline) Items 1 and 2 can be established similarly to~\cite[Lemma 2.6]{Moss2007} and~\cite[Lemma 2.7]{Moss2007}, respectively. Item 3 follows from item 2 (see also~\cite[Proposition 6]{Fang2019}). 
\end{proof}

\noindent Proposition \ref{prop:canonicalformula} (3) implies Proposition \ref{prop:foreqfindep}, i.e., the quotient set $\mathsf{Form}(\mathsf{P};n)/\equiv$ of $\mathsf{Form}(\mathsf{P};n)$ in terms of logical equivalence $\equiv$  is finite. 

\bibliographystyle{eptcs}

\end{document}